\newcommand{\gA}{\mathcal A}
\newcommand{\gB}{\mathcal B}
\newcommand{\gG}{\mathcal G}
\newcommand{\gH}{\mathcal H}
\newcommand{\cgG}{{\gclr\cG}}
\newcommand{\cgH}{{\gclr\cH}}
\newcommand{\rbG}{\rbtrans{\gG}}
\newcommand{\crbG}{{\gclr\rbtrans{\gG}}}
\newcommand{\crb}[1]{{\gclr\rbtrans{#1}}}
\newcommand{\ccR}{{\xclr R}}
\newcommand{\cR}[1][]{{\xclr R}_{#1}}
\def\BV{\mathsf{BV}}
\def\X{\mathsf{X}}
\def\Xl{\mathsf{X^{\ell}}}
\def\Sfour{\mathsf{S4}}
\def\MLL{\mathsf{MLL}}
\def\MLLu{\mathsf{MLL_u}}
\def\MLLj{\mathsf{MLL_u^j}}
\def\MELL{\mathsf{MELL}}
\def\MELLj{\mathsf{MELL^j}}
\def\linearized#1{#1^{\ell}}
\def\cutr{\mathsf{cut}}
\def\mixr{\mathsf{mix}}
\def\Wrule{\mathsf{W}}
\def\Crule{\mathsf{C}}
\def\set#1{\{#1\}}
\def\tuple#1{\langle#1\rangle}
\newcommand{\RB}{{RB}}
\def\K{\mathsf K}
\newcommand{\MLLplus}[1]{{\MLL\mbox{-}{#1}}}
\newcommand{\MLLK}{\MLLplus\K}
\def\context#1{\Gamma\{#1\}}
\def\axrule{\mathsf {ax}}
\def\crule{\mathsf c}
\def\wrule{\mathsf w}
\def\krule{\mathsf k}
\def\RGB{RGB}
\def\rbtrans#1{{\partial({#1})}}
\def\atoms{\mathcal A}
\def\cons#1{\{#1\}}
\def\conhole      {\cons{\enspace}}%
\newcommand{\contextbox}[2]{
	\tikz[overlay,remember picture]
	\draw[thick,rounded corners=2, densely dotted](#1\vertexcode.south west) 	rectangle (#2\vertexcode.north east);
}
\newcommand{\contextnodes}[1]{
	\foreach \aaa in {#1} {\contextbox{\aaa}{\aaa}}
}
\newtheorem{theorem}{Theorem}
\newtheorem{lemma}[theorem]{Lemma}
\newtheorem{proposition}[theorem]{Proposition}
\theoremstyle{definition}
\newtheorem{definition}[theorem]{Definition}
\newtheorem{definitionn}[theorem]{Definition}
\newtheorem{remark}[theorem]{Remark}
\newcommand{\form}[1]{\mathsf{fm}(#1)}
\newcommand{\cform}[1]{{\gclr\mathsf{fm}(#1)}}
\newbox\auxbox
\def\auxsym{\rlap{\copy\auxbox}\raise.3ex\hbox{\copy\auxbox}}
\newcommand{\auxedge}[1][]{\mkern1mu\mathord{\stackrel{#1}{{\rclr\auxsym}}}\mkern1mu}
\newcommand{\nedge}[1][]{\mkern1mu\mathord{\stackrel{#1}{{\smile}}}\mkern1mu}
\newcommand{\uedge}[1][]{\mkern1mu\mathord{\stackrel{#1}{{\rclr\frown}}}\mkern1mu}
\newcommand{\dedge}[1][]{\mkern1mu\mathord{\stackrel{#1}{{\mclr\rightsquigarrow}}}\mkern1mu}
\newcommand{\ndedge}[1][]{\mkern1mu\mathord{\stackrel{#1}{\not\rightsquigarrow}}\mkern1mu}
\newcommand{\bedge}[1][]{\mkern1mu\mathord{\stackrel{#1}{{\mclr\leftsquigarrow}}}\mkern1mu}
\newcommand{\nuedge}[1][]{\mkern1mu\mathord{\stackrel{#1}{\not\frown}}\mkern1mu}
\newcommand{\quand}{\quad\mbox{and}\quad}
\def\gclr{\color{linkcolor}}
\def\mclr{\color{modcolor}}
\def\rclr{\color{cographcolor}}
\def\jclr{\color{linkcolor}} %% COLOR FOR JUMP
\def\fclr{\color{skewcolor}}
\def\xclr{\color{orange}}
\def\widecneg#1{\overline{#1}}
\def\derrule{\mathsf{der}_\wn}
\def\digrule{\mathsf{dig}_\wn }
\def\jdigrule{\mathsf{dig}_\ljump}
\def\dderrule{\deep\derrule}
\def\ddigrule{\deep\digrule}
\def\djdigrule{{\jclr \deep{\mathsf{dig}}_\ljump}}
\def\ewrule{\wrule_\wn}
\def\ecrule{\crule_\wn}
\def\prule{\mathsf \oc p}
\def\krule{\mathsf w\oc p}
\def\onerule{\mathsf 1}
\def\botrule{\mathsf \bot}
\def\dbotrule{{\jclr \deep \lbot}}
\def\dewrule{{\jclr \deep \ewrule}}
\def\decrule{\deep\crule_\wn}
\def\djcrule{{\jclr\deep\crule_\ljump}}
\def\jaxrule{\axrule_{{\jclr j}}}
\def\jonerule{\onerule_{{\jclr j}}}
\def\jbotrule{{\jclr \lbot^\mathsf j}}
\def\jewrule{{\jclr \mathsf{w^j}}}
\def\djbotrule{\deep\jbotrule}
\def\ljump{{\jclr \circ}}
\def\N{\mathbb N}
\newcommand{\relwebof}[1]{{\llbracket#1\rrbracket}}
\newcommand{\singlevertex}[1][]{\scriptstyle{#1}}
\def\lseq{\mathbin\origvartriangleleft}
\newcommand{\ssbot}{{\scriptstyle \bot}}
\newcommand{\ssone}{{\scriptstyle \one}}
\newcommand{\ssjump}{{\scriptstyle  \ljump}}
\newcommand{\sswn}{{\scriptstyle \wn}}
\newcommand{\ssoc}{{\scriptstyle \oc}}
\def\cL{\mathcal L}
\def\cA{\mathcal A}
\newcommand{\lab}[1]{l(#1)}
\newcommand{\mvertices}[1][]{V^{\ssoc\sswn}_{#1}}
\newcommand{\vertices}[1][]{V_{#1}}
\newcommand{\avertices}[1][]{V^\bullet_{#1}}
\newcommand{\bvertices}[1][]{V^\ssoc_{#1}}
\newcommand{\dvertices}[1][]{V^\sswn_{#1}}
\newcommand{\onevertices}[1][]{V^\ssone_{#1}}
\newcommand{\jumpvertices}[1][]{V^\ssjump_{#1}}
\newcommand{\nvertices}[1][]{V^\ast_{#1}}
\newcommand{\sizeof}[1]{\left|#1\right|}
\newcommand{\provesym}{\tikz[baseline=-.65ex]{\draw[very thick] (0,0)--(.5,0);\draw[thin] (0,-.1)--(0,.1);}}
\newcommand{\provevia}[1]{\mathrel{\stackrel{#1}{\provesym}}}
\newcommand{\ccf}{\fclr f}
\newcommand{\cf}{{\fclr f}}
\newcommand{\cg}{{\fclr g}}
\def\deep#1{#1^\downarrow}
\newcommand{\eqrule}{\equiv}
\newcommand{\deqrule}{\deep\equiv}
\def\one{1}
\def\Xj{\X^j}
\title{Exponentially Handsome Proof Nets and Their Normalization}
\author{Matteo Acclavio 
\institute{Department of Computer Science \\
University of Luxembourg}
\email{matteoacclavio.com}
}
\def\cgG{\gclr \gG}
\def\cgH{\gclr \gH}
\begin{document}

\maketitle

\begin{abstract}
Handsome proof nets were  introduced by  Retor\'e as a syntax for multiplicative linear logic.
These proof nets are defined by means of cographs (graphs representing formulas) equipped with a vertices partition satisfying simple topological conditions.
In this paper we extend this syntax to multiplicative linear logic with units and exponentials.
For this purpose we develop a new sound and complete sequent system for the logic,
enforcing a stronger notion of proof equivalence with respect to the one usually considered in the literature.
We then define combinatorial proofs, a graphical proof system able to capture syntactically the proof equivalence, for the cut-free fragment of the calculus.
We conclude the paper by defining the exponentially handsome proof nets as combinatorial proofs with cuts and defining an internal normalization procedure for this syntax.
\end{abstract}

%%%%%%%%%%%%%%%%%%%%%%%%%%%%%%%%%%%%%%%%%%%%%%%%%%%%%%%%%%%%%%%%%%%%%%%%%%%%%%%%%%%%%%%%
%%%%%%%%%%%%%%%%%%%%%%%%%%%%%%%%%%%%%%%%%%%%%%%%%%%%%%%%%%%%%%%%%%%%%%%%%%%%%%%%%%%%%%%%
%%%%%%%%%%%%%%%%%%%%%%%%%%%%%%%%%%%%%%%%%%%%%%%%%%%%%%%%%%%%%%%%%%%%%%%%%%%%%%%%%%%%%%%%
%%%%%%%%%%%%%%%%%%%%%%%%%%%%%%%%%%%%%%%%%%%%%%%%%%%%%%%%%%%%%%%%%%%%%%%%%%%%%%%%%%%%%%%%
%%%%%%%%%%%%%%%%%%%%%%%%%%%%%%%%%%%%%%%%%%%%%%%%%%%%%%%%%%%%%%%%%%%%%%%%%%%%%%%%%%%%%%%%
%%%%%%%%%%%%%%%%%%%%%%%%%%%%%%%%%%%%%%%%%%%%%%%%%%%%%%%%%%%%%%%%%%%%%%%%%%%%%%%%%%%%%%%%

\section{Introduction}

One of the novelties introduced by linear logic \cite{gir:ll} was the syntax of proof nets.
Proof nets are a graphical syntax \cite{lafont:89, lafont:95} for proofs
able to capture the \emph{proof equivalence} in the multiplicative fragment of linear logic (denoted $\MLL$):
proof nets are canonical representative of equivalent proofs modulo independent rules permutations.
In addition, proof nets are a sound and complete proof system in the sense of \cite{cook:reckhow:79} for $\MLL$, since it is possible to check if a graph represents a correct derivation in polynomial time with respect to the size of the graph. This test can be conducted by means of a topological criterion, often refereed to as \emph{correctness criterion} \cite{gir:ll,danos:regnier:89, guerrini:99, retore:03}.

Several  extensions of proof nets have been proposed to cover multiplicative linear logic with units ($\MLLu$),
but none of them can be considered to be fully satisfactory. 
In presence of the units, the correctness criterion requires to add additional edges, called \emph{jumps}, to a proof net in order to connect the gates of the unit $\lbot$ to an axiom or an unit $\lone$~\cite{hughes:simple-mult}.

The quest for a satisfactory syntax for $\MLLu$-proofs has come to an end after the publication of  \cite{heijltjes:houston:14} where is shown that it is not possible to have at the same time a syntax capturing the whole $\MLLu$ proof equivalence and a polynomial correctness criterion, unless $\mathsf{P}= \mathsf{NP}$. 
This result depends on the presence of the jumps:
on one hand they are needed in order to check in polynomial time if the proof net is correct, 
but on the other hand they enforce a coarse notion of proof equivalence which requires to ``rewire'' the jumps to capture the full proof equivalence.

A similar problem occurs in the multiplicative exponential linear logic ($\MELL$) due to the presence the weakening rule\footnote{Indeed, the decidability of $\MELL$ is still an open question and depends on the presence of the weakening rule~\cite{kopylov:decidability,str:decision}.} \cite{tortora:additives1,tortora:additives2}. 
Moreover, the presence of the promotion rule in this fragment poses an additional difficulty since
this rule is context-sensitive. 
However this latter problem is easily addressed by including in the proof net syntax the so called \emph{boxes} whose scope is to delimit portions of the graph~\cite{phd:regnier,phd:laurent,mazza:05,acc:proofd,acc:phd},
as shown in \Cref{fig:PNexample}.

%%%%%%%%%%%%%%FIG PROOF NETS
\begin{figure}[h!]
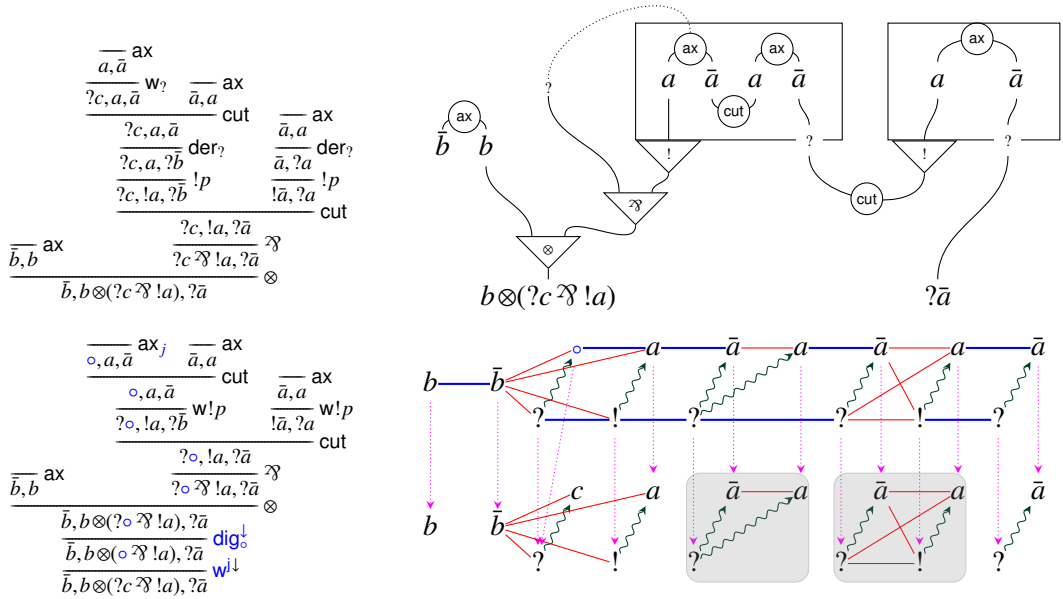

\centering
\begin{tabular}{cc}
	{\scriptsize
		$
	\vlderivation{
	\vliin{}{\ltens}{\cneg b, b\ltens(\wn c \lpar \oc a), \wn\cneg a}
	{\vlin{}{\axrule}{\cneg b, b}{\vlhy{}}}
	{
	\vlin{}{\lpar}{\wn c \lpar \oc a,\wn \cneg a}
	{\vliin{}{\cutr}{\wn c , \oc a, \wn \cneg a}
	{\vlin{}{\prule}{\wn c, \oc a,  \wn \cneg b}
		{\vlin{}{\derrule}{\wn c, a,\wn \cneg b}{{{\vliin{}{\cutr}{\wn c, a, \cneg a }{
				\vlin{}{\ewrule}{\wn c, a, \cneg a}{\vlin{}{\axrule}{a,\cneg a}{\vlhy{}}}
			}{\vlin{}{\axrule}{\cneg a, a}{\vlhy{}}}}}}}}
	{\vlin{}{{\prule}}{\oc \cneg a, \wn a}{
			{\vlin{}{\derrule}{\cneg a, \wn a}{\vlin{}{\axrule}{\cneg a, a}{\vlhy{}}}}}}}
	}}
	$}
&
	{
	\newvertex{wnna}{\wn \cneg a}{}
	\newvertex{myver}{b\ltens (\wn c\lpar \oc a)}{}
	$
	\begin{array}{c@{\quad}c@{\!\!\!\!}c@{\!\!\!\!}c@{\!\!}c@{\quad}c@{\quad}c@{\quad}cccc@{\qquad}ccc}
		&&&\boxYin1&& &&&&\boxYin2
		\\[5pt]
		&&\psanode[\wn]3&&\va1&\vna1&\va2&\vna2 &&&\va3&\vna3
		\\[10pt]
		\vnb1&\vb1&&&&&&&\boxYang1&&&&\boxYang2
		\\[10pt]
		&&&\Gpar1
		\\
		&&\Gtens1
		\\
	 && \vmyver1 &&&&&&&& \vwnna1
	\end{array}
	\psaxioms{a1/na1/1,a2/na2,b1/nb1,a3/na3}
	\psBox{1}{-140}{-40}
	\psBox{2}{-130}{-60}
	\pscuts{na1/a2/10,box1aux-40/box2main.O/10}
	\pswires{a1/box1main.I,a3/box2main.I,na2/box1aux-40,na3/box2aux-60,Gpar1.O/Gtens1.R,b1/Gtens1.L}
	\pswires{box1main.O/Gpar1.R,node3/Gpar1.L}
	\pswires{Gtens1.O/myver1,box2aux-60/wnna1}
	\psjump{node3}{ax1}{}
	$}
	\\[1em]
	{\scriptsize
		$
		\vlderivation{
			\vlin{}{\deep\jewrule}{\cneg b, b\ltens(\wn c \lpar \oc a), \wn\cneg a}{
			\vlin{}{\djdigrule}{\cneg b, b\ltens(\ljump \lpar \oc a), \wn\cneg a}{
			\vliin{}{\ltens}{\cneg b, b\ltens(\wn \ljump \lpar \oc a), \wn\cneg a}
			{\vlin{}{\axrule}{\cneg b, b}{\vlhy{}}}
			{
				\vlin{}{\lpar}{\wn \ljump \lpar \oc a,\wn \cneg a}
				{\vliin{}{\cutr}{\wn\ljump , \oc a, \wn \cneg a}
					{\vlin{}{{\krule}}{\wn\ljump, \oc a,  \wn \cneg b}
						{{
								%			\vliq{}{\derrule}{\wn c, a, \wn (\cneg a \ltens \cneg b),\wn b }
								{
									\vliin{}{\cutr}{\ljump, a, \cneg a }{
										{\vlin{}{\jaxrule}{\ljump,a,\cneg a}{\vlhy{}}}
									}{\vlin{}{\axrule}{\cneg a, a}{\vlhy{}}}}}}}
					{\vlin{}{{\krule}}{\oc \cneg a, \wn a}{
							{\vlin{}{\axrule}{\cneg a, a}{\vlhy{}}}}}}
		}}}}
	$}
	&
		$
	\begin{array}{cccccccccccccccccccccccc}
		&& &&\vuj1 &&\va1 && \vna1 &&\va9 && \vna9 && \va8 && \vna8 \\
		\vb1 &&\vnb1 \\
		&&&\vwn4&&\voc1 && 
		\vwn1
		&&&&\vwn2 &&\voc2 &&\vwn3 \\
		\\
		&&  &&\vc5 &&\va5 && \vna5 &&\va3 && \vna3 && \va6 && \vna6 \\
		\vb7 &&\vnb7 \\
		&&&\vwn8&&\voc5 && 
		\vwn5 
		&&&&\vwn6 &&\voc6 &&\vwn7 \\
	\end{array}
	\Bedges{uj1/a1, a1/na1, a9/na9, a8/na8,b1/nb1, wn4/oc1,oc1/wn1,wn1/wn2,oc2/wn3}
	\Medges{wn4/uj1,oc1/a1,wn2/na9,oc2/a8,wn3/na8}
	%	\Redges{na1/na9}
	\Sedges{uj1/wn8,wn4/wn8, oc1/oc5,a1/a5,na1/na5,na9/na3,wn2/wn6,a9/a3,na8/na6,a8/a6,oc2/oc6,wn3/wn7,b1/b7,nb1/nb7}
	\modedges{wn8/c5,oc5/a5,wn6/na3,oc6/a6,wn7/na6}
	\multiRedges{nb1}{a1,oc1,wn4,uj1}
	\multiRedges{nb7}{c5,a5,oc5,wn8}
	\Redges{na1/a9}
	\Redges{na5/a3}
	\multiRedges{na9,wn2}{a8,oc2}
	%	\Bedges{wn6/oc6,na3/a6}
	\multiRedges{wn6,na3}{oc6,a6}
	\Sedges{wn1/wn5}
	\modedges{wn1/na1, wn1/a9}
	\modedges{wn5/na5,wn5/a3}
	\cutshade{wn5}{a3}
	\cutshade{wn6}{a6}
	$
	\end{tabular}
\caption{
	\textbf{Upper row}:
	a derivation of the sequent $\cneg d, d\ltens(\wn c \lpar \oc a), \wn(\cneg a \ltens \cneg b), \wn b$
	in $\MELL$
	and a corresponding proof net (with jump drawn as a dotted edge).
	\textbf{Lower row}:
	a decomposed derivation  in $\MELLj$ of the same sequent
	and its corresponding exponentially handsome proof net.
	In this latter, the gray shadings represent $\cutr$-rules, and the vertex $\ljump$ represents the jump.
	\vspace{-5pt}
} \label{fig:PNexample} \label{fig:HPNexample}
\end{figure}

%\cite{ret:99,retore:03}

\emph{Handsome proof nets} are an alternative syntax for $\MLL$ proofs introduced 
by Retor\'e in~\cite{retore:96:tokyo,retore:03} 
using the results 
from his PhD thesis
%on 
%\emph{\RB-proof nets}
\cite{retore:phd}.
In this syntax, 
the information contained in a proof is represented by specific graphs encoding formulas 
(called \emph{cographs}\footnote{A cograph is a graph containing no induced subgraph isomorph to the four-vertices path $\mathsf P_4$. In \cite{duffin:65} it is shown that a graph encodes a formula iff it is a cograph.}) 
together with a 
perfect matching which satisfies specific topological conditions.
For this reason, we this syntax is called \emph{\RB-cographs}. 

\emph{Combinatorial proofs} 
were introduced by Hughes~\cite{hughes:pws,hughes:invar,str:07:RTA}
to capture proof equivalence in classical logic~\cite{str:FSCD17,acc:str:18}.
They can be considered as an extension of handsome proof nets since they are defined as 
specific graph homomorphisms from a \RB-cograph to a cograph encoding a formula.
In particular, combinatorial proofs indirectly give us a decomposition result,
allowing to  separate the ``linear'' part of the proof, that is,  
the part containing the logical interactions between its components, 
from
the ``resource management'' part, that is, 
the part taking care of erasing and duplicating components.

\textbf{Contribution of the paper.}
In this paper we extend Retor\'e's handsome proof nets for $\MLL$
and we define a syntax able to represent proofs in $\MLLu$ and $\MELL$.
For this purpose, we define combinatorial proofs
for cut-free $\MELL$-derivations,
and then we show how to also encode derivation with cuts.

To achieve our goal, 
we first definite of a new sound and complete proof system for $\MELL$,
called $\MELLj$, and we prove cut-elimination for it.
This system contains the \emph{weak promotion} rule from light linear logic \cite{girard:98,lafont2004soft,lau:pol}
and the \emph{digging rule} instead of regular promotion.
Moreover, the system uniquely assigns each unit $\lbot$ and each weakening rule to an branch of a derivation thanks to ad-hoc rules.
This choice allows us to mimic jumps assignation in proof nets and reduce the complexity of proof equivalence.

We then extend the results in \cite{acc:str:CPK} in order to define combinatorial proofs for $\MELLj$.
This allows us to represent equivalent cut-free derivations in $\MELLj$.
We show that this syntax has a polynomial correctness criterion and is able to syntactically capture the proof equivalence, that is, equivalent $\MELLj$-derivations are represented by the same syntactic object.

We conclude by defining \emph{exponentially handsome proof nets}
as extensions of the combinatorial proofs syntax.
Exponentially handsome proof nets allow us to encode derivations containing cuts, hence to compose proofs.
Finally, we provide a cut-elimination procedure for this syntax by means of a terminating graph rewriting.

\textbf{Organization of the paper.}
In \Cref{sec:formula}
we discuss the notion of proof equivalence for the standard $\MELL$ sequent calculus.
Then we define a sound and complete proof system $\MELLj$ for $\MELL$,
where $\MELL$-proof equivalence is restricted.
We then prove a decomposition theorem for $\MELLj$ using 
deep inference rules~\cite{gug:SIS,gug:str:01,brunnler:tiu:01}. 
In \Cref{sec:relweb,sec:RGB,sec:skew} we define the three components needed to define combinatorial proofs for $\MELLj$.
In particular, in \Cref{sec:relweb} we recall relation webs~\cite{gug:SIS,bechet:etal:97}, which generalize cographs, and we show how they can be used to encode formulas with modalities.
In \Cref{sec:RGB} we extend the correctness criterion for Retor\'e's \RB-cographs
to relation webs with special matchings, called \RGB-cographs, 
which encode the linear part of a $\MELLj$ proof.
In \Cref{sec:skew} we define the \emph{$\MELL$-fibrations} taking care of encoding the resource management part of our proofs.
In \Cref{sec:CP} we define combinatorial proofs as $\MELL$-fibrations from  an \RGB-cograph to a relation web.
Finally, in \Cref{sec:handsome}, 
we define exponentially handsome proof nets as $\MELL$-combinatorial proofs of sequents containing additional formulas keeping track of the $\cutr$-rules,
and we provide a cut-elimination procedure for this syntax.

%%%%%%%%%%%%%%%%%%%%%%%%%%%%%%%%%%%%%%%%%
%%%%%%%%%%%%%%%%%%%%%%%%%%%%%%%%%%%%%%%%%
%%%%%%%%%%%%%%%%%%%%%%%%%%%%%%%%%%%%%%%%%
%%%%%%%%%%%%%%%%%%%%%%%%%%%%%%%%%%%%%%%%%
%%%%%%%%%%%%%%%%%%%%%%%%%%%%%%%%%%%%%%%%%
\section{Proof Systems for $\MELL$}\label{sec:formula}

In this section we recall the sequent systems for multiplicative exponential linear logic (with units) and its subsystems.
We then discuss in \Cref{subsec:equivalence} the notion of proof equivalence for these logics
and in \Cref{subsec:restr} we define a sound and complete proof system for $\MELL$, 
enforcing a coarser notion of proof equivalence.
In \Cref{subsec:dec} we show that this new proof system admits a decomposition theorem,
which we exploit in \Cref{sec:CP} to define the syntax of combinatorial proofs for $\MELL$.

%%%%%%%%%%%%%%%%%%%%%%%%%
%		Fig sequent rules
%%%%%%%%%%%%%%%%%%%%%%%%%
\begin{figure}[!t]
\begin{adjustbox}{max width=\textwidth}
	\def\myskip{\hskip1em}
		\begin{tabular}{c@{\myskip}c@{\myskip}c@{\myskip}c@{\myskip}c@{\myskip}c@{\myskip}c@{\myskip}c@{\myskip}c@{\myskip}|@{\myskip}c}
			$\vlinf{}{\axrule}{ a, \cneg a}{}$
			&
			$\vlinf{}{\lpar}{ \Gamma, A \lpar B}{ \Gamma, A , B}$
			&
			$\vliinf{}{\ltens}{ \Gamma, A \ltens B, \Delta}{ \Gamma, A}{ B , \Delta}$
%			&
%			$\vliinf{}{\cutr}{ \Gamma, \Delta}{ \Gamma, A}{\cneg A , \Delta}$
			&
			$\vlinf{}{\botrule}{\Gamma, \bot}{ \Gamma}$
			&
			$\vlinf{}{\onerule}{\one}{}$    
%			\\[1em]
			&
			$\vlinf{}{\prule}{ \oc A, \wn \Gamma}{ A, \wn \Gamma}$
			&
			$\vlinf{}{\derrule}{ \Gamma,  \wn A}{\Gamma,  A}$
			&
			$\vlinf{}{\ewrule}{ \Gamma, \wn A}{ \Gamma}$
			&
			$\vlinf{}{\ecrule}{ \Gamma, \wn A}{ \Gamma, \wn A, \wn A}$
			&
			$\vliinf{}{\cutr}{ \Gamma, \Delta}{ \Gamma, A}{\cneg A , \Delta}$
%			$\vlinf{}{\digrule}{ \wn A,\Gamma}{\wn \wn A,  \Gamma}$
%			\\[1em]
%			&
%			$\vlinf{}{\jaxrule^n }{ a, \cneg a, \ljump_1, \dots,  \ljump_n}{}$ 
%			&
%			$\vlinf{}{\jonerule^n }{\lone, \ljump_1, \dots,  \ljump_n}{}$
%			&
%			$\vlinf{}{\jbotrule}{\Gamma, \lbot}{\Gamma, \ljump}$
%			&
%			$\vlinf{}{\jewrule }{\Gamma, \wn A}{\Gamma, \ljump}$ 
%			&
%			$\vlinf{}{\jdigrule}{ \Gamma,   \ljump}{\Gamma, \wn \ljump}$
		\end{tabular}    
\end{adjustbox}
	\caption{Sequent calculus rules for $\MELL$ and the $\cutr$-rule}
	\label{fig:rules}
\end{figure}
%%%%%%%%%%%%%%%%%%%
% End fig sequent rules
%%%%%%%%%%%%%%%%%%%

We define formulas in negation normal form generated from 
a countable set of propositional variables $\atoms = \set{a, b, \dots}$
and set of constants $\set{\lone, \lbot, \ljump}$\footnote{The symbols $\lbot$ and $\lone$ are called units. The symbol $\ljump$ is a special symbol which we use as a ``placeholder'' for $\lbot$ and weakening rules.}
by the following grammar:
%\vspace{-2pt}
$$A,B::= a \mid \cneg a   \mid A\lpar  B \mid A\ltens B \mid \oc A \mid \wn A \mid \lbot \mid \lone\mid \ljump $$
%\vspace{-2pt}
A \emph{literal} is a formula of the shape $a$ or $\cneg a$ for an $a\in\atoms$.
A \emph{$\MELL$-formula} is  a formula containing no occurrences of $\ljump$.
Linear negation $\cneg \cdot$ is defined on $\MELL$ formulas through the De Morgan laws:
$\widecneg{\widecneg A}= A $,
$\widecneg{A \ltens B} = \widecneg A \lpar \widecneg B$,
$\widecneg {\oc A} = \wn \cneg A$,
$\cneg \lone = \lbot$.
A \emph{sequent} $\Gamma=A_1, \dots , A_n$ is a non-empty multiset of formulas.

In this paper we consider \emph{multiplicative linear logic} and its extensions with
\emph{units} and 
and \emph{exponentials}\footnote{In this paper, where not otherwise specified, we consider multiplicative exponential linear logic including units.} denoted respectively by $\MLL$, $\MLLu$ and $\MELL$ \cite{gir:ll}.
We use the same names to denote the cut-free sequent calculi for these logics defined using the sequent calculus rules in Figure \ref{fig:rules}.
%\vspace{-2pt}
$$
\MLL= \set{\axrule, \ltens, \lpar}
\qquad
\MLLu= \MLL\cup\set{\lbot, \lone}
\qquad
\MELL= \MLLu \cup \set{\prule, \ewrule,\ecrule,\derrule}
$$
%\vspace{-2pt}
We say that a formula $F$ is provable in $\X$ (denoted by ${\provevia \X F}$) if there is a derivation of $F$ in the system $\X$.

We recall the cut-elimination result for $\MELL$, which encompasses also $\MLL$ and $\MLLu$.
\begin{theorem}[\cite{gir:ll}]\label{thm:seq:cutelim}
	Let $F$ be a formula.
%	 and $\X\in\set{\MLL, \MLLu,\MELL}$. 
Then $F$ is provable in $\MELL$ iff it is provable in $\MELL\cup\set{\cutr}$.
\end{theorem}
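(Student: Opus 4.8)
The ``only if'' direction is immediate: every $\MELL$-derivation is by definition also a derivation in $\MELL\cup\set{\cutr}$, since $\MELL\subseteq\MELL\cup\set{\cutr}$. The content of the statement therefore lies entirely in the converse, which is the cut-elimination property. The plan is to exhibit a terminating rewriting on $\MELL\cup\set{\cutr}$-derivations that removes applications of $\cutr$ while preserving the conclusion, so that a normal form is a cut-free, hence genuine $\MELL$, derivation.

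First I would define cut-reduction steps, grouping them as usual into three families according to a topmost $\cutr$ and whether its cut formula is principal in both premises. The \emph{axiom} cases eliminate a $\cutr$ against an $\axrule$ by deleting both rules and renaming. The \emph{logical key} cases reduce a cut whose formula is principal on both sides: the multiplicative case $A\ltens B$ against $\cneg A\lpar\cneg B$ splits one $\cutr$ into two cuts on the smaller formulas $A$ and $B$; the unit case $\lone$ against $\lbot$ deletes both rules; and the exponential cases treat $\oc A$ against $\wn\cneg A$ according to the rule introducing $\wn\cneg A$, namely $\derrule$ (the cut descends onto $A,\cneg A$), $\ewrule$ (the promotion box is erased and the weakening propagated to the context $\wn\Gamma$), and $\ecrule$ (the promotion derivation is duplicated). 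The \emph{commutative} cases permute the $\cutr$ upward past the last rule of a premise in which the cut formula is not principal. For each step one checks that the conclusion sequent is unchanged, which is routine.

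The hard part is termination, and the crux is exactly the $\ecrule$ key case: cutting $\oc A$ against a contracted $\wn\cneg A$ duplicates the whole promotion derivation and creates two new cuts on the same formula $\oc A$, so neither the number of cuts nor the height of the derivation decreases. I would therefore order cuts by \emph{rank} (the size $\sizeof{A}$ of the cut formula) and argue by a nested induction: an outer induction on the maximal cut rank, and, to eliminate a single topmost cut of maximal rank, an inner induction on the height of the subderivation producing $\wn\cneg A$. The commutative and $\derrule$ steps strictly decrease the inner height or the rank; the $\ecrule$ step, although it momentarily produces two cuts of the same rank, cuts each copy of the box against a strictly shorter right subderivation, so the inner induction applies. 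Driving the box all the way up, it is erased at each $\ewrule$ and yields, at each $\derrule$, a cut on $A,\cneg A$ of strictly smaller rank. Hence eliminating one topmost maximal-rank cut produces only cuts of strictly smaller rank and none of higher rank, so the number of maximal-rank cuts eventually drops, the outer induction closes, and a cut-free derivation is reached; this is the argument of~\cite{gir:ll}.
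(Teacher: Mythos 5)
The paper does not actually prove this statement: it is imported wholesale from Girard, which is why the theorem carries the citation \cite{gir:ll} and no proof follows it. Reconstructing the standard sequent-calculus cut-elimination argument is therefore the right thing to attempt, and your overall architecture (reduction steps split into axiom, logical key, and commutative cases; termination by a nested induction on cut rank and on the height of the right subderivation) is the classical one.

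There is, however, a genuine gap, and it sits exactly at the point you yourself call the crux. In the $\ecrule$ key case, the cut of $\oc A$ (introduced by $\prule$) against $\wn\cneg A$ (introduced by $\ecrule$ from a premise $\rho$ with conclusion $\wn\cneg A,\wn\cneg A,\Delta$) is replaced by two \emph{nested} binary cuts. The inner cut is indeed against $\rho$, a strictly shorter subderivation, so the inner induction hypothesis applies to it. But the right premise of the outer cut is the derivation produced by eliminating the inner cut, which is not a subderivation of the original and can be arbitrarily larger than it; your claim that each copy of the box is cut ``against a strictly shorter right subderivation'' is therefore false for the second copy, and the inner induction on height does not close. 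This is precisely the classical difficulty that forced Gentzen to introduce the mix rule. The standard repair in linear logic is a multicut (or generalized cut) that cuts one $\oc A$ against several occurrences of $\wn\cneg A$ simultaneously, so that the contraction case merely increases the multiplicity while strictly decreasing the height of the right premise; alternatives are Girard's original proof-net argument or a more elaborate termination measure. It is worth noting that the paper itself, when proving the analogous result for $\MELLj$ (\Cref{thm:seq:jumpcut}), explicitly resorts to rewriting steps that ``deal with multiple boxes at a time'' for exactly this reason.
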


\subsection{From Proof Nets to Handsome Proof Nets}\label{subsec:handoming}
\emph{Proof nets} are a graphical syntax for proofs in linear logic introduced by Girard in \cite{gir:ll}.
In this syntax, $\MLL$ proofs are represented by 
replacing each instance of a rule 
by a corresponding \emph{gate} 
whose 
inputs  are the active formulas of the rule 
and 
whose
outputs are the principal formulas, as shown in the upper row of \Cref{fig:PNgates}.
The graphs generated by these gates are called \emph{proof structures} and some of them do not have a corresponding proof in $\MLL$.
Therefore, a topological \emph{correctness criterion} needs to be defined to decide whether a proof structure is a proof net, i.e., is the translation of a $\MLL$ proof. 
Beside Girard's \emph{duable trip condition}, 
several alternative criteria have been proposed in the literature 
\cite{danos:regnier:89,guerrini:99}.

Using \emph{\RB-graphs}, 
which are graphs with two kind of edges 
({\rclr Red} or Regular, and {\color{blue}Blue} or Bold), 
Retor\'e defined in \cite{retore:96:tokyo}
the syntax of 
\emph{\RB-proof nets} 
where gates are represented as the \RB-graphs, as shown in the lower row of \Cref{fig:PNgates}.
In this syntax the correctness criterion can be formulated by requiring the absence of elementary (i.e., non self-intersecting) cycles made of alternating colours edges. 
From \RB-proof nets, 
he then discovered the syntax of 
\emph{handsome proof nets} (or \emph{\RB-cographs}) 
by using the transformation in \Cref{eq:handsome} below, 
allowing to remove from an \RB-graph
all the nodes which are not labelled by literals~\cite{retore:96:tokyo,ret:99,retore:03}.
This syntax is formally presented in \Cref{sec:RGB}.
Aim of this paper is to further develop \RB-cographs by adding an encoding of the linear logic modalities and units.
\begin{equation}\label{eq:handsome}
	\begin{array}{ccccc}
		\vbul1	& 		&	&		&\vbul2 \\
		\vdots	&\vcirc1&	&\vcirc2&\vdots \\
		\vbul3	&		&	&		&\vbul4
	\end{array}
	\Bedges{circ1/circ2}
	\Redges{bul1/circ1,bul3/circ1,bul2/circ2,bul4/circ2}
	\quad\rightsquigarrow\quad
	\begin{array}{ccccc}
		\vbul1	& 		&	&		&\vbul2 \\
		\vdots	&		&	&		&\vdots \\
		\vbul3	&		&	&		&\vbul4
	\end{array}
	\multiRedges{bul1,bul3}{bul2,bul4}
	\qquad
	\mbox{where the labels of the nodes $\circ$ are not literals}
\end{equation}
The encodings for modalities we employ in this paper uses the same kind of directed edges from 
the handsome proof nets for 
\emph{pomset logic}~\cite{retore:96:lambek,bechet:etal:97,retore:98,retore:99,retore:03,retore:21}
(we here represent these edges by green squiggly arrows instead of red arrows).
The upper row of the rightmost column of \Cref{fig:PNgates} shows
the usual encoding of the promotion rule of $\MELL$ by means of \emph{boxes} isolating a portion of the proof net.
In the lower row of the same column 
we provide the intuitive representation 
of how the same box should be represented 
in terms of the \RB-proof nets. 
This intuition might help the reader familiar with proof nets; however, it is not further developed in the paper, as our approach focuses on generalizing \RB-cographs.

\begin{figure}[t]
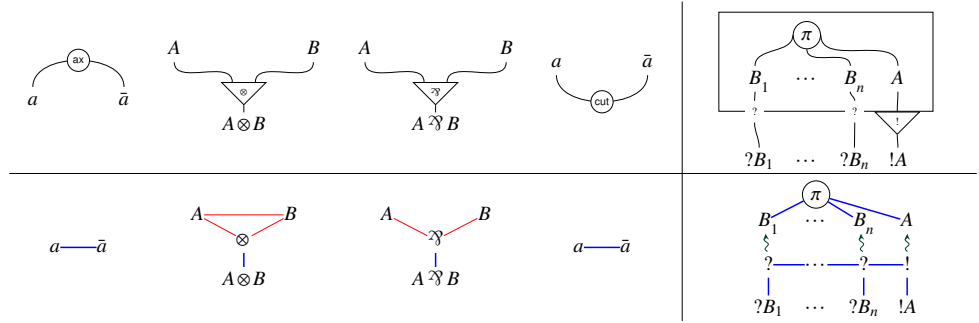

	$$
	\scalebox{.7}{$	
		\begin{array}{c@{\quad}c@{\quad}c@{\quad}c@{\quad}|@{\quad}c}
			\begin{array}{c@{\qquad}c@{\qquad}c}
				\\
				\va1 &&\vna1
			\end{array}
			\psaxioms{a1/na1}
		&
			\begin{array}{c@{\qquad}c@{\qquad}c}
				\vA1 &&\vB1\\[1em]
				&\Gtens1\\
				& \vfor1{A\ltens B}
			\end{array}
			\pswires{A1/Gtens1.L,B1/Gtens1.R,Gtens1.O/for1}
		&
			\begin{array}{c@{\qquad}c@{\qquad}c}
				\vA1 &&\vB1\\[1em]
				&\Gpar1\\
				& \vfor1{A\lpar B}
			\end{array}
			\pswires{A1/Gpar1.L,B1/Gpar1.R,Gpar1.O/for1}
		&	
			\begin{array}{c@{\qquad}c@{\qquad}c}
				\va1 &&\vna1
				\\
				\\
				\\
			\end{array}
			\pscuts{a1/na1}
		&
			\begin{array}{ccccccc}
				\boxYin1
				\\[-5pt]
						&		&\vPN1
				\\[10pt]
						&\vB1_1	&\cdots	&\vB2_n	&\vA1 
				\\[5pt]
						&		&		&		&		&\boxYang1&
				\\[10pt]
						&\vfor1{\wn B_1}&\cdots	&\vfor2{\wn B_n}	& \vfor3{\oc A}
			\end{array}
			\psBox{1}{-35}{-60,-145}
			\pswires{B1.O/box1aux-145.I,box1aux-145.O/for1.I}
			\pswires{B2.O/box1aux-60.I,box1aux-60.O/for2.I}
			\pswires{A1.O/box1main.I,box1main.O/for3.I}
			\pswires{PN1.180/B1,PN1.-90/B2,PN1.0/A1}
			\\\hline
			\begin{array}{ccc}
				\va1&				& 	\vna1\\
			\end{array}
			\Bedges{a1/na1}
			&
			\begin{array}{ccc}
				\vA1&				& 	\vB1\\
				&\vfor1{\ltens}				\\[8pt]
				&\vfor2{A\ltens B}
			\end{array}
			\Redges{A1/for1,B1/for1,A1/B1}	
			\Bedges{for1/for2}
			&
			\begin{array}{ccc}
				\vA1&				& 	\vB1\\
				&\vfor1{\lpar}				\\[8pt]
				&\vfor2{A\lpar B}
			\end{array}
			\Redges{A1/for1,B1/for1}	
			\Bedges{for1/for2}
			&
			\begin{array}{ccc}
				\va1&				& 	\vna1\\
			\end{array}
			\Bedges{a1/na1}
			&
			\begin{array}{ccccccc}
				\\[-10pt]
							& \vPN1
				\\
				\vB1_1		&\cdots			&\vB2_n	&\vA1 
				\\[10pt]
				\vfor4{\wn}&\vfor7{\cdots}	&\vfor5{\wn}		&\vfor6{\oc}
				\\[10pt]
				\vfor1{\wn B_1}&\cdots	&\vfor2{\wn B_n}	& \vfor3{\oc A}
			\end{array}
			\Bedges{for4/for7,for7/for5,for5/for6}
			\Bedges{for1/for4,for2/for5,for3/for6}
			\Medges{for4/B1,for5/B2,for6/A1}
			\Bedges{PN1/B1,PN1/B2,PN1/A1}
	\end{array}
	$}
	$$
	\vspace{-10pt}
	\caption{
		\textbf{Upper row}: gates for $\MLL$ proof nets and the way the promotion rule is encoded.
		\textbf{Lower row}: gates for \RB-proof nets and the intuition on how the promotion rule should be encoded in this syntax.
	}
	\label{fig:PNgates}
\end{figure}

%%%%%%%%%%%%%%%%%%%%%%%%%%%%%%%%%%%%%%%%%
%%%%%%%%%%%%%%%%%%%%%%%%%%%%%%%%%%%%%%%%%
%%%%%%%%%%%%%%%%%%%%%%%%%%%%%%%%%%%%%%%%%
\subsection{Proof Equivalence}\label{subsec:equivalence}

\def\npeq{\simeq}
\def\peq{\npeq_\mathsf J}

\begin{figure}[!t]
	\scriptsize
	\def\myskip{\hskip3em}
	\hbox to\textwidth{\hfil 
		\begin{tabular}{c@{\myskip}c@{\myskip}c}
			$
			\vlderivation{\vliin{}{\ltens}{\Gamma,\Delta, \Sigma, A\ltens B, C\ltens D}{\vlhy{\Delta, A}}{\vliin{}{\ltens}{\Delta, \Sigma ,B, C\ltens D}{\vlhy{\Delta,B,C}}{\vlhy{\Sigma, D}}}}
			\npeq
			\vlderivation{
				\vliin{}{\ltens}{\Gamma,\Delta, \Sigma, A\ltens B , C \ltens D}
				{\vliin{}{\ltens}{\Gamma, \Delta, A\ltens B , C}{\vlhy{\Delta, A}}{\vlhy{\Sigma,B,C}}}
				{\vlhy{\Sigma, D}}}
			$
			&
			$
			\vlderivation{\vlin{}{\tau}{\Gamma, A, B}{\vlin{}{\rho}{\Gamma, A, \Sigma}{\vlhy{\Gamma, \Delta, \Sigma}}}}
			\npeq
			\vlderivation{\vlin{}{\rho}{\Gamma, A, B}{\vlin{}{\tau}{\Gamma, \Delta, B}{\vlhy{\Gamma, \Delta, \Sigma}}}}
			$
			&
%			\\[1em]
%			\\
			$
			\vlderivation{\vliin{}{\ltens}{\Gamma,\Delta, A, B\ltens C}{\vlin{}{\rho}{\Gamma, A, B}{\vlhy{\Gamma, \Delta, B}}}{\vlhy{\Delta, C}}}
			\npeq
			\vlderivation{\vlin{}{\rho}{\Gamma,\Delta, A, B\ltens C}{\vliin{}{\ltens}{\Gamma, \Delta , \Delta, B\ltens C}{{\vlhy{\Gamma, \Delta, B}}}{\vlhy{\Delta, C}}}}
			$
		\end{tabular}
		\hfil}
	
	\caption{Independent rules permutations defined for all 
		$\rho, \tau \in \set{\lpar, \botrule, \ewrule, \ecrule, \derrule}$.
	}
	\label{fig:proofEq}
\end{figure}

\begin{figure}[!t]
	\scriptsize
\def\myskip{\hskip3em}
\hbox to\textwidth{\hfil 
	\begin{tabular}{c@{\myskip}c@{\myskip}c@{\myskip}c@{\myskip}c@{\myskip}c@{\myskip}c}
		$
		\vlderivation{\vlin{}{\ecrule}{\Gamma, \wn A}{\vlin{}{\ecrule}{\Gamma, \wn A_1 , \wn A}{\vlhy{\Gamma, \wn A_1, \wn A_2, \wn A_3}}}}
		\npeq
		\vlderivation{\vlin{}{\ecrule}{\Gamma, \wn A}{\vlin{}{\ecrule}{\Gamma, \wn A , \wn A_3}{\vlhy{\Gamma, \wn A_1, \wn A_2, \wn A_3}}}}
		$
		&
		$
		\vlderivation{\vlin{}{\ewrule}{\Gamma, \wn A,\wn A}{\vlin{}{\ecrule}{\Gamma, \wn A }{\vlhy{\Gamma, \wn A, \wn A}}}}
		\npeq
		\vlderivation{\vlhy{\Gamma,\wn A,\wn A}}
		$
		& 
		$
		\vlderivation{\vlin{}{\ecrule}{\Gamma, \wn A}{\vlin{}{\ewrule}{\Gamma, \wn A , \wn A}{\vlhy{\Gamma, \wn A}}}}
		\npeq
		\vlderivation{\vlhy{\Gamma,\wn A}}
		$
		\\[15pt]
		$
		\vlderivation{\vlin{}{\prule}{\wn \Gamma,\oc A, \wn B}{\vlin{}{\ewrule}{\wn \Gamma, A,\wn B}{\vlhy{\wn \Gamma, A}}}}
		\npeq
		\vlderivation{\vlin{}{\ewrule}{\wn \Gamma,\oc A, \wn B}{\vlin{}{\prule}{\wn \Gamma, \oc A}{\vlhy{\wn \Gamma, A}}}}
		$
		&&
		$
		\vlderivation{\vlin{}{\prule}{\wn \Gamma,\oc A, \wn B}{\vlin{}{\ecrule}{\wn \Gamma, A, \wn B}{\vlhy{\wn \Gamma, A, \wn B, \wn B}}}}
		\npeq
		\vlderivation{\vlin{}{\ecrule}{\wn \Gamma,\oc A, \wn B}{\vlin{}{\prule}{\wn \Gamma, \oc A, \wn B, \wn B}{\vlhy{\wn \Gamma, A, \wn B, \wn B}}}}
		$
	\end{tabular}
	\hfil}	
	\caption{Additional rules permutations.
		The first line is called the \emph{weakening-contraction comonad}}.
	\label{fig:proofEqNS}
\end{figure}

\def\sys{\mathcal S}

The notion of \emph{proof equivalence} in $\MELL$ is defined (e.g., in \cite{ret:DEA,danos:phd}) as the equivalence relation $\npeq$ over $\MELL$-derivations generated by \emph{independent rules permutations}, i.e., 
the permutations of rules which have disjoint sets of principal and active formulas shown in \Cref{fig:proofEq}, together with the ones in \Cref{fig:proofEqNS}.
It is worth noticing that these rules permutations are heavily used in the proof of cut-elimination theorems.

As shown in \cite{heijltjes:houston:14}, the instances of rule permutations in the last line of \Cref{fig:proofEq} involving $\botrule$-rule 
are responsible of 
the $\mathsf{P}$-$\mathsf{space}$ complexity bound of checking proof equivalence in $\MLLu$.
In fact, these rules permutations allows to move the $\lbot$ formula between different derivation branchings as shown in the following example,
where the leftmost and rightmost derivations, which are equivalent, are ``naturally'' represented by two different proof nets.

\begin{adjustbox}{max width=\textwidth}
	\newvertex{axr}{\axrule}{}
	\newvertex{botr}{\botrule}{}
	\newvertex{atenb}{\cneg a \ltens \cneg b}{}
$
\begin{array}{cccccc}
	\\
	\vlbot1	&\va1	&		&\Gtens1	&		&\vb1
	\\
	&		&\vatenb1	&		&	&
\end{array}
\pswires{Gtens1.O/atenb1.I}
\psaxioms{a1/Gtens1.L/1,b1/Gtens1.R}
\psjump{lbot1}{ax1}{}
\leftsquigarrow\quad
\vlderivation{
	\vliin{}{\ltens}{a, \cneg a\ltens \cneg b, b,\lbot}{\vlin{}{\vbotr1}{a, \cneg a, \lbot }{\vlin{}{\vaxr1}{a,\cneg a}{\vlhy{}}}}
	{\vlin{}{\axrule}{b,\cneg b}{\vlhy{}}}
}
\quad
\npeq
\quad
\vlderivation{\vlin{}{\botrule}{a, \cneg a\ltens \cneg b, b,\lbot}{\vliin{}{\ltens}{a,\cneg a\ltens \cneg b,a}
		{\vlin{}{\axrule}{a,\cneg a}{\vlhy{}}}
		{\vlin{}{\axrule}{a,\cneg a}{\vlhy{}}}
	}}
\quad
\npeq
\quad
\vlderivation{\vliin{}{\ltens}{a, \cneg a\ltens \cneg b, b,\lbot}
	{\vlin{}{\axrule}{a,\cneg a}{\vlhy{}}}
	{\vlin{}{\vbotr2}{b, \cneg b, \lbot }{\vlin{}{\vaxr2}{b,\cneg b}{\vlhy{}}}}
}
\bentRedges{axr1/botr1/100,axr2/botr2/20}
\quad\rightsquigarrow
\begin{array}{cc@{\quad}cc@{\quad}cc}
	\\
	\va1	&		&\Gtens1	&		&\vb1&\vlbot1
	\\
			&		&\vatenb1	&		&	&
\end{array}
\pswires{Gtens1.O/atenb1.I}
\psaxioms{a1/Gtens1.L,b1/Gtens1.R/1}
\psjump{lbot1}{ax1}{}
$
\end{adjustbox}\vspace{5pt}

As consequence of this complexity result, we cannot design a syntax $\sys $ for  $\MLLu$ satisfying the two following desiderata at the same time under the assumption $\mathsf{P}\neq \mathsf{NP}$:
\begin{itemize}
	
	\item correctedness in $\sys$ can be checked in polynomial time:
	we can check in polynomial time if an object expressed in the syntax $\mathcal S$ represents a correct proof in $\MLLu$;
	
	\item $\sys$ captures proof equivalence:
	if $\relwebof \pi$ and $\relwebof {\pi'}$ are the encodings in $\sys$ of two derivations $\pi$ and $\pi'$ in $\MELL$ such that $\pi\npeq\pi'$, then $\relwebof \pi=\relwebof{\pi'}$.

\end{itemize}

The same argument applies to $\MELL$-derivations in presence of the rule $\ewrule$.
The complexity of checking proof equivalence depends on the fact that each $\ewrule$ and $\botrule$ must be assigned to an instance of $\axrule$ or of $\onerule$ by permuting them upwards in a derivation. We refer to such assignation as  \emph{jump}.
Since $\npeq$ allows to re-assign jumps, the equivalence check has to test all possible jumps, whose number is exponential with respect to the number of $\botrule$ and $\ewrule$ in a derivation.

%%%%%%%%%%%%%%%%%%%%%%%%%%%%%%%%%%%%%%%%%
%%%%%%%%%%%%%%%%%%%%%%%%%%%%%%%%%%%%%%%%%
%%%%%%%%%%%%%%%%%%%%%%%%%%%%%%%%%%%%%%%%%
\subsection{Restricting Proof Equivalence in $\MELL$}\label{subsec:restr}

%%%%%%%%%%%%%%%%%%%%%%%%%
%		Fig jump rules
%%%%%%%%%%%%%%%%%%%%%%%%%
\begin{figure}[!t]
	\small
	\def\myskip{\hskip1em}
	\hbox to\textwidth{\hfil 
		\begin{tabular}{c@{\myskip}c@{\myskip}c@{\myskip}c@{\myskip}|@{\myskip}c@{\myskip}c@{\myskip}c}
			$\vlinf{}{\jaxrule^n }{ a, \cneg a, \ljump_1, \dots,  \ljump_n}{}$ 
			&
			$\vlinf{}{\jonerule^n }{\lone, \ljump_1, \dots,  \ljump_n}{}$
			&
			$\vlinf{}{\jbotrule}{\Gamma, \lbot}{\Gamma, \ljump}$
			&
			$\vlinf{}{\jewrule }{\Gamma, \wn A}{\Gamma, \ljump}$ 
			&
			$\vlinf{}{\krule}{ \oc A, \wn \Gamma}{ A, \Gamma}$
			&
			$\vlinf{}{\digrule}{ \wn A,\Gamma}{\wn \wn A,  \Gamma}$
			&
			$\vlinf{}{\jdigrule}{ \Gamma,   \ljump}{\Gamma, \wn \ljump}$
		\end{tabular}    
		\hfil}  
	\caption{On the left: the sequent rules fixing jump assignations. On the right: the weak promotion rule, the digging rule and the additional digging rule for $\ljump$}.
	\label{fig:jumprules}
\end{figure}
%%%%%%%%%%%%%%%%%%%
% End fig jump rules
%%%%%%%%%%%%%%%%%%%

As consequence of \cite{heijltjes:houston:14}, 
we cannot aspire to design a proof system\footnote{In the sense of \cite{cook:reckhow:79}, that is, in which we can check if a syntactic object is correct in polynomial time.} 
which captures the proof equivalence $\npeq$ of $\MELL$.
To overcome this problem, 
in this subsection 
we define $\MELLj$, a sound and complete proof system for $\MELL$
enforcing a coarser proof equivalence, denoted by $\peq$, with respect to $\npeq$.
In fact, in $\MELLj$ each $\botrule$- and $\ewrule$-rule instance 
is uniquely associated to a specific $\axrule$- or $\onerule$-rule instance,
mimicking the way in which the corresponding nodes are linked by jumps in a proof net.

We define the following sequent systems using the rules in \Cref{fig:rules,fig:jumprules}
$$
\MLLj=\set{\jaxrule,\jonerule, \jbotrule,\lpar, \ltens}
\qquad
\MELLj=\set{\jaxrule,\jonerule, \jbotrule, \jewrule, \lpar, \ltens, \krule, \derrule, \digrule,\jdigrule, \ecrule}
$$
where $\jaxrule=\set{\jaxrule^n\mid n\in \N}$ and $\jonerule=\set{\jonerule^n\mid n\in \N}$.
The proof equivalence $\peq$ over $\MELLj$ derivations is defined as in \Cref{fig:proofEq} by considering $\rho$ and $\tau$ ranging over $\set{\jbotrule, \jewrule, \lpar, \derrule, \digrule,\jdigrule, \ecrule}$
plus the \emph{associativity of contraction}, that is, the rule permutation in the left-hand side of \Cref{fig:proofEqNS}.

In these systems the relation between one $\jbotrule$- or $\jewrule$-rule and one $\jaxrule$- or $\jonerule$-rule is 
syntactically encoded in the sequent system syntax, as jumps in proof nets are.
Each instance of $\jaxrule$ and $\jonerule$ introduces a bunch of jump place-holders denoted by $\ljump$.
Since each occurrence of a $\ljump$ is unique, each place-holder is further used by a single $\botrule$ or $\ewrule$ instance. 

\begin{remark}
 Another solution to uniquely associate $\botrule$ and $\ewrule$ to an axiom 
 would be to introduce an axiom rule with non-empty contexts, i.e., 
 a rule having as conclusion any sequent of the shape $\Gamma, a, \cneg a$,
 as done in sequent calculus for classical logic $\mathsf{G}$~\cite{troelstra:schwichtenberg:00}.
 Such axiom rule would keep track of whole information of the weakened formula,
 making the syntax of the structures described in \Cref{sec:RGB} heavier.
\end{remark}

Moreover, in $\MELLj$ we replace the \emph{promotion} rule $\prule$ with the \emph{weak promotion} rule  $\krule$ used in \emph{light linear logic}~\cite{girard:98} and in \emph{soft linear logic}~\cite{lafont2004soft,lau:pol} and the \emph{digging} rule.
This choice is motivated by the fact that weak promotion allows to group the $\oc$ introduced by a promotion with all the $\wn$ of its context formulas.
In particular, weak promotion is a context-free rule, that is, it can be applied independently form the shape of the premise context, 
while the regular promotion rule $\prule$ is a context-sensitive rule, in the sense that it can be applied only if the context is of the form $\wn \Gamma$.
The digging rules $\digrule$ and $\jdigrule$ are required to make the system sound and complete with respect to $\MELL$.

%%%%%%%%%%%%%%%%%%%%%%%%
%%%%%%% FIG translation
%%%%%%%%%%%%%%%%%%%%%%%%
\begin{figure}[!t]
	\small
	\def\myskip{\hskip1em}
	\hbox to\textwidth{\hfil 
		\begin{tabular}{c@{\;}|@{\;}c@{\;}c}
			%%%%LEFT
			$
			\vlderivation{\vlin{}{\botrule}{\Gamma, \lbot}{\vlde{\pi}{}{\Gamma}{\vlin{}{\axrule}{a, \cneg a}{\vlhy{}}}}}
			\npeq
			\vlderivation{{\vlde{\pi,\lbot}{}{\Gamma,\lbot}{\vlin{}{\botrule}{a,\cneg a, \lbot}{\vlin{}{\axrule}{a, \cneg a}{\vlhy{}}}}}}
			\leftrightsquigarrow
			\vlderivation{{\vlde{\pi,\lbot}{}{\Gamma,\lbot}{{\vlin{}{\jbotrule}{a,\cneg a, \lbot}{\vlin{}{\jaxrule}{a, \cneg a, \ljump}{\vlhy{}}}}}}}
			$
			&
			%%%Right
			$    \begin{array}{c}
				
				\vlinf{}{\prule}{\oc A, \wn \Gamma}{A, \wn \Gamma}
				\rightsquigarrow
				\vlderivation{
					\vliq{}{\digrule}{\oc A, \wn \Gamma}{
						\vlin{}{\krule}{\oc A, \wn \wn \Gamma}{\vlhy{A,\Gamma}}
				}}
				\\
				\vlinf{}{\krule}{\oc A, \wn \Gamma}{A, \Gamma}
				\rightsquigarrow
				\vlderivation{
					\vlin{}{\prule}{\oc A, \wn \Gamma}{
						\vliq{}{\derrule}{ A, \wn  \Gamma}{\vlhy{A,\Gamma}}
				}}
			\end{array}$
			
			&
			$
			\vlderivation{\vlin{}{\digrule}{\wn A, \Gamma}{\vlpr{\pi}{\MELLj}{\wn \wn A, \Gamma}}}
			\rightsquigarrow
			\vlderivation{
				\vliin{}{\cutr}{\wn A, \Gamma}
				{\vlpr{\pi'}{\MELL}{\wn \wn A, \Gamma}}
				{\vliq{}{2\times \prule}{\oc\oc\cneg A, \wn A}{{\vlin{}{\derrule}{\cneg A,\wn A}{\vlin{}{\axrule}{\cneg A, A}{\vlhy{}}}}}
			}}
			\underset{\cutr}{\rightsquigarrow^*}
			\vlderivation{\vlpr{\pi''}{\MELL}{\wn A, \Gamma}}
			$
		\end{tabular}    
		\hfil}  
	\caption{How to transform a $\botrule$ associate to a specific $\axrule$-rule  to a $\jbotrule$ associate to a $\jaxrule$-rule, and how to replace $\prule$ with a derivation using $\krule$ and $\digrule$ and vice versa. Note that elimination of the $\digrule$ introduces a $\cutr$, which can be eliminated relying on the cut-elimination result in $\MELL$.}
	\label{fig:transformations}
\end{figure}
%%%%%%%%%%%%%%%%%%%%%%%%
%%%%%%%%%%%%%%%%%%%%%%%%
%%%%%%%%%%%%%%%%%%%%%%%%

\begin{theorem}\label{prop:jumps}
	If $F$ is a fomula, then
	$\provevia \X F $ iff $ \provevia \Xj F$.
\end{theorem}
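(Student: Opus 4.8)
The statement asserts that the jump system $\Xj$ and the standard system $\X$ prove exactly the same formulas, so the plan is to establish the two implications separately, reusing the local transformations of \Cref{fig:transformations}. I will carry out the argument for $\X=\MELL$ (so $\Xj=\MELLj$); the cases $\MLL$ and $\MLLu$ follow by discarding the exponential (resp.\ exponential and weakening) machinery and are strictly simpler, since without promotion there are no boxes to cross.

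For the implication $\provevia \X F \Rightarrow \provevia \Xj F$, I would start from a cut-free $\MELL$ derivation $\pi$ of $F$ and first replace every promotion $\prule$ by the $\krule$-plus-$\digrule$ derivation in the top-middle cell of \Cref{fig:transformations}; this premise-preserving rewriting produces a derivation over $\set{\axrule,\onerule,\ltens,\lpar,\botrule,\ewrule,\ecrule,\derrule,\krule,\digrule}$. I then remove the ``floating'' rules $\botrule$ and $\ewrule$ in favour of their jump variants: for each such instance I pick an $\axrule$- or $\onerule$-leaf in the subderivation above it (one always exists, as subderivations are non-empty), upgrade that leaf to $\jaxrule^n$ or $\jonerule^n$ so that it emits a fresh placeholder $\ljump$, and thread this $\ljump$ down the tree path to the chosen rule, which I convert into $\jbotrule$ (resp.\ $\jewrule$). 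The only rule on such a path that alters a context formula is $\krule$, which prefixes a $\wn$ to the placeholder; I cancel each such $\wn$ by inserting a $\jdigrule$ just below the $\krule$, so that the placeholder re-emerges as a bare $\ljump$ and can be consumed at its target. As every rule now lies in $\MELLj$, this yields the desired $\MELLj$ derivation of $F$.

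For the converse $\provevia \Xj F \Rightarrow \provevia \X F$, I would exploit that $\ljump$ is a placeholder for $\lbot$ and apply the substitution $\ljump\mapsto\lbot$ throughout a $\MELLj$ derivation, replacing each rule by a fixed gadget derivable in $\MELL\cup\set{\cutr}$. Concretely $\jaxrule^n$ and $\jonerule^n$ become $\axrule$ (resp.\ $\onerule$) followed by $n$ instances of $\botrule$; $\jbotrule$ becomes the identity; $\krule$ becomes the dereliction-plus-$\prule$ derivation of the bottom-middle cell; $\jewrule$, $\jdigrule$ and $\digrule$ become a $\cutr$ against the readily provable sequents $\lone,\wn A$, $\oc\lone,\lbot$ and $\oc\oc\cneg A,\wn A$ respectively, the last case being exactly the right-hand cell of \Cref{fig:transformations}; and $\ltens,\lpar,\derrule,\ecrule$ are left unchanged. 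Since $F$ contains no $\ljump$, the substitution fixes the conclusion, so I obtain a derivation of $F$ in $\MELL\cup\set{\cutr}$; the cut-elimination \Cref{thm:seq:cutelim} then delivers a cut-free $\MELL$ derivation of $F$.

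I expect the soundness direction to be routine bookkeeping sealed off by cut-elimination, and the real difficulty to lie in the completeness direction, precisely in the placeholder-threading step. The delicate claim is that a $\botrule$ or $\ewrule$ sitting below a promotion can still be grounded at a leaf inside the corresponding box, the $\wn$'s accumulated along the traversed $\krule$-rules being exactly undone by $\jdigrule$; this is what forces $\jdigrule$ into $\MELLj$. I would make the threading rigorous by an induction on the transformed derivation, carrying as invariant the multiset of placeholders currently routed through each sequent and verifying, rule by rule, that each step lifts to a valid $\MELLj$ inference on the decorated sequents. In the unit-only fragment $\MLLu$ no $\krule$ occurs, so one may instead simply permute each $\botrule$ up to a leaf via $\npeq$, as in the left-hand cell of \Cref{fig:transformations}.
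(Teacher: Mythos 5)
Your proof is correct and takes essentially the same route as the paper's: both directions rest on the transformations of \Cref{fig:transformations} --- replacing $\prule$ by $\krule$ plus $\digrule$s, pairing each $\botrule$/$\ewrule$ with an $\axrule$/$\onerule$ through the $\ljump$ placeholder (with $\jdigrule$ absorbing the $\wn$ that each crossed $\krule$ adds), and discharging $\digrule$ in the converse by a cut that is then removed via \Cref{thm:seq:cutelim}. Your top-down threading of placeholders and the explicit $\ljump\mapsto\lbot$ substitution are a more detailed rendering of what the paper compresses into ``move each $\ewrule$/$\botrule$ up by rule permutations'' and ``vice versa by cut-elimination'', so the two arguments coincide in substance.
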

\begin{proof}
	By rules permutations, we can move each occurrence $\rho$ of a $\ewrule$- or a $\botrule$-rule up in the derivation until it reaches the assigned occurrence $\sigma_\rho$ of an $\jaxrule$- or a $\jonerule$-rule. 
	% ($\axrule$ and $\onerule$ are particular occurrences respectively of $\jaxrule$ and $\jonerule$). 
	Then we replace $\sigma_\rho$ with an occurrence of the same rule $\sigma_\rho$ with an additional $\ljump$ in the conclusion, and $\rho$ with an occurrence of $\jclr \rho^\mathsf j$
	% (if $\rho=\botrule$) or $\jewrule$ (if $\rho=\ewrule$)  
	applied to this fresh $\ljump$ (an example is shown in Figure \ref{fig:transformations}). 
	Moreover, every $\prule$ can be replaced by a $\krule$ followed by a finite number of $\digrule$ and vice versa by $\MELL$ cut-elimination theorem \cite{gir:ll} (see Figure \ref{fig:transformations}).
\end{proof}

Strictly speaking, the proof system $\MELLj$ does not satisfy the \emph{subformula property} because of the presence of the rules $\digrule$ and $\jdigrule$.
However, we can prove a cut-elimination result.
Observe that a weaker notion of the subformula property holds since all formulas that can appear in a derivation of an arbitrary sequent are going to be subformulas of a formula at the root or formulas of the form $\wn\cdots \wn A$ where $\wn A$ is a subformula of a formula at the root.
At the moment this paper is written, the decidability of $\MELL$ is an open question~\cite{str:decision}. 
Thus we do not focus on proof search for $\MELLj$.
\def\cutred{\rightsquigarrow_\cutr}
\def\jruler{\rho_{\jclr j}}
%%%%%%%%%%%%%%%%%%%
%%% fig cut-elimn MELLj
%%%%%%%%%%%%%%%%%%%
\begin{figure}[t]
	\scriptsize
	\centering
	$
	\begin{array}{cc}
		%		axVSax
		\vlderivation{\vliin{}{\cutr}
			{a,\cneg a, \ljump,\cdots, \ljump,\ljump',\cdots, \ljump'}
			{\vlin{}{\jaxrule}{a,\cneg a, \ljump, \cdots, \ljump}{\vlhy{}}}
			{\vlin{}{\jaxrule}{a,\cneg a, \ljump',\cdots, \ljump'}{\vlhy{}}}
		}
		\cutred
		\vlinf{}{\jaxrule}{a,\cneg a, \ljump, \cdots, \ljump,\ljump',\cdots, \ljump'}{}
		&
		%		botVSone
		\vlderivation{\vliin{}{\cutr}
			{\Xi, \ljump,\cdots, \ljump,\ljump',\cdots, \ljump'}
			{\vlin{}{\jbotrule}
				{\Xi, \ljump, \cdots, \ljump, \lbot}
				{\vlin{}{\jruler}{\Xi, \ljump,\cdots, \ljump, \ljump^{\cutr}}{\vlhy{}}}
			}
			{\vlin{}{\jonerule}{\lone, \ljump',\cdots, \ljump'}{\vlhy{}}}}
		\cutred
		\vlderivation{
			{\vlin{}{\jruler}{\Xi, \ljump, \cdots, \ljump,\ljump',\dots, \ljump'}{\vlhy{}}}}
		\\[1em]
		\\
		%		tensorVSpar
		\vlderivation{\vliin{}{\cutr}{\Gamma, \Delta,\Sigma}{\vlin{}{\lpar}{\Gamma, A\lpar B}{\vlhy{\Gamma, A,B}}}{
				\vliin{}{\ltens}{\Delta, \Sigma, \cneg A\ltens \cneg B}{\vlhy{\Delta, \cneg A}}{\vlhy{\Sigma, \cneg B}}}}		
		\cutred
		\vlderivation{
			\vliin{}{\cutr}{\Gamma, \Delta, \Sigma}{
				\vliin{}{\cutr}{\Gamma, \Delta,B}{\vlhy{\Gamma, A,B}}{\vlhy{\Delta, \cneg A}}
			}
			{\vlhy{\Sigma, \cneg B}}
		}
		&
		%	spVSsp
		\vlderivation{\vliin{}{\cutr}{\wn\Gamma, \wn \Delta, \oc B}{\vlin{}{\krule}{\wn \Gamma, \oc A}{\vlhy{\Gamma, A}}}{\vlin{}{\krule}{\wn  \cneg A,\wn \Delta,  \oc B}{\vlhy{\cneg A,\Delta, B}}}}
		\cutred
		\vlderivation{\vlin{}{\krule}{\wn \Gamma, \wn\Delta, \oc B}{\vliin{}{\cutr}{\Gamma, \Delta, B}{\vlhy{\Gamma, A}}{\vlhy{\cneg A,\Delta, B}}}}
		\\[1em]
		\\
		%	spVSder
		\vlderivation{\vliin{}{\cutr}{\wn\Gamma,  \Delta}{\vlin{}{\krule}{\wn \Gamma, \oc A}{\vlhy{\Gamma, A}}}{\vlin{}{\derrule}{\wn  \cneg A,\Delta}{\vlhy{\cneg A,\Delta}}}}
		\cutred
		\vlderivation{\vliq{}{\sizeof{\Gamma}\times \derrule}{\wn \Gamma,\Delta}{\vliin{}{\cutr}{\Gamma, \Delta}{\vlhy{\Gamma, A}}{\vlhy{\cneg A,\Delta}}}}
		&
		%	spVScontr
		\vlderivation{\vliin{}{\cutr}{\wn\Gamma,  \Delta}{\vlin{}{\krule}{\wn \Gamma, \oc A}{\vlhy{\Gamma, A}}}{\vlin{}{\ecrule}{\wn  \cneg A,\Delta}{\vlhy{\wn\cneg A, \wn\cneg A,\Delta}}}}
		\cutred
		\vlderivation{\vliq{}{\sizeof{\Gamma}\times \ecrule}{\wn \Gamma, \Delta}{\vliin{}{\cutr}{\wn\Gamma,\wn\Gamma,  \Delta}{\vlin{}{\krule}{\wn \Gamma, \oc A}{\vlhy{\Gamma, A}}}{\vliin{}{\cutr}{\wn A, \wn\Gamma,  \Delta}{\vlin{}{\krule}{\wn \Gamma, \oc A}{\vlhy{\Gamma, A}}}{{\vlhy{\wn\cneg A, \wn\cneg A,\Delta}}}}}}
		\\
		%	spVSweak
		\vlderivation{\vliin{}{\cutr}{\wn\Gamma,  \Xi}
			{\vlin{}{\krule}{\wn \Gamma, \oc A}{\vlhy{\Gamma, A}}}
			{\vlin{}{\jewrule}{\wn  \cneg A,\Xi}{
					{\vlin{}{\jruler}{\ljump^{\cutr},\Xi}{\vlhy{}}}}
		}}
		\cutred
		\vlderivation{\vliq{}{\sizeof \Gamma \times \jewrule}{\wn \Gamma, \Xi}{
				%				\vlde{\pi}{\MELLj}{ \ljump_1,\dots ,\ljump_{\sizeof{\Gamma}}, \Delta}
				{\vlin{}{\jruler}{\ljump_1,\dots, \ljump_{\sizeof{\Gamma}}, \Xi}{\vlhy{}}}
			}
		}
		&
		%	\spVSdig
		\vlderivation{\vliin{}{\cutr}{\wn\Gamma,  \Delta}{\vlin{}{\krule}{\wn \Gamma, \oc A}{\vlhy{\Gamma, A}}}
			{\vliq{}{n\times \digrule}{\wn  \cneg A,\Delta}{
					\vliq{}{(n+1)\times \krule}{\wn^{n+1}\cneg A,\Delta}{
						{\vlhy{\cneg A,\Delta'}}}
		}}}
		\cutred
		\vlderivation{
			\vliq{}
			{n\sizeof{\Gamma}\times \digrule}
			{\wn \Gamma, \Delta}
			{\vliq{}
				{(n+1)\times \krule}
				{\wn^{n+1}\Gamma, \Delta}
				{\vliin{}
					{\cutr}
					{\Gamma, \Delta'}
					{\vlhy{\Gamma, A}}
					{\vlhy{\cneg A,\Delta'}
				}}
		}}
	\end{array}
	$
	\caption{The cut-elimination steps in $\MELLj$, where $\jruler\in\set{\jaxrule,\jonerule}$ is the unique rule introducing the $\ljump^\cutr$ which is the active premise of the $\jbotrule$ or $\jewrule$ involved in the cut-elimination step.}
	\label{fig:cutElim}
\end{figure}
%%%%%%%%%%%%%%%%%%%
%%% end fig cut-elimn MELLj
%%%%%%%%%%%%%%%%%%%
%%%%%%%%%%%%%%%%%%%
%%% fig comm cuts in MELLj
%%%%%%%%%%%%%%%%%%%
\begin{figure}[!th]
	\small
	\centering
	$
	\begin{array}{c@{\cutred}c@{\qquad}c@{\cutred}c@{\qquad}c@{\cutred}c}
		\vlderivation{\vlin{}{\krule}{\wn \wn A, \wn \Gamma, \oc B}{\vlin{}{\derrule}{\wn A, \Gamma, B}{\vlhy{A,\Gamma, B}}}}
		&
		\vlderivation{\vlin{}{\derrule}{\wn \wn A, \wn \Gamma, \oc B}{\vlin{}{\krule}{\wn A, \wn \Gamma, \oc B}{\vlhy{A,\Gamma, B}}}}
		&
		\vlderivation{\vlin{}{\krule}{\wn \wn A, \wn \Gamma, \oc B}{\vlin{}{\digrule}{\wn A, \Gamma, B}{\vlhy{\wn\wn A,\Gamma, B}}}}
		&
		\vlderivation{\vlin{}{\digrule}{\wn \wn A, \wn \Gamma, \oc B}{\vlin{}{\krule}{\wn \wn \wn A, \wn \Gamma, \oc B}{\vlhy{\wn \wn A,\Gamma, B}}}}	
		\\
		\vlderivation{\vlin{}{\krule}{\wn X, \wn \Gamma, \oc B}{\vlin{}{\jewrule/\jbotrule}{X, \Gamma, B}{\vlhy{\ljump,\Gamma, B}}}}
		&
		\vlderivation{\vlin{}{\jewrule}{\wn X, \wn \Gamma, \oc B}{\vlin{}{\digrule}{\ljump, \wn \Gamma, \oc B}{\vlin{}{\krule}{\wn \ljump, \wn \Gamma, \oc B}{\vlhy{\ljump,\Gamma, B}}}}}
		&
		\vlderivation{\vlin{}{\krule}{\wn \wn A, \wn \Gamma, \oc B}{\vlin{}{\ecrule}{\wn A, \Gamma, B}{\vlhy{\wn A, \wn A,\Gamma, B}}}}
		&
		\vlderivation{\vlin{}{\ecrule}{\wn \wn A, \wn \Gamma, \oc B}{\vlin{}{\krule}{\wn \wn A, \wn \wn A, \wn \Gamma, \oc B}{\vlhy{\wn A, \wn A,\Gamma, B}}}}	
		\\
		\vlderivation{\vlin{}{\derrule}{\wn X, \Gamma}{\vlin{}{\jewrule/\jbotrule}{X, \Gamma}{\vlhy{\ljump,\Gamma}}}}
		&
		\vlderivation{\vlin{}{\jewrule}{\wn X, \Gamma}{{\vlhy{\ljump,\Gamma}}}}
		&
		\vlderivation{\vlin{}{\digrule}{\wn A, \Gamma}{\vlin{}{\jewrule}{\wn \wn A, \Gamma}{\vlhy{\ljump,\Gamma}}}}
		&
		\vlderivation{\vlin{}{\jewrule}{\wn A, \Gamma}{{\vlhy{\ljump,\Gamma}}}}
		\\
		\vlderivation{\vlin{}{\derrule}{\wn \wn A, \Gamma}{\vlin{}{\ecrule}{\wn A, \Gamma}{\vlhy{\wn A, \wn A,\Gamma}}}}
		&
		\vlderivation{\vlin{}{\ecrule}{\wn  \wn A,  \Gamma}{\vliq{}{2\times \derrule}{\wn\wn  A,\wn \wn  A,\Gamma}{\vlhy{\wn A, \wn A,\Gamma}}}}	
		&
		\vlderivation{\vlin{}{\digrule}{\wn  A,  \Gamma}{\vlin{}{\ecrule}{\wn \wn A, \Gamma}{\vlhy{\wn \wn A, \wn\wn A,\Gamma}}}}
		&
		\vlderivation{\vlin{}{\ecrule}{\wn A,  \Gamma}{\vliq{}{2\times\digrule}{\wn  A, \wn  A,  \Gamma}{\vlhy{\wn\wn A, \wn\wn A,\Gamma}}}}	
	\end{array}
	$
	
	$
	\begin{array}{c@{\cutred}c@{\qquad}c@{\cutred}c@{\qquad}c@{\cutred}c@{\qquad}c@{\cutred}c@{\qquad}c@{\cutred}c@{\qquad}c@{\cutred}c}
		\vlderivation{\vlin{}{\derrule}{\wn\wn A,\Gamma}{\vlin{}{\digrule}{\wn A,\Gamma}{\vlhy{\wn\wn A, \Gamma}}}}
		&
		\vlderivation{\wn\wn A,\Gamma}
		&
		\vlderivation{\vlin{}{\derrule}{\wn\ljump,\Gamma}{\vlin{}{\jdigrule}{\ljump,\Gamma}{\vlhy{\wn\ljump, \Gamma}}}}
		&
		\vlderivation{\wn\ljump,\Gamma}
		&
		\vlderivation{\vlin{}{\digrule}{\wn A,\Gamma}{\vlin{}{\derrule}{\wn \wn A,\Gamma}{\vlhy{\wn A, \Gamma}}}}
		&
		\vlderivation{\wn A,\Gamma}
		&
		\vlderivation{\vlin{}{\jdigrule}{\ljump,\Gamma}{\vlin{}{\derrule}{\wn\ljump,\Gamma}{\vlhy{ \ljump, \Gamma}}}}
		&
		\vlderivation{\ljump,\Gamma}
	\end{array}
	$
	\caption{Commutative cut-elimination steps}
	\label{fig:commCut}
\end{figure}
%%%%%%%%%%%%%%%%%%%
%%% end fig comm cuts in MELLj
%%%%%%%%%%%%%%%%%%%

\begin{theorem}\label{thm:seq:jumpcut}
	Let $F$ be a formula.
	Then  $F$ is provable in $\MELLj$ iff $F$ is provable in $\MELLj\cup\set\cutr$.
\end{theorem}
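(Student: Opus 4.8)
The plan is to prove the two directions separately. The left-to-right direction is immediate: every $\MELLj$-derivation is already a $\MELLj\cup\set\cutr$-derivation, since $\MELLj\subseteq \MELLj\cup\set\cutr$. So the content of the theorem is the converse, namely that $\cutr$ is admissible, which I would establish by showing that the rewriting relation $\cutred$ defined by the reductions in \Cref{fig:cutElim} and \Cref{fig:commCut} — together with the generic step permuting a $\cutr$ above any rule not acting on either of the two cut formulas — is \emph{terminating} and that its normal forms are cut-free. Given such a procedure, starting from a derivation of $F$ in $\MELLj\cup\set\cutr$ and normalising it yields a cut-free derivation, i.e.\ a derivation in $\MELLj$.

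First I would check that $\cutred$ covers every occurrence of $\cutr$. For a topmost $\cutr$ (one whose two premises are derived without further cuts), either the two cut formulas are principal in the rules immediately above the cut — in which case one of the \emph{key} steps of \Cref{fig:cutElim} applies (the logical $\lpar/\ltens$ case, the exponential cases of $\krule$ against $\derrule$, $\ecrule$, $\jewrule$ or $\digrule$, and the axiom cases $\jaxrule/\jaxrule$ and $\jbotrule/\jonerule$) — or at least one cut formula is not principal, in which case a commutative step pushes the cut upward past the offending rule. A delicate point specific to $\MELLj$, absent in ordinary $\MELL$, is the bookkeeping of jumps: in the $\jbotrule/\jonerule$ and $\krule/\jewrule$ steps the placeholder $\ljump^\cutr$ consumed by the $\jbotrule$ or $\jewrule$ must be traced back to the \emph{unique} rule $\jruler\in\set{\jaxrule,\jonerule}$ that introduced it. This uniqueness is exactly what the single-occurrence discipline on $\ljump$ guarantees, so the reductions are well defined and preserve the invariant that each $\ljump$ is introduced once and consumed once. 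I would additionally use the permutations of \Cref{fig:commCut} to normalise the relative positions of $\krule$, $\derrule$, $\digrule$, $\ecrule$ and $\jdigrule$, so that an exponential cut can always be confronted with a rule genuinely acting on its cut formula.

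Termination is the crux, and I expect it to be the main obstacle. The difficulty is that the steps $\krule$ against $\ecrule$ and $\krule$ against $\digrule$ \emph{duplicate} the subderivation above $\krule$ and turn one cut on $\oc A$ into two cuts on the same formula, so neither the size of the derivation nor the multiset of cut-formula complexities decreases under the small-step reductions, and a naive induction fails; this is precisely the phenomenon that makes $\MELL$ cut-elimination subtle. The $\derrule$ and $\jewrule$ cases are by contrast benign, since they strictly lower the complexity of the cut formula (from $\oc A$ to $A$) or remove the cut outright, so the problem is localised to contraction and digging. To regain termination I would eliminate an exponential cut on $\oc A/\wn\cneg A$ in one big step: commute the whole $\krule$-box upward through the entire dual subderivation, duplicating it at each $\ecrule$ and $\digrule$ and discarding it at each $\jewrule$, until every residual copy meets a $\derrule$ or an axiom, where the cut formula drops from $\oc A$ to $A$ or the cut vanishes. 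Since the dual subderivation is finite, this big step performs only finitely many duplications and leaves behind cuts whose cut formulas are strictly simpler than $\oc A$. Termination then follows by induction on the pair consisting of the maximal exponential complexity of a cut formula and the number of cuts attaining it, ordered lexicographically. Verifying that each big step creates only strictly simpler cuts while correctly reassigning the jumps throughout every duplication is the technical heart of the argument; once termination is secured, the absence of $\cutr$ in normal forms is immediate from the exhaustiveness check of the previous paragraph.
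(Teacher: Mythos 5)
Your proposal follows essentially the same route as the paper's own proof: the same key and commutative reductions (\Cref{fig:cutElim} and \Cref{fig:commCut}), the same caveat that the $\ljump$-steps are non-local reassignments made well defined by the single-occurrence discipline on $\ljump$, and the same resolution of the termination problem by big-step reductions that push the $\krule$-box through the whole dual subderivation at once --- exactly the technique the paper imports by citing Accattoli, ``where rewriting deals with multiple boxes at a time''. If anything, your write-up is more explicit than the paper's sketch about why naive small-step induction fails in the $\ecrule$/$\digrule$ cases, where the paper's blanket claim that the steps of \Cref{fig:cutElim} decrease the cut-formula size does not literally hold.
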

\begin{proof}
	Assuming the equivalences $\peq$, 
	we define the rewritings in \Cref{fig:cutElim}, 
	which decrease the size of the cut-formula, 
	and the ones in \Cref{fig:commCut},
	which 
	permute the rule 	$\krule $ 								above $\derrule$, $\digrule$, $\jdigrule$, $\jewrule$, $\jbotrule$ and  $\ecrule$,
	permute the rules 	$\derrule$, $\digrule$ and $\jdigrule$ 	above $\jewrule$, $\jbotrule$ and  $\ecrule$,
	and	
	permute the rules $\jewrule$ and $\jbotrule$ 			above  $\ecrule$.
	
	Note that the cut-elimination steps involving $\ljump$ 
	are non-local rewritings which reassign or introduce new $\ljump$ in $\jaxrule$ or $\jonerule$ rules.
	Moreover, after the rules in \Cref{fig:commCut}, 
	the step involving $\digrule$ has to involve at least two different $\krule$.
	
	To prove cut-elimination we define some rewriting steps behaving similarly to the ones used in \cite{accattoli:linear}, that is, where rewriting deals with multiple boxes at a time.
\end{proof}

%%%%%%%%%%%%%%%%%%%%%%%%%%%%%%%%%%%%%%%%%
%%%%%%%%%%%%%%%%%%%%%%%%%%%%%%%%%%%%%%%%%
%%%%%%%%%%%%%%%%%%%%%%%%%%%%%%%%%%%%%%%%%
\subsection{Decomposing $\MELL$ Proofs}\label{subsec:dec}

In order to prove the decomposition result for our system $\MELLj$, 
we introduce \emph{deep inference} rules~\cite{gug:SIS,gug:str:01,brunnler:tiu:01}.
These rules can be applied at any depth of the sequent, i.e., to any subformula occurring in it,
allowing us to push to the bottom of the derivation the $\ewrule$, $\ecrule$, $\digrule$ and $\derrule$ inferences.

%%%%%%%%%%%%%%%%%%%
%  Fig deep rules
%%%%%%%%%%%%%%%%%%%
\begin{figure}[!t]
	\def\myskip{\hskip1em}
	\hbox to\textwidth{\hfil 
		\begin{tabular}{c@{\myskip}c@{\myskip}c@{\myskip}|@{\myskip}c@{\myskip}c@{\myskip}|@{\myskip}c@{\myskip}c}
			$     \vlinf{}{\dderrule}{\context{\wn A}}{\context{ A}}$
			&
			$     \vlinf{}{\ddigrule}{\context{\wn A}}{\context{\wn \wn A}}$
			&
			$     \vlinf{}{\djdigrule}{\context{\ljump}}{\context{\wn \ljump}}$
			&
			$     \vlinf{}{\dbotrule}{\context{\lbot}}{\context{ \ljump}}$
			&      
			$     \vlinf{}{\dewrule}{\context{ \wn A}}{\context{\ljump}}$
			&
			$     \vlinf{}{\decrule}{\context{\wn A}}{\context{\wn A\lpar \wn A}} $
		\end{tabular}
		\hfil}
	\caption{Deep inference rules for 
		dereliction and digging, for $\botrule$ and $\wn$-weakening, and  for $\wn$-contraction.
	}
	\label{fig:DI}
\end{figure}
%%%%%%%%%%%%%%%%%%%
% End fig sequent rules
%%%%%%%%%%%%%%%%%%%

We denote by  $\context{~}$  a \emph{context}, which is a sequent or a formula with an ``hole'' in place of a formula,
and we define the following sets of rules, composed of sequent rules from \Cref{fig:rules} and the deep inference rules from \Cref{fig:DI}.
\begin{equation}\label{eq:deepsys}
	\begin{array}{r@{\;=\;}l@{\qquad}r@{\;=\;}l@{\qquad}r@{\;=\;}l}
	\linearized\MLL 	& 	\MLL 
	&
	\linearized\MLLu 	& 	\set{ \jaxrule,\jonerule, \lpar, \ltens}
	&
	\linearized\MELL 	&	\set{\jaxrule, \jonerule,\lpar, \ltens, \krule}
	\\
	\deep\MLL 			& 	\emptyset
	&
	\deep\MLLu 			& 	\set{\dbotrule}
	&
	\deep\MELL 			&	\set{\dderrule, \ddigrule,\djdigrule, \dbotrule,\dewrule, \decrule}
	\\
	\end{array}
\end{equation}

\begin{theorem}\label{thm:MELL-decompose}
  Let  $F$ be a formula.
  Then $  \provevia{\MELL} F $ iff there is a formula $F'$ such that $ \provevia {\linearized \MELL} F' \provevia{\deep\MELL}F $.
  More precisely, if $\provevia{\MELL} F $, then there are some formulas $F'$, $F''$ and $F'''$  such that
  $$  \provevia {\MELL^\ell} F' 
  \provevia{\set{\dderrule,\ddigrule,\djdigrule}} F ''
  \provevia{\set{\dewrule,\dbotrule}}F'''
   \provevia{\set{\deep\ecrule}} F  
   $$
\end{theorem}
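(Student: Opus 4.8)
The plan is to establish the decomposition by starting from a cut-free $\MELLj$-derivation (which exists by \Cref{thm:seq:jumpcut} since $\provevia{\MELL} F$ gives $\provevia{\MELLj} F$ via \Cref{prop:jumps}) and then systematically permuting the ``resource management'' rules $\derrule$, $\digrule$, $\jdigrule$, $\jbotrule$, $\jewrule$, and $\ecrule$ downward until they form a bottom segment of deep inference rules, leaving a purely linear $\linearized\MELL$-derivation on top. The key observation is that each sequent-calculus rule in this list can be converted into its deep-inference counterpart from \Cref{fig:DI} once it sits at the bottom: a $\derrule$ acting on a principal formula $\wn A$ becomes $\dderrule$, a $\ecrule$ becomes $\decrule$ (using that $\wn A \lpar \wn A$ is equivalent to the contracted context at the bottom), and similarly for the others. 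So the real content is a permutation argument showing these rules can be pushed past the linear rules $\lpar$, $\ltens$, $\jaxrule$, $\jonerule$, $\krule$ that should remain above them.

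First I would set up a rank/measure on derivations — for instance, the multiset of depths at which resource-management rules occur, ordered so that pushing such a rule downward strictly decreases the measure — and argue termination by well-founded induction. Then I would check the individual permutation cases: since $\derrule$, $\digrule$, $\jdigrule$, $\jewrule$, $\jbotrule$, $\ecrule$ each act on a single exponential formula, they commute with $\lpar$ and with the branching rule $\ltens$ (sending the rule into the appropriate premise), and they permute below $\jaxrule$/$\jonerule$ trivially because those are leaves. The interaction with $\krule$ is the delicate one: a $\derrule$, $\digrule$, $\ecrule$, or $\jewrule$ above a $\krule$ must be permuted below it, and these are exactly the commutative steps already tabulated in \Cref{fig:commCut}, so I can invoke those rewritings directly. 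This gives the coarse statement $\provevia{\linearized\MELL} F' \provevia{\deep\MELL} F$.

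To obtain the \emph{refined} three-stage ordering, I would further stratify the bottom deep-inference segment. The plan is to show that within $\deep\MELL$ the rules can be sorted into the three groups $\set{\dderrule,\ddigrule,\djdigrule}$, then $\set{\dewrule,\dbotrule}$, then $\set{\decrule}$, again by local permutations. The crucial point is that contraction $\decrule$ only \emph{duplicates} material and so can always be pushed to the very bottom past everything that introduces or consumes a single $\wn$ or $\ljump$; and that the ``erasing/unit'' rules $\dewrule$, $\dbotrule$ (which replace a $\ljump$ placeholder by $\wn A$ or $\lbot$) sit below the dereliction/digging rules that manipulate the $\wn$-structure above the placeholders, since a $\ljump$ introduced by $\jaxrule$/$\jonerule$ is only ever consumed once. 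I expect these permutations to follow from inspection of \Cref{fig:DI}, noting that a deep rule acting inside context $\context{~}$ commutes with another deep rule acting in a disjoint or nested-but-independent part of the context.

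\textbf{The main obstacle} I anticipate is the termination argument in the presence of $\digrule$/$\ddigrule$ and $\decrule$/$\decrule$ together: digging introduces an extra $\wn$ while descending, and contraction duplicates, so a naive ``number of rule instances'' measure does not decrease. The fix is to choose the measure as a lexicographic combination of (i) the total depth (distance from the root) summed over all resource-management instances — which strictly decreases on every downward permutation and is unaffected by the structural duplication of contraction once $\decrule$ is treated last — together with (ii) a secondary component counting pending digging/dereliction rules above the unit rules. Getting this ordering to be genuinely well-founded, and verifying that each of the roughly a dozen permutation cases (especially those crossing $\krule$ and those reassigning a $\ljump$ shared by several rules) strictly decreases it, is where the bulk of the careful bookkeeping will lie; the individual rewrites themselves are routine given \Cref{fig:commCut}.
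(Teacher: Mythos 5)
There is a genuine gap, and it sits exactly where your proposal declares the work ``routine.'' You plan to permute the \emph{shallow} sequent rules $\derrule$, $\digrule$, $\jdigrule$, $\jbotrule$, $\jewrule$, $\ecrule$ downward past the linear rules, and only convert them to their deep counterparts ``once they sit at the bottom.'' But these shallow rules do \emph{not} commute with $\lpar$ and $\ltens$ in the critical case where the principal formula of the resource rule is an \emph{active} formula of the linear rule below. Concretely, take $\vlinf{}{\derrule}{\Gamma, C, \wn A}{\Gamma, C, A}$ followed by $\vlinf{}{\lpar}{\Gamma, C\lpar\wn A}{\Gamma, C, \wn A}$: after the $\lpar$, the formula $\wn A$ is a proper subformula of $C\lpar\wn A$, and no shallow instance of $\derrule$ can act on it; the same failure occurs when $\wn A$ is an active formula of a $\ltens$. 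The rule permutations of \Cref{fig:proofEq} are defined only for rules with \emph{disjoint} principal and active formulas, so they are silent precisely here. The paper's proof avoids this by reversing your order of operations: each shallow resource rule is replaced by its deep counterpart \emph{immediately} (sound, because a shallow instance is just a deep instance with a trivial context), and only then are the rules pushed down --- a deep rule commutes with every rule below it, since its context merely grows as the formula gets embedded. The paper's footnote to this theorem makes the point explicitly: the shallow permutations of \Cref{fig:commCut}, which you invoke for the $\krule$ interactions, ``are not enough to prove the result.''

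The remaining parts of your proposal are closer to the mark but inherit secondary problems from this choice. The stratification of the deep segment into $\set{\dderrule,\ddigrule,\djdigrule}$, then $\set{\dewrule,\dbotrule}$, then $\set{\decrule}$ by local permutations (with duplication when a rule crosses a contraction) is the right idea and matches what the paper leaves implicit. On the other hand, the elaborate lexicographic termination measure is solving a problem your own setup created: once all resource rules are deep from the start, one can simply take the topmost such rule and push it to the bottom --- it commutes with everything beneath it --- and induct on the number of resource-rule instances remaining above the deep segment, handling duplication across contractions in the final stratification step rather than during the descent.
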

\begin{proof}
After Proposition \ref{prop:jumps}, it suffices to 
replace each occurrence of $\digrule$, $\jdigrule$, $\derrule$, $\jbotrule$, $\jewrule$,  and $\ecrule$ by an occurrence of their corresponding deep version $\ddigrule$, $\djdigrule$, $\dderrule$, $\dbotrule$, $\dewrule$,  and $\decrule$ and then to push these inferences to the bottom of the derivation\footnote{A part of these rules permutations may be performed without using deep inference rules, as shown in \Cref{fig:commCut}, but they are not enough to prove the result.}.

The converse is proven by reverting the previous argument, that is, by pushing up all deep rules applications and replace them by the corresponding non-deep rules.
\end{proof}

%%%%%%%%%%%%%%%%%%%%%%%%%%%%%%%%%%%%%%%%%
%%%%%%%%%%%%%%%%%%%%%%%%%%%%%%%%%%%%%%%%%
%%%%%%%%%%%%%%%%%%%%%%%%%%%%%%%%%%%%%%%%%
%%%%%%%%%%%%%%%%%%%%%%%%%%%%%%%%%%%%%%%%%
%%%%%%%%%%%%%%%%%%%%%%%%%%%%%%%%%%%%%%%%%
\section{Relation Webs}\label{sec:relweb}

\emph{Cographs} are graphs encoding formulas constructed using a conjunction and a disjunction connective~\cite{duffin:65}.
In this section we present \emph{modal relation webs}, which generalize cographs, and which will be used in the next sections to encode $\MELL$-formulas
We define modal relation webs as mixed graphs (i.e., graphs with both directed and undirected edges) satisfying certain topological conditions.
Moreover, we show that they identify formulas modulo associativity and commutativity of $\ltens$ and $\lpar$.

A \emph{directed graph} $\gG=\tuple{V_\gG,\dedge[\gG]}$ is a set
$V_\gG$ of \emph{vertices} equipped with an irreflexive binary \emph{edge relation}
$\dedge[\gG]\subseteq V_\gG\times V_\gG$.  
An
\emph{undirected graph} $\gG=\tuple{V_\gG,\uedge[\gG]}$ is a graph whose
\emph{edge relation} $\uedge[\gG]\subseteq V_\gG\times V_\gG$ is
irreflexive and symmetric. A \emph{mixed graph} is a triple
$\gG=\tuple{V_\gG,\uedge[\gG],\dedge[\gG]}$ where
$\tuple{V_\gG,\uedge[\gG]}$ is an undirected graph and
$\tuple{V_\gG,\dedge[\gG]}$ is a directed graph, such that
$\uedge[\gG] \cap \dedge[\gG]= \emptyset$ and $\dedge[\gG]$ is irreflexive.
We omit the
index/superscript $\gG$ when it is clear from the context.  
When drawing a graph we draw $\vv1\quad\vw1\Redges{v1/w1}$ whenever 
$v\uedge w$, and $\vv2\quad\vw2\modedges{v2/w2}$ whenever $v\dedge w$;
otherwise we either draw no edge at all, 
or we draw $\vv3\quad\vw3\oredges{v3/w3}$ when we want to underline the absence of edges.
A mixed graph is \emph{$\cL$-labeled} if each vertex $v$ carries a label $\lab v$ selected from a label set $\cL$. 
In this paper we fix the label set to be $\cL=\cA\cup\cneg\cA\cup\set{\oc,\wn, \lone, \lbot, \ljump}$.

\begin{definitionn}
  A \emph{relation web} is a non-empty mixed graph
  $\gG=\tuple{V_\gG,\uedge[\gG],\dedge[\gG]}$ 
 such that:
 
 \begin{itemize} 
 \item $\dedge[\gG]$ is transitive and irreflexive;
 \item $\tuple{V_\gG,\uedge[\gG]}$ is a \emph{cograph}, $\tuple{V_\gG,\dedge[\gG]}$ is a \emph{series-parallel order} and $\gG$ is \emph{3-color triangle-free}, 
 that is, $\gG$ contains  no induced subgraphs of the following shape:
\begin{equation}\label{eq:Z-free}
	\begin{array}{c@{\hskip2em}c@{\hskip2em}c@{\hskip2em}c}
		 \begin{array}{c@{\quad\;\;}c}
			  \vu2 & \vv2\\
			  \\[-1ex]
			  \vy2 &\vz2
		 \end{array}
		 \Redges{u2/v2,y2/v2,y2/z2}
		 \oredges{u2/y2,u2/z2,z2/v2}
		&		
		\begin{array}{c@{\quad\;\;}c}
			\vu 1 & \vv 1\\
			\\[-1ex]
			\vy 1 &\vz 1
		\end{array}
		\modedges{u1/v1,y1/v1,y1/z1}
		\oredges{u1/y1,u1/z1,z1/v1}
		&		
		\begin{array}{c@{\;\;}c@{\;\;}c}
			&\vw1\\
			\\[-1.2ex]
			\vu1 &&\vv1
		\end{array}
		\oredges{u1/w1}
		\Redges{w1/v1}
		\modedges{v1/u1}	
		&	
	      \begin{array}{c@{\;\;}c@{\;\;}c}
			&\vw1\\
			\\[-1.2ex]
			\vu1 &&\vv1
		\end{array}
		\oredges{u1/w1}
		\Redges{w1/v1}
		\modedges{u1/v1}
	\end{array}
  \end{equation}
\end{itemize}
A \emph{cograph} is an undirected graph containing no induced subgraph as the leftmost one in \Cref{eq:Z-free}.
\end{definitionn}

Let $\gG$ and $\gH$ be two disjoint mixed graphs. We define the following operations:
\begin{equation}
  \label{eq:graph-operations}
  \hbox {$\hss%
  \begin{array}{c@{\;\;=\;\;}l}
    \gG\lpar\gH &
    \upsmash{\tuple{V_\gG\cup V_\gH\;,
      \;\uedge[\gG]\cup\uedge[\gH]\;,
      \;\dedge[\gG]\cup\dedge[\gH]}}\\
    \gG\lseq\gH &
    \tuple{V_\gG\cup V_\gH\;,
      \;\uedge[\gG]\cup\uedge[\gH]\;,
      \;\dedge[\gG]\cup\dedge[\gH]\cup\set{(u,v)\mid u\in V_\gG,v\in V_\gH}}\\
    \gG\ltens\gH &
    \tuple{V_\gG\cup V_\gH\;,
      \;\uedge[\gG]\cup\uedge[\gH]\cup\set{(u,v),(v,u)\mid u\in V_\gG,v\in V_\gH}\;,
      \;\dedge[\gG]\cup\dedge[\gH]}\\
  \end{array}\hskip1em
  \hss$}
\end{equation}

which can be visualized as follows:
$$%%\begin{equation*}
\begin{array}{c@{\qquad\;}c@{\qquad\;}c}
  \gG\lpar\gH &\gG\lseq\gH &\gG\ltens\gH \\
\begin{tikzpicture}
\node (A) at (0,0) {$\begin{array}{c}\gG \\ \vbull1\\ \vvdotsnode1 \\ \vbull2 \\~\end{array}$};
\node (B) at (2,0){$\begin{array}{c}\gH \\ \vbull3\\ \vvdotsnode2 \\ \vbull4 \\ ~\end{array}$};
\draw[-](A) circle [x radius=5mm, y radius=8mm];
\draw[-](B) circle [x radius=5mm, y radius=8mm];
\end{tikzpicture}
\multioredges{bull1,bull2,vdotsnode1}{bull4,bull3,vdotsnode2}
&%\qquad
\begin{tikzpicture}
\node (A) at (0,0) {$\begin{array}{c}\gG \\ \vbull1\\ \vvdotsnode1 \\ \vbull2 \\~\end{array}$};
\node (B) at (2,0){$\begin{array}{c}\gH \\ \vbull3\\ \vvdotsnode2 \\ \vbull4 \\ ~\end{array}$};
\draw[-](A) circle [x radius=5mm, y radius=8mm];
\draw[-](B) circle [x radius=5mm, y radius=8mm];
\end{tikzpicture}
\multiMedges{bull1,bull2,vdotsnode1}{bull4,bull3,vdotsnode2}
&
\begin{tikzpicture}
\node (A) at (0,0) {$\begin{array}{c}\gG \\ \vbull1\\ \vvdotsnode1 \\ \vbull2 \\~\end{array}$};
\node (B) at (2,0){$\begin{array}{c}\gH \\ \vbull3\\ \vvdotsnode2 \\ \vbull4 \\ ~\end{array}$};
\draw[-](A) circle [x radius=5mm, y radius=8mm];
\draw[-](B) circle [x radius=5mm, y radius=8mm];
\end{tikzpicture}
\multiRedges{bull1,bull2,vdotsnode1}{bull4,bull3,vdotsnode2}
\end{array}
\vadjust{\vskip-2ex}
$$

\begin{theorem}[\cite{gug:SIS,bechet:etal:97}]\label{thm:relweb}
  A mixed graph is a relation web if and only if it can be constructed from single vertices using the three operations $\lpar$, $\lseq$ and $\ltens$ defined in~\Cref{eq:graph-operations}.
\end{theorem}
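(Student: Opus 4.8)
The plan is to prove both inclusions, with the right-to-left direction (soundness of the three operations) being routine and the left-to-right direction (the decomposition) carrying all the content. For the converse direction I would argue by induction on the construction of $\gG$, checking that each of $\lpar$, $\lseq$, $\ltens$ preserves the four defining conditions. A single vertex is trivially a relation web. For the inductive step write $\gG=\gG_1\circ\gG_2$ with $\circ\in\set{\lpar,\lseq,\ltens}$ and $\gG_1,\gG_2$ relation webs. Transitivity and irreflexivity of $\dedge$ are immediate for $\lpar$ and $\ltens$ (no directed edge is added) and hold for $\lseq$ because every new pair already points $V_{\gG_1}\to V_{\gG_2}$, so no transitive consequence escapes the set already added. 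That $\uedge$ is a cograph is preserved since $\lpar,\lseq$ take a disjoint union of $\uedge$-cographs while $\ltens$ takes their join, both preserving $\mathsf P_4$-freeness; dually $\dedge$ is a series-parallel order since $\lpar,\ltens$ realize parallel composition and $\lseq$ realizes series composition. Finally, no forbidden subgraph of \Cref{eq:Z-free} is created: every cross pair receives one uniform colour (no edge for $\lpar$, a $\uedge$ for $\ltens$, a $\dedge$ oriented $V_{\gG_1}\to V_{\gG_2}$ for $\lseq$), so a forbidden triangle meeting both sides would have two of its three edges crossing and hence of equal colour, impossible for the ``rainbow'' triangles (third and fourth graphs) that use all three colours, while the monochromatic $\mathsf P_4$'s (first two graphs) are excluded by the cograph and series-parallel arguments just made.

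For the direct direction I would proceed by induction on $\sizeof{V_\gG}$, using that every induced subgraph of a relation web is again a relation web (all four conditions are hereditary). It therefore suffices to produce, for $\sizeof{V_\gG}\ge 2$, a partition $V_\gG=V_1\sqcup V_2$ into nonempty parts whose cross pairs all carry the same colour and, in the directed case, the same orientation; then $\gG=\gG[V_1]\circ\gG[V_2]$ for the corresponding $\circ$, and the induction hypothesis applies to the two smaller induced relation webs. Let $A$ be the adjacency graph on $V_\gG$ (an edge whenever the two vertices are joined by a $\uedge$ or by a $\dedge$ in either direction) and let $N=\overline A$ be the ``no edge'' graph.

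The key enabling step is the claim that $A$ is a cograph. Suppose a sequence $a,b,c,d$ induced a $\mathsf P_4$ in $A$, so that $ab,bc,cd$ are $A$-edges while $ac,bd,ad$ are $N$-edges. In the triangle $\set{a,b,c}$ the pair $ac$ is a non-edge; if $ab$ and $bc$ had different colours (one $\uedge$, one $\dedge$) this would realise exactly the third or fourth forbidden graph of \Cref{eq:Z-free}, so $ab,bc$ share a colour, and likewise $bc,cd$ do; hence all three path edges have the same colour. If that colour is $\uedge$ we obtain an induced $\mathsf P_4$ in the $\uedge$-cograph, and if it is $\dedge$ then, using transitivity of $\dedge$ to rule out the orientations that would fill in $ac,bd,ad$, we obtain the second forbidden graph; either way a contradiction. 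Hence $A$ is a cograph, and we apply the cograph dichotomy. If $A$ is disconnected, its components give a $\lpar$-split. If $A$ is connected, then $N$ is disconnected; its components $C_1,\dots,C_k$ (with $k\ge 2$) are pairwise fully $A$-adjacent, and each $C_i$ is a module: two vertices joined by an $N$-edge relate identically to any outside vertex, again by the third and fourth forbidden graphs (fixing the colour) together with transitivity of $\dedge$ (fixing the orientation), and this propagates along $N$-paths.

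It remains to decompose the quotient $Q$ obtained by choosing one representative per $N$-component. Here $Q$ is a complete relation web (every pair adjacent, each coloured $\uedge$ or $\dedge$), and since every forbidden graph of \Cref{eq:Z-free} contains a non-edge, the only surviving constraints on $Q$ are that its $\uedge$-part is a cograph and its $\dedge$-part is a series-parallel order. I would then read the decomposition off the cotree of the $\uedge$-part of $Q$: at a join node all children are $\uedge$-joined, yielding a $\ltens$-split; at a union node no two children carry a $\uedge$, so by completeness all cross pairs are $\dedge$, and these assemble into a single consistently oriented order, yielding a $\lseq$-split. Unfolding each split back through the modules $C_i$ gives the required top-level decomposition of $\gG$. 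The main obstacle is exactly this last point: showing that at a union node the inter-child $\dedge$ edges form one coherent series order rather than an ``N''-shaped conflict. This is precisely where the series-parallel (N-free) hypothesis on $\dedge$, rather than mere transitivity, is needed, and it is the heart of the argument.
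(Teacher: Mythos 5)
The paper itself does not prove this theorem: it is imported from \cite{gug:SIS,bechet:etal:97}, so there is no in-paper argument to compare against, and your proposal must be judged on its own. Your right-to-left direction is correct and complete. Your left-to-right strategy (induction via modular decomposition: show the adjacency graph $A$ is a cograph, split along disconnectedness of $A$ or of $N=\overline A$, show the $N$-components are modules via the mixed-triangle conditions plus transitivity, then decompose the complete quotient $Q$) is the right route, and your arguments for the cograph claim and the module claim are sound. One caveat on the definition: at the quotient step you silently read ``$\tuple{V,\uedge}$ is a cograph'' and ``$\tuple{V,\dedge}$ is a series-parallel order'' as standalone conditions on the two edge relations, not merely as absence of the two leftmost configurations of \Cref{eq:Z-free}. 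That is the correct (and necessary) reading, matching the cited sources: on the complete graph $Q$ every pictured configuration is vacuous because each contains a non-edge, and under the picture-only reading the theorem is actually false --- e.g.\ the complete mixed graph with $\uedge$-edges $ab$, $bc$, $cd$ and $\dedge$-edges $a\dedge c$, $b\dedge d$, $a\dedge d$ is transitive and contains none of the pictured configurations, yet admits no $\lpar$-, $\ltens$- or $\lseq$-split.

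The genuine gap is the step you yourself flag as ``the main obstacle'': the $\lseq$-split at a union node of the cotree of $Q^\uedge$. You assert that the inter-component $\dedge$-edges ``assemble into a single consistently oriented order'' and claim this is where series-parallelity (N-freeness) of $\dedge$ is needed. That diagnosis is mistaken, and the step is easier than you think: it follows from transitivity of $\dedge$ together with $\uedge\cap\dedge=\emptyset$ and completeness of $Q$. First, each connected component $D_i$ of $Q^\uedge$ is a module for orientation: if $x\uedge y$ with $x,y\in D_i$ and $z\notin D_i$, then $x\dedge z$ and $z\dedge y$ would give $x\dedge y$ by transitivity, contradicting disjointness of the two relations; so $z$ is oriented the same way toward $x$ and $y$, and this propagates along $\uedge$-paths in $D_i$ --- exactly your module argument for the $N$-components, re-run with $\uedge$ in place of $N$. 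Second, the induced relation on $\set{D_1,\dots,D_m}$ is then a tournament (every pair of components is $\dedge$-related in a uniform direction, by completeness of $Q$), and it is transitive because $\dedge$ is: a directed triangle would force both $x\dedge z$ and $z\dedge x$, hence $x\dedge x$, contradicting irreflexivity. A transitive tournament is a linear order, and taking its minimum component as the left factor gives the $\lseq$-split. Note that N-freeness of $\dedge$ enters your proof exactly once, and only in its pictured form: in your claim that $A$ is a cograph, where the all-$\dedge$ coloured path produces precisely the second configuration of \Cref{eq:Z-free}. With these two observations inserted, your proof is complete.
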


For each formula $F$ we define the $\cL$-labeled relation web $\relwebof F$.
We use the notations $\singlevertex[a]$, $\singlevertex[\cneg a]$, $\ssoc$,
$\sswn$, $\ssone$, $\ssbot$  and $\ssjump$ for the graph consisting of a single vertex
that is labeled with $a$, $\cneg a$, $\oc$, $\wn$, $\lone$, $\lbot$ and $\ljump$ respectively.
\begin{equation}
  \label{eq:translation}
  \hskip.3em
  \begin{array}{c@{\;\;=\;\;}l}
    \relwebof a & \singlevertex[a]\\
    \relwebof {\cneg a} &  \singlevertex[\cneg a]
  \end{array}
  \hskip1.5em
  \begin{array}{c@{\;\;=\;\;}l}
    \relwebof{A\ltens B}& \relwebof A\ltens\relwebof B\\
    \relwebof{A\lpar B} & \relwebof A\lpar\relwebof B
  \end{array}
  \hskip1.5em
  \begin{array}{c@{\;\;=\;\;}l}
    \relwebof{\oc A}& \ssoc\lseq\relwebof A\\
    \relwebof{\wn A} & \sswn\lseq\relwebof A
  \end{array}
  \hskip1.5em
    \begin{array}{c@{\;\;=\;\;}l}
    \relwebof{\lone}& \ssone \\
    \relwebof{\lbot} & \ssbot
    \end{array}
  \hskip1.5em
    \begin{array}{c@{\;\;=\;\;}l}
    \relwebof{\ljump}& \ssjump \\
    \end{array}
\end{equation}
For a sequent $\Gamma=A_1,\ldots,A_n$ we define $\relwebof\Gamma=\relwebof{A_1,\ldots,A_n}=\relwebof{A_1}\lpar\cdots\lpar\relwebof{A_n}$.

\begin{definitionn}\label{def:modalrelweb}
	A relation web $\gG$ is \emph{properly labeled} if for each $v,w\in\vertices[\gG]$ we have $v\dedge w$ iff $\lab   v\in\set{\oc,\wn}$.
	  
  	Moreover, we say that a relation web $\gG$ is \emph{modal} if it is properly labeled and for any vertices $u$, $v$,
	$w$ with $u\dedge w$ and $v\dedge w$ we have $u\dedge v$ or $v\dedge
	u$ or $u=v$, i.e., $\gG$ does not contain the two configurations below.
  \begin{equation}
    \label{eq:modal}
  \text{\rm Forbidden configurations for modal relation webs:}
  \hskip1em
  \upsmash{%
    \begin{array}{c@{\;\;}c@{\;\;}c}
      &\vw1\\
      \\[-1.2ex]
      \vu1 &&\vv1
    \end{array}
    \oredges{u1/v1}
    \modedges{u1/w1,v1/w1}
  \hskip1.5em
    \begin{array}{c@{\;\;}c@{\;\;}c}
      &\vw1\\
      \\[-1.2ex]
      \vu1 &&\vv1
    \end{array}
    \Redges{u1/v1}
    \modedges{u1/w1,v1/w1}
    }
  \end{equation}
\end{definitionn}

By adapting the proofs in \cite{acc:str:CPK}, we have the following results:
\begin{theorem}\label{thm:modal}
  A relation web is the translation of a formula iff
  it is modal.
\end{theorem}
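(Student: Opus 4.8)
The plan is to establish both directions of the equivalence by combining the structural characterization from \Cref{thm:relweb} with the syntactic translation $\relwebof{\cdot}$ defined in \Cref{eq:translation}. First I would prove the forward direction: if $\gG=\relwebof F$ for some formula $F$, then $\gG$ is modal. This proceeds by structural induction on $F$. The base cases ($F$ a literal, a unit, or $\ljump$) give a single vertex, which is vacuously modal and properly labeled. For the inductive step I would examine each of the three composition operations $\lpar$, $\ltens$, $\lseq$ in \Cref{eq:graph-operations}, checking that each preserves both the proper-labeling condition and the absence of the two forbidden configurations in \Cref{eq:modal}. The key observation is that directed edges $\dedge$ are introduced only by $\lseq$ (hence only when labeling a $\oc$ or $\wn$ vertex via $\relwebof{\oc A}=\ssoc\lseq\relwebof A$ and $\relwebof{\wn A}=\sswn\lseq\relwebof A$), so $v\dedge w$ holds precisely when $\lab v\in\set{\oc,\wn}$, giving proper labeling for free. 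For the modality condition, I would argue that whenever two vertices $u,v$ both point into the same $w$ (i.e.\ $u\dedge w$ and $v\dedge w$), they must have arisen from the same chain of $\lseq$ prefixes targeting the subformula containing $w$, and series-parallel structure forces them to be $\dedge$-comparable.

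For the converse—if $\gG$ is modal then $\gG=\relwebof F$ for some $F$—I would invoke \Cref{thm:relweb} to obtain a decomposition of $\gG$ from single vertices using $\lpar$, $\lseq$, $\ltens$, and then show that modality lets me read back a formula. The $\lpar$ and $\ltens$ cases translate directly to $\lpar$ and $\ltens$ on formulas by the inductive hypothesis. The delicate case is $\lseq$: a graph $\gG=\gH\lseq\gK$ has every vertex of $\gH$ pointing into every vertex of $\gK$. Here proper labeling forces every vertex of $\gH$ to be labeled $\oc$ or $\wn$, and the modality condition (no two $\dedge$-incomparable vertices sharing a common $\dedge$-successor) together with $3$-color triangle-freeness forces $\gH$ to be a single modality vertex—so $\gG=\ssoc\lseq\relwebof{F'}$ or $\gG=\sswn\lseq\relwebof{F'}$, which is exactly $\relwebof{\oc F'}$ or $\relwebof{\wn F'}$ by the inductive hypothesis.

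The hard part will be the $\lseq$ case of the converse: ruling out a nontrivial left factor $\gH$ with more than one vertex. If $\gH$ contained two distinct vertices $u,v$, then since $\lseq$ makes both point into any vertex $w$ of $\gK$, modality demands $u\dedge v$ or $v\dedge u$; but that internal directed edge, combined with the edges into $\gK$, must be shown either to contradict the forbidden triangle configurations of \Cref{eq:Z-free}/\Cref{eq:modal} or to collapse the decomposition so that $\gH$ itself splits off its bottom modality vertex, letting me recurse. I expect to handle this by choosing a canonical (maximal) $\lseq$-decomposition and arguing that the series-parallel order on the modality vertices is in fact a total order along each ``$\wn\cdots\wn A$''-style chain, so the only modal relation webs whose top operation is $\lseq$ are those of the form $\ssoc\lseq(\cdots)$ or $\sswn\lseq(\cdots)$ with a single leading modality. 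Since the statement explicitly says the result is obtained ``by adapting the proofs in \cite{acc:str:CPK},'' I would follow the analogous cograph/formula correspondence there, transporting the adjacency-preservation arguments to the mixed-graph setting and adding the modality-vertex bookkeeping needed to account for $\oc$ and $\wn$.
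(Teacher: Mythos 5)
Your proposal is correct and takes essentially the same approach as the paper: your forward direction is the paper's scope-of-modalities argument (run as an induction on the translation), and your converse is \Cref{thm:relweb} together with re-association of $\lseq$-chains, which is precisely what the paper's terse appeal to ``associativity of $\lseq$'' means. The one point to fix is that in the $\lseq$ case you cannot rule out or derive a contradiction from a multi-vertex left factor $\gH$ --- such decompositions genuinely exist, e.g.\ $\relwebof{\oc\wn a}=(\ssoc\lseq\sswn)\lseq\singlevertex[a]$ is modal --- so only the second branch of your dichotomy is viable: proper labeling and the modality condition force $\gH$ to be a $\dedge$-chain of modality vertices, and associativity lets you split off its $\dedge$-minimum as a single leading $\oc$ or $\wn$ and recurse on the (still modal) remainder.
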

\begin{proof}
	If $\gG=\relwebof{F}$ for some formula $F$, then each vertex with an outgoing $\dedge$-edge is the encoding of the modality of a subformula of $F$, hence such vertex is labeled with $\oc$ or $\wn$. 
	Moreover, if two distinct such vertices $u$ and $v$ have a $\dedge$-edge to some vertex $w$, then that $w$ is the encoding of a modality or an atom occurring in a subformula in the scope of the modalities corresponding to $u$ and $v$. Thus one of such modalities is in the scope of the other and we have $u\dedge v$ or $v\dedge u$.
	The converse follows from Theorem~\ref{thm:relweb} and the fact that the operation $\lseq$ in~\Cref{eq:graph-operations} is associative.
\end{proof}

If $\gG$ is a modal relation web, we denote by $\avertices[\gG]$, $\onevertices[\gG]$, $\jumpvertices[\gG]$, $\bvertices[\gG]$ and $\dvertices[\gG]$ the set of vertices in $\vertices[\gG]$ with labels respectively in $\atoms\cup \cneg\atoms$, $\set{\lone}$, $\set\ljump$, $\set\oc$ and $\set\wn$.
We call \emph{atomic}, \emph{unit}, \emph{jump}, and \emph{modal} vertices the ones respectively in $\avertices[\gG]$, $\onevertices[\gG]$, $\jumpvertices[\gG]$, and $\mvertices[\gG]=\bvertices[\gG]\cup \dvertices[\gG]$ .

\begin{proposition}\label{prop:relwebPoly}
  Given a set $\vertices[\gG]$ and two binary edge relations $\uedge[\gG]$ and $\dedge[\gG]$ on vertices 
%  $\subseteq V_\gG\times V_\gG$, 
  it can be checked in time polynomial on the size $\sizeof{V_\gG}$, whether $\gG=\tuple{V_\gG,\uedge[\gG],\dedge[\gG]}$ is a modal relation  web.
\end{proposition}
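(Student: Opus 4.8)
The plan is to go through the defining clauses of a modal relation web one at a time and observe that each of them is a \emph{local} condition, in the sense that it can be tested by inspecting a bounded number of vertices together. Since for every fixed $k$ there are only $O(\sizeof{V_\gG}^k)$ many $k$-tuples of vertices, and each local test takes constant time, the whole decision procedure will run in time polynomial in $\sizeof{V_\gG}$.

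First I would dispose of the purely relational requirements. That $\dedge[\gG]$ is irreflexive, that $\uedge[\gG]$ is irreflexive and symmetric, and that $\uedge[\gG]\cap\dedge[\gG]=\emptyset$ are each tested by scanning all pairs of vertices, in time $O(\sizeof{V_\gG}^2)$. Transitivity of $\dedge[\gG]$ is tested by scanning all triples $u,v,w$ and verifying that $u\dedge v$ and $v\dedge w$ imply $u\dedge w$, in time $O(\sizeof{V_\gG}^3)$; note that an irreflexive transitive relation is automatically a strict partial order, so no separate antisymmetry check is needed. Finally, proper labeling is verified per vertex by comparing the label $\lab v$ carried by each $v\in\vertices[\gG]$ against the presence of an outgoing $\dedge$-edge, again in polynomial time.

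The remaining requirements are all forbidden--induced--subgraph conditions on at most four vertices, and this is the only point that deserves a moment of care: one must recall that the cograph condition and the series-parallel condition both admit such characterizations. Being a cograph means containing no induced $\mathsf P_4$, which is precisely the leftmost pattern of \eqref{eq:Z-free}; $\tuple{V_\gG,\dedge[\gG]}$ being a series-parallel order means that the strict order $\dedge[\gG]$ is $\mathsf N$-free, and the excluded $\mathsf N$ is exactly the second pattern of \eqref{eq:Z-free}; 3-color triangle-freeness excludes the two three-vertex patterns of \eqref{eq:Z-free}; and modality adds the exclusion of the two three-vertex patterns of \eqref{eq:modal}. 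For each of these finitely many fixed patterns $P$, I would enumerate all tuples of $\sizeof{V_P}\le 4$ vertices of $\gG$ — of which there are $O(\sizeof{V_\gG}^4)$ — and check in constant time whether the induced sub-mixed-graph on that tuple is isomorphic to $P$.

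Putting the pieces together, $\gG$ is a modal relation web if and only if all of the above tests succeed, so the decision is a conjunction of finitely many subprocedures, the most expensive of which runs in time $O(\sizeof{V_\gG}^4)$; hence the overall cost is $O(\sizeof{V_\gG}^4)$, which is polynomial in $\sizeof{V_\gG}$. I do not expect any genuine obstacle here: the entire argument rests on the observation that every clause of the definition is a forbidden-pattern condition on at most four vertices (or an even simpler relational condition), so a uniform brute-force enumeration suffices and no dedicated recognition algorithm for cographs or for series-parallel posets is required.
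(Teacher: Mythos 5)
Your proposal is correct and follows essentially the same route as the paper's own proof: brute-force enumeration of pairs, triples, and quadruples of vertices, using the fact that every clause of the definition is either a simple relational property or a forbidden-induced-pattern condition on at most four vertices (which, in this paper, is literally how the cograph, series-parallel, 3-color-triangle-free, and modal conditions are defined), giving an overall $O(\sizeof{\vertices[\gG]}^4)$ bound. Your write-up is somewhat more explicit than the paper's (e.g.\ you also check irreflexivity of $\uedge[\gG]$ and $\uedge[\gG]\cap\dedge[\gG]=\emptyset$, and spell out the $O(\sizeof{\vertices[\gG]}^3)$ transitivity test), but the argument is the same.
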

\begin{proof}
  Checking the transitivity and  irreflexivity of $\dedge[\gG]$ and symmetry  of $\uedge[\gG]$
  is polynomial on $\sizeof{\vertices[\gG]}$. 
  Then, to check the absence of the forbidden configurations in~\eqref{eq:Z-free} and~\eqref{eq:modal}
  we just check all triples and quadruples of vertices, which is  $\mathbf O(\sizeof{\vertices[\gG]}^4)$.
  Checking the property of being properly labeled is liner on $\sizeof{\vertices[\gG]}$.
\end{proof}

By associativity of the graph operations $\lpar$, $\ltens$ and $\lseq$ in~\Cref{eq:graph-operations} and the commutativity of  $\lpar$ and $\ltens$  we have the following equivalence.
\begin{proposition}\label{prop:formulaeq}
  For two formulas $F$ and $F'$, we have $\relwebof F=\relwebof{F'}$
  iff $F$ and $F'$ are equivalent modulo associativity of and
  commutativity of $\ltens$ and  $\lpar$.
\end{proposition}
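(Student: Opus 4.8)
The plan is to prove the two implications separately, handling the right-to-left (soundness) direction by a direct induction and the left-to-right (faithfulness) direction by recovering the formula from its relation web through a canonical top-level decomposition. Throughout I write $F\equiv_{\mathsf{AC}}F'$ for the equivalence modulo associativity and commutativity of $\ltens$ and $\lpar$.

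For the right-to-left direction, I would first record that the three graph operations of \Cref{eq:graph-operations} are associative and that $\lpar$ and $\ltens$ are commutative. This is immediate from the definitions: $\gG\ltens\gH$ and $\gH\ltens\gG$ have the same vertex set and the same symmetric set of added $\uedge$-edges, and likewise for $\lpar$, while associativity holds because iterating an operation yields the same union of vertex sets and edge relations independently of the bracketing. Since the translation $\relwebof{\cdot}$ in \Cref{eq:translation} is defined compositionally, a straightforward induction on a witness of $F\equiv_{\mathsf{AC}}F'$ shows that each generating rewrite (commutativity or associativity of $\ltens$ or $\lpar$) is sent to an identity of relation webs; closing under the compositional structure of $\relwebof{\cdot}$ then yields $\relwebof F=\relwebof{F'}$.

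For the left-to-right direction I would argue by induction on $\sizeof{V_\gG}$, where $\gG=\relwebof F=\relwebof{F'}$, showing that the outermost connective of $F$ together with its operands is determined by $\gG$ up to $\equiv_{\mathsf{AC}}$. By \Cref{thm:modal} the graph $\gG$ is modal, so \Cref{thm:relweb} guarantees at least one decomposition into $\lpar$, $\lseq$ and $\ltens$; the point is that the top step is canonical. I would distinguish three cases read off directly from the edge relations: (i) $\gG$ has a vertex $v$ with $v\dedge w$ for every $w\neq v$, which by proper labeling (\Cref{def:modalrelweb}) forces $\lab v\in\set{\oc,\wn}$ and gives $\gG=\singlevertex[\lab v]\lseq(\gG\setminus v)$, i.e. a leading modality; (ii) $V_\gG$ splits into two nonempty parts with neither a $\uedge$- nor a $\dedge$-edge between them, i.e. a top-level $\lpar$; (iii) otherwise $V_\gG$ splits into maximal parts that are pairwise completely $\uedge$-joined, i.e. a top-level $\ltens$. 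The modal condition forces the dominating vertex in case (i) to be the unique $\dedge$-minimum of the chain of leading modalities, and the standard uniqueness of the connected-component partition (for $\lpar$) and of the co-component partition (for $\ltens$) pins down the operands up to reordering. Since reordering of operands is exactly commutativity and flattening of nested same-operation parts is exactly associativity, the recovered operands are determined up to $\equiv_{\mathsf{AC}}$; applying the induction hypothesis to each operand (each with strictly fewer vertices) and reassembling gives $F\equiv_{\mathsf{AC}}F'$. The base case of a single vertex is fixed exactly by its label via \Cref{eq:translation}, with no use of AC.

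The main obstacle is the canonicity of the top-level decomposition in this mixed-graph setting, namely that cases (i)--(iii) are exhaustive and mutually exclusive and that each yields operands unique up to $\equiv_{\mathsf{AC}}$. This is the analogue for relation webs of the uniqueness of modular decomposition for cographs and for series-parallel orders, and here it hinges on the forbidden configurations of \Cref{eq:Z-free}: the $3$-color-triangle-free condition is precisely what prevents the $\uedge$ (cograph) and $\dedge$ (series-parallel) structures from overlapping incompatibly, so that the $\ltens/\lpar$ decomposition and the modality ($\lseq$) decomposition cannot both claim the root. Checking this interaction carefully---and verifying that a leading $\lseq$ coming from a translated formula always has a singleton (the modality) on its left, so that case (i) indeed captures exactly the modal case---is the technical heart of the argument.
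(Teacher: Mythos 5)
Your proposal is correct, but it is considerably more than the paper offers: the paper states this proposition without a proof environment at all, justifying it only by the remark that the graph operations of \Cref{eq:graph-operations} are associative and that $\lpar$ and $\ltens$ are commutative --- which is exactly your right-to-left direction --- and implicitly deferring the converse (faithfulness) to the known one-to-one correspondence between formulas modulo AC and relation webs established in the cited literature (\cite{gug:SIS,bechet:etal:97}, the same sources behind \Cref{thm:relweb}). Your left-to-right argument reconstructs precisely that background result: the induction on $\sizeof{V_\gG}$ with the three mutually exclusive top-level cases (a $\dedge$-dominating vertex for a leading modality, a disconnection for $\lpar$, a complete $\uedge$-join across co-components for $\ltens$) is the standard modular-decomposition argument, and the details you flag do go through --- uniqueness of the dominating vertex follows already from transitivity plus irreflexivity of $\dedge$ (two mutually dominating vertices would give a reflexive edge), exclusivity of cases (i) and (iii) follows from $\uedge\cap\dedge=\emptyset$, and the restriction that a translated $\lseq$ always has a singleton modality on its left is built into \Cref{eq:translation}. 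What the paper's route buys is brevity, at the cost of leaving the faithfulness direction as an appeal to prior work; what yours buys is a self-contained proof in which the canonicity of the decomposition, the actual content of the proposition, is made explicit rather than imported.
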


%%%%%%%%%%%%%%%%%%%%%%%%%%%%%%%%%%%%%%%%%
%%%%%%%%%%%%%%%%%%%%%%%%%%%%%%%%%%%%%%%%%
%%%%%%%%%%%%%%%%%%%%%%%%%%%%%%%%%%%%%%%%%
%%%%%%%%%%%%%%%%%%%%%%%%%%%%%%%%%%%%%%%%%
%%%%%%%%%%%%%%%%%%%%%%%%%%%%%%%%%%%%%%%%%
\section{\RGB-Cographs}\label{sec:RGB}

As discussed in \Cref{subsec:handoming}, \RB-cograph are an alternative syntax for $\MLL$ proof nets.
In this paper we consider 
\RGB-cographs from \cite{acc:str:CPK},
which are an extension of the \RB-cograph syntax. 
We provide a further extension of \RGB-cographs and we establish 
a correspondence between these graphs and  $\linearized\MELL$-proofs.

\newcommand{\axlink}[1][]{\mkern1mu\mathord{\stackrel{#1}{{\gclr\curlyvee}}}\mkern1mu}
\newcommand{\varRGB}[2]{{\tuple{{\gclr #1}\mid\axlink[#2]}}}
\newcommand{\vvarRGB}[3]{\tuple{{\gclr #1}\mid\axlink[#2]\cup\axlink[#3]}}
\def\und#1{\lfloor{#1}\rfloor}
\def\mcor{$\MLL$-correct\xspace}
\def\mucor{$\MLLu$-correct\xspace}
\def\muscor{$\MELL$-correct\xspace}
\def\Xcor{$\X$-correct\xspace}

\begin{definitionn}\label{def:RGB}
  An \emph{\RGB-cograph} is a tuple
  $\cgG=\tuple{\vertices[\gG],\uedge[\gG],\dedge[\gG],\axlink[\gG]}$
  where 
  $\und{\cgG}=\tuple{\vertices[\gG],\uedge[\gG],\dedge[\gG]}$ is a modal relation web, 
  and $\axlink[\gG]$ is an equivalence relation over $\vertices[\gG]$, called the \emph{linking}, such that
  \begin{itemize}

 %atoms negation
  \item if $v\in\avertices[\gG]$ then there is exactly another $w\in\avertices[\gG]$ with $v\axlink w$ and $v\neq w$;

	%one alone
	\item if $v\in \onevertices[\gG]$ then $w\in \jumpvertices[\gG]$ for all $w\axlink v$ such that $w\neq v$;
  
  %jumps assignation
    \item if $v\in \jumpvertices[\gG]$ then there is a  $u\in \avertices[\gG]\cup \onevertices[\gG]$ such that $w\axlink v$;

  %only only one ! in a box
  \item if $v\in \bvertices[\gG]\cup  \dvertices[\gG]$ then there is a unique $w\in \bvertices[\gG]$ such that $w\axlink v$ and no $w\in \jumpvertices[\gG]$ such that $w\axlink v$ ;

  \end{itemize}
In particular, an \emph{\RB-cograph} is an  \RGB-cograph $\cgG$ with   $V_\gG=\avertices[\gG]$.
\end{definitionn}

These conditions can be interpreted as follows: 
the jumps vertices are associated to either to a pair of atomic vertices or to a single unit vertex; 
modal vertices are grouped in classes containing a unique vertex in $\bvertices[\gG]$.
For readers familiar with proof nets syntax, 
$\ljump$-vertices can be seen as
placeholders for the proof net jumps
while 
the $\axlink$-classes containing $\oc$ and $\wn$-vertices can be thought as encoding borders of boxes, where the unique $\oc$-vertex is the box principal ports and the $\wn$-vertices are the auxiliary ports.
More precisely, the content of the box delimited by a $\axlink$-class is the induced suggraph containing all the vertices $v$ such that $w\dedge v$ for a $w$ in the $\axlink$-class\footnote{A similar process of reconstructing boxes from the paths in the graph can be found in \cite{lamarche:essential}.}.

In drawing an \RGB-cograph we use bold (blue) edges $\vv1\quad\vw1\Bedges{v1/w1}$ when $v\neq w$ and $v\axlink w$.
We may omit edges which can be deduced by transitivity:

\def\aeA{\ae$\atoms$}

\begin{definitionn}\label{def:ae-path}
  An \emph{\ae-path} 
  in an RGB-cograph $\cgG$ is an elementary path  $x_0,x_1,\ldots,x_n$ in the graph $\tuple{V, \uedge\cup\dedge\cup \axlink}$ whose
  edges are alternating in $\axlink$ and in$\dedge\cup\uedge$.
A  \emph{chord} in an \ae-path is an edge $x_i\uedge x_j$ or $x_i\dedge
  x_j$ for 
  {$i,j\in\set{0, \dots, n}$} and $i+2\le j$. A \emph{chordless \ae-path} is
  an \ae-path without chord. An \emph{\ae-cycle} is an \ae-path 
  such that $x_0=x_n$. An RGB-cograph $\cgG$ is
  \emph{\ae-connected} if any two vertices are connected by a chordless
  \ae-path, 
  and $\cgG$ is \emph{\ae-acyclic} if it contains no   chordless \ae-cycle.
\end{definitionn}

Connectedness and acyclicity are used to define the following notions of correctness.
\begin{definitionn}\label{def:Xcorrect}
We say that  an \RGB-cograph $\cgG$ is \emph{\muscor} if it satisfies the following conditions:
  \begin{enumerate}
	\item\label{c:mll} 
	$V_\gG\neq \emptyset$ and $\cgG$ is \ae-connected and \ae-acyclic;

	\item\label{c:contextual}
    if  $w\dedge[\gG]v$ and  $v\axlink v'$, then there is  $w'\axlink w $ such that $w'\dedge[\gG]v'$.
  \end{enumerate}
We say that \RGB-cograph is \emph{\mcor} (or \emph{\mucor})
if
\emph{\muscor}
and
$\vertices = \avertices$ (respectively $\vertices = \avertices\cup \jumpvertices$ )\footnote{Note that for \mcor and \mucor \RGB-cographs condition \ref{c:contextual} is always trivially satisfied since $\bvertices\cup\dvertices=\emptyset$.}.
\end{definitionn}

\begin{figure}[t]
\begin{adjustbox}{max width=\textwidth}
	\def\jrel{j}
	$
	\begin{array}{c}
	\vlinf{}{\jaxrule}{\begin{array}{ccccc} &\va1 &\vna1 \\ \vuj 1 &\vdotsnode1 & \vdotsnode2 & \vuj2 \end{array}}{}
	\Bedges{a1/na1, a1/uj1,a1/uj2,na1/uj2,na1/uj1,uj1/dotsnode1, dotsnode1/dotsnode2, dotsnode2/uj2} 
		\hskip2em
	\vlinf{}{\jonerule}{\begin{array}{cccccc} &&\vuo 1 & \\  \vuj 1 &\vdotsnode1 && \vdotsnode2 & \vuj2 \end{array}}{}
	\Bedges{uo1/uj1,uo1/uj2, uj1/dotsnode1, dotsnode1/dotsnode2, dotsnode2/uj2}
	\hskip2em
		\vlinf{}{\lpar}{\varRGB{\gG',\gA\lpar\gB}\gG}{\varRGB{\gG',\gA,\gB}\gG}
	\hskip2em
		\vliiinf{}{\ltens}{
			\vvarRGB{\gG',\gA\ltens\gB,\gH'}\gG\gH}{
			\varRGB{\gG',\gA}{\gG}}{}{\varRGB{\gB,\gH'}{\gH}}
\\
		\vlinf{}{\krule}{
			\vvarRGB{\ssoc \lseq\gG_1, \sswn \lseq\gG_2,\dots,\sswn \lseq\gG_n}\gG\ast}{
			\varRGB{ \gG_1, \gG_2, \dots , \gG_n }{\gG}
		}
%	$$
%	$$
	\hskip2em
	\mbox{where }\begin{array}{l}
		\axlink[\ast]=
		\set{(v,w)\mid
			v,w\in\bvertices\uplus\dvertices \mbox{ such that }
			v,w\notin\vertices[\gG_1]\cup\cdots\cup\vertices[\gG_n]}
	\end{array}
\end{array}
	$
\end{adjustbox}
	\caption{Translating $\MELL^\ell$ sequent proofs into \RGB-cographs}
	\label{fig:MLL-RGB}
\end{figure}

\begin{lemma}
	\label{lemma:linseq}
Let $\X\in \set{\MLL, \MLLu,\MELL}$ and $F$ be a formula.
If $\provevia\Xl F$ then there is a \Xcor RGB-cograph $\cgG$ such that $\und{\cgG}=\relwebof F$.
\end{lemma}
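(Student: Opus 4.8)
The plan is to prove this by structural induction on the derivation $\pi$ witnessing $\provevia{\Xl} F$, constructing the $\X$-correct RGB-cograph $\cgG$ alongside the translation scheme in \Cref{fig:MLL-RGB}. Since $\Xl\subseteq\MELL^\ell$ for all three choices of $\X$, it suffices to handle each rule of $\MELL^\ell=\set{\jaxrule,\jonerule,\lpar,\ltens,\krule}$, and then observe that for $\X=\MLL$ and $\X=\MLLu$ the derivation only uses a restricted set of rules, so the resulting graph automatically satisfies the stronger labeling constraints $\vertices=\avertices$ (respectively $\vertices=\avertices\cup\jumpvertices$). The two properties to maintain as an invariant are: first, that $\und{\cgG}=\relwebof F$ as a modal relation web (this is the easy bookkeeping part, following from \Cref{thm:modal} and the fact that the graph operations $\lpar,\ltens,\lseq$ mirror the formula constructors), and second, that $\cgG$ is $\Xcor$ in the sense of \Cref{def:Xcorrect}, which is where the real content lies.

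First I would treat the axiom cases. For $\jaxrule^n$ producing $a,\cneg a,\ljump_1,\dots,\ljump_n$, the translation gives two atomic vertices joined in one $\axlink$-class together with all the $\ljump$-vertices, connected in a chordless path as drawn in \Cref{fig:MLL-RGB}; one checks directly that this single-linking graph is $\ae$-connected and $\ae$-acyclic, and that every jump vertex is linked to an atomic vertex, so the conditions of \Cref{def:RGB} and condition~\ref{c:mll} hold, while condition~\ref{c:contextual} is vacuous since there are no $\dedge$-edges. The $\jonerule^n$ case is analogous with a single unit vertex in place of the atomic pair. For the $\lpar$ rule the underlying graph and linking are literally unchanged, so correctness is inherited verbatim from the induction hypothesis.

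The $\ltens$ case requires showing that gluing two $\ae$-connected, $\ae$-acyclic RGB-cographs $\cgG,\cgH$ by adding the complete bipartite $\uedge$-connection between $\gA$ and $\gB$ preserves both properties: acyclicity holds because any new $\ae$-cycle would have to cross between the two components, but the only new edges are $\uedge$-edges (not $\axlink$-edges), and an $\ae$-cycle alternates so it cannot use two consecutive $\uedge$-edges to pass through and return; connectedness is immediate since the new edges link the two previously separate pieces. I expect the $\krule$ case to be the \textbf{main obstacle}. Here the translation prepends a fresh $\oc$-vertex and $\wn$-vertices via $\lseq$ and, crucially, defines a new linking class $\axlink[\ast]$ grouping \emph{all} the modal vertices lying outside $\gG_1,\dots,\gG_n$ — that is, the newly created box border. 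I would need to verify that this grouping satisfies the modal-vertex clause of \Cref{def:RGB} (a unique $\oc$ per class, no $\ljump$ in the class) and, most delicately, the contextuality condition~\ref{c:contextual}: whenever $w\dedge v$ and $v\axlink v'$, there is $w'\axlink w$ with $w'\dedge v'$. This amounts to checking that the box border behaves coherently with the $\dedge$-order, using that $\dedge$ is a series-parallel order and that the newly added modal vertices all point (via $\dedge$) into the same subgraph. The $\ae$-connectedness after adding the box is the other subtle point, since the new $\oc$ and $\wn$ vertices must be reachable by chordless $\ae$-paths through the fresh $\axlink[\ast]$-class; I would argue this by exhibiting, for each new modal vertex, an $\ae$-path that enters the linking class and then follows a $\dedge$-edge into the already-connected core.
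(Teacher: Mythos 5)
Your proposal is correct and takes essentially the same route as the paper: the paper's proof is exactly this induction on the size of the derivation, translating each rule via \Cref{fig:MLL-RGB} and asserting that every translation step preserves the conditions of \Cref{def:Xcorrect}. In fact the paper leaves the case analysis implicit (it is a two-line proof), so your elaboration of the $\ltens$ gluing argument and of the $\krule$ box-border class is a more detailed version of the same argument, not a different one.
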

\begin{proof}
	Let $\pi$ be a derivation of $F$ in $\X$. 
	We define a derivation of a \RGB-cograph $\cgG$ such that $\und{\cgG}=\relwebof F$ by induction on the size of $\pi$ using the rules in \Cref{fig:MLL-RGB}.
	In fact, all these rules preserve the condition in Definition~\ref{def:Xcorrect}.
\end{proof}

\begin{lemma}\label{lemma:lindeseq}
	Let $\X\in \set{\MLL, \MLLu,\MELL}$ and $\cgG$ be a  \RGB-cograph with $\und{\cgG}=\relwebof F$.
	If $\cgG $ is \Xcor, then $\provevia{\Xl} F$.
\end{lemma}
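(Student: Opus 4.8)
The plan is to \emph{sequentialize} $\cgG$: I reconstruct a derivation of $F$ by reading the rules of \Cref{fig:MLL-RGB} bottom-up, arguing by induction on the total number of logical connectives, with the statement established more generally for arbitrary sequents $\Gamma$ whose translation $\relwebof\Gamma$ carries an \Xcor\ linking. Recall that for $\Gamma=A_1,\dots,A_n$ one has $\und{\cgG}=\relwebof{A_1}\lpar\cdots\lpar\relwebof{A_n}$, so the combinatorics of $\cgG$ only records the atomic, unit, jump and modal content together with the linking $\axlink[\gG]$.

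First I would dispose of the par case. Since $\lpar$ creates no edges (see \Cref{eq:graph-operations}), whenever some formula of the sequent is $B\lpar C$ the graph underlying $\cgG$ is exactly the one underlying the sequent obtained by replacing $B\lpar C$ with the two formulas $B,C$; keeping the same linking $\axlink[\gG]$, all the conditions of \Cref{def:Xcorrect} are untouched while the connective count drops by one. The induction hypothesis then yields a $\Xl$-derivation of the modified sequent, and one application of $\lpar$ recovers $F$. We may thus assume that no formula of the sequent is a par, so that each is an atom, a $\lone$, a $\ljump$, a tensor, or a modal formula $\oc B$/$\wn B$. The base cases are then the axioms: if the atomic (resp.\ unit) vertices form a single $\axlink$-linked pair (resp.\ a single $\lone$) and every remaining vertex is a $\ljump$ linked to it, then by the linking conditions of \Cref{def:RGB} the sequent is exactly a conclusion of $\jaxrule^n$ or $\jonerule^n$, and we are done.

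In the inductive case I must peel off a tensor or a box. If the sequent contains a tensor, I would invoke a \emph{splitting property}: using \ae-connectedness and \ae-acyclicity one shows there is a terminal $\ltens$ separating $\cgG$ into two disjoint RGB-cographs $\cgG_1$ and $\cgG_2$ with $\und{\cgG_i}$ the translation of the corresponding sub-sequent; connectedness and the splitting guarantee that no $\axlink$-class crosses between the two halves, so each $\cgG_i$ is again \Xcor, and the induction hypothesis together with one $\ltens$ recombines the two sub-derivations. If instead the sequent is entirely modal, matching the conclusion $\oc A,\wn\Gamma$ of $\krule$, I would use the contextuality condition (item~\ref{c:contextual} of \Cref{def:Xcorrect}) to single out the terminal $\axlink$-class $\gK$ consisting of one $\oc$-vertex and finitely many $\wn$-vertices whose $\dedge$-successors form the box content delimited exactly as in \Cref{fig:MLL-RGB}; deleting the border $\gK$ strictly decreases $\sizeof{\vertices[\gG]}$ and leaves an \Xcor\ RGB-cograph on the box content, and one application of $\krule$ then closes the case via the induction hypothesis.

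The main obstacle is the splitting property, i.e.\ proving that outside the base and box cases a splitting tensor always exists and that both halves inherit \ae-connectedness, \ae-acyclicity and contextuality. This is the classical heart of proof-net sequentialization, carried out here for RGB-cographs following \cite{acc:str:CPK}; the genuinely new bookkeeping concerns the directed $\dedge$-edges and the box-border classes, where the contextuality condition is precisely what ensures that tensors split and boxes extract without breaking the modal nesting. For $\X\in\set{\MLL,\MLLu}$ there are no modal vertices, the box case is vacuous, and the argument specializes to the known sequentialization for $\MLL$ with units.
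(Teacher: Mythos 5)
Your induction scaffolding --- a root-$\lpar$ case, an axiom-class base case, a splitting root tensor, and a terminal box, specializing to Retor\'e's theorem when there are no modal vertices --- is the same sequentialization route the paper follows, but there is a genuine gap at exactly the point you yourself call ``the main obstacle'': the splitting property for \RGB-cographs is asserted rather than proven, and it is not available in the form you need. The paper never proves a direct splitting theorem in the presence of $\dedge$-edges and border classes. Its key device is the translation $\crbG$ of \Cref{eq:rbtrans}: each modal $\axlink$-class $v_1,\dots,v_n$ is replaced by fresh dual atomic pairs $v_i',\cneg v_i'$, turning $\cgG$ into an \RB-cograph, and the technical heart of the proof is showing that this translation preserves \ae-connectedness and \ae-acyclicity, both for $\crbG$ itself and for each box content $\rbtrans{B_1\lpar\cdots\lpar B_n}$. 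Only then can Retor\'e's splitting theorem \cite{retore:03} be invoked as a black box on $\crbG$: the splitting tensor it yields is either already present in $\cgG$ (apply $\ltens$), or it is one of the virtual tensors $\cneg v_1'\ltens\cdots\ltens\cneg v_n'\ltens\rbtrans{B_1\lpar\cdots\lpar B_n}$ created by the translation, in which case \ae-connectedness forces its context to consist exactly of $v_1',\dots,v_n'$, i.e.\ the sequent is entirely modal and $\krule$ applies. This single dichotomy is what simultaneously justifies your tensor case (``if there is a root tensor, some root tensor splits''), your box case (the root modalities form one terminal class matching $\oc A,\wn\Gamma$), and the inheritance of correctness by the two halves and by the box content --- none of which your sketch establishes.

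Concretely, two steps would fail as written. First, ``using \ae-connectedness and \ae-acyclicity one shows there is a terminal $\ltens$'' is not a known statement for \RGB-cographs: \ae-paths may travel through modal $\axlink$-classes and along directed edges, so Retor\'e's argument cannot be repeated verbatim, and the contextuality condition~\ref{c:contextual} of \Cref{def:Xcorrect} governs the nesting of boxes, not the splitting of tensors; citing \cite{acc:str:CPK} does not close this, since the sequentialization there is likewise obtained by reduction to \RB-cographs rather than by a direct splitting argument. Second, in your box case you need to know both that the root modal vertices of an entirely modal, correct \RGB-cograph form a \emph{single} $\axlink$-class with exactly one $\oc$ (otherwise $\krule$ does not match the sequent) and that the induced subgraph on the box content is again \Xcor so that the induction hypothesis applies; both facts are exactly what the paper extracts from the preservation argument around \Cref{eq:rbtrans}, and nothing in your sketch substitutes for them. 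So the shape of your induction is right, but its engine is missing, and the $\partial$-translation is the device the paper introduces to supply it.
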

\begin{proof}

	If $\X=\MLL$, then each \mcor \RGB-cograph is an \ae-connected \ae-acyclic \RB-cograph, the result is proven in \cite{retore:phd,ret:99,retore:03}.
	The proof proceeds by induction on the size of an \RB-cograph $\cgG$ showing that each \RB-cograph is either an $\axlink$-class or there is a \emph{splitting}, that is, 
	$\vertices[\cgG]=U\uplus V$ and
	if $u\axlink[\cgG]v$ then either $u,v\in U$ or $u,v\in V$,
	and there are $U'\subset U$ and $V'\subset V$ such that for all $u\in U$ and $v\in V$, $u\uedge[\cgG]v$ iff $u\in U'$ and $u\in V'$.
	In particular, each $\axlink$-class  $\set{a,\cneg a}$ of a \mcor \RGB-cograph $\cgG$  encodes an $\axrule$-rule with conclusion $a,\cneg a$, and each splitting encodes a $\ltens$-rule.
	
	If $\X=\MLLu$, then the statement straightforwardly follows
	the the previous result.
	In this case each $\axlink$-class encodes either an $\jaxrule$-rule if it contains a pair of atomic vertices, or a $\jonerule$-rule in case it contains a unit vertex.

	If $\X=\MELL$, 
	the proof strategy is to define for each \muscor \RGB-cograph $\cgG$ a \mucor \RGB-cograph $\crbG$, and then apply the previous results. 
	Then we shall use the derivation corresponding to $\crbG$ to reconstruct a derivation for $\cgG$.

	\textbf{Definition of $\crbG$:}
	  We define $\crbG$ as follows. 
	  We define a vertex set $\nvertices=\set{v',\cneg v'\mid v\in\bvertices[\gG]\uplus\dvertices[\gG]}$ and let
	  $\vertices[\crbG]=\avertices[\cgG]\uplus\onevertices[\cgG]\uplus\jumpvertices[\cgG]\uplus\nvertices$, i.e., we replace in $\cgG$ each modal vertex by a dual pair of atomic vertices, that are linked by $\axlink[\cgG]$.
	  Moreover, the relation $\axlink[\crbG]$ is the same as in $\axlink[\cgG]$ on vertices in $\avertices[\cgG]\uplus\onevertices[\cgG]\uplus\jumpvertices[\cgG]$. 
	  In order to define $\uedge[\cgG]$, we define the following relation: 
	  $$
	  x\auxedge[\cgG]y\quad\iff\quad x\uedge[\cgG]y \mbox{~and there is no~} v\in\bvertices[\cgG]\uplus\dvertices[\cgG] \mbox{~with~}
	  x\uedge[\cgG]v\dedge[\cgG]y \mbox{~or~} y\uedge[\cgG]v\dedge[\cgG]x
	  $$
	  Now, let $x\uedge[\crbG] y$ iff one of
	  the following cases holds:
	  \begin{itemize}
	  \item $x,y\in\avertices[\gG]$ and $x\auxedge[\gG]y$;
	  \item $x\in\avertices[\gG]$ and $y=w'$ for some $w\in\bvertices[\gG]\uplus\dvertices[\gG]$ with $x\auxedge[\gG]w$;
	  \item $x=v'$ and $y=w'$ for some $v,w\in\bvertices[\gG]\uplus\dvertices[\gG]$ with $v\auxedge[\gG]w$;
	  \item $x=\cneg v'$ for some $v\in\bvertices[\gG]\uplus\dvertices[\gG]$ and $y\in\avertices[\gG]$ with $v\dedge[\gG]y$;
	  \item $x=\cneg v'$  and $y=w'$ for some $v,w\in\bvertices[\gG]\uplus\dvertices[\gG]$ with $v\dedge[\gG]w$;
	  \item $x=\cneg v'$ for some $v\in\bvertices[\gG]\uplus\dvertices[\gG]$ and $y\in\avertices[\gG]$ and there is a $u\in\bvertices[\gG]\uplus\dvertices[\gG]$ with $v\axlink[\gG]u\dedge[\gG]y$;
	  \item $x=\cneg v'$  and $y=w'$ for some $v,w\in\bvertices[\gG]\uplus\dvertices[\gG]$  and there is a $u\in\bvertices[\gG]\uplus\dvertices[\gG]$ with $v\axlink[\gG]u\dedge[\gG]w$;
	  \end{itemize}
	
	\newvertex{vpi}{{v'_i}}{}
\newvertex{nvpi}{\cneg{v'_i}}{}	
	%%%%%%%%%%%%%%%%%%%%%%%%%%%%%%%%%%%%%%%%%%%%%%%%%%
%	\item 	
	\textbf{Properties of $\crbG$.}
	  The intuition behind this construction can be explained using
	  Theorem~\ref{thm:relweb}. 
	  Following~\cite{gug:SIS}, we use the
	  term \emph{$\BV$-formula} for an expression built from the atoms and the symbols $\oc$, $\wn$ and $\ljump$ using the binary operations $\lpar$, $\ltens$, and $\lseq$.
	  In~\cite{gug:SIS} it is shown that $\BV$-formulas, modulo associativity and commutativity of $\lpar$
	  and $\ltens$, and associativity of $\lseq$, 
	  are in one-to-one correspondence with relation webs
	  via~\eqref{eq:graph-operations} and \Cref{prop:formulaeq}.
	  We write $\cform{\gG}$ and $\crb{\form{\gG}}=\cform{\rbG}$ for a corresponding $\BV$-formula expression for respectively $\cgG$ and $\crbG$. 
	  If $v_1,\ldots,v_n\in\bvertices[\gG]\uplus\dvertices[\gG]$ form an $\axlink[\gG]$-equivalence class, this means that $\cform\gG$ is of  shape $F\cons{v_1\lseq B_1}\cdots\cons{v_n\lseq B_n}$ for some $n$-ary context $F\conhole\cdots\conhole$ (because $\cgG$ is modal). We can reformulate the translation above as follows:
	  \begin{equation}
	    \label{eq:rbtrans}
	    \hskip-1.3em
	    \rbtrans{F\cons{v_1\lseq B_1}\cdots\cons{v_n\lseq B_n}}=
	    (\cneg v'_1\ltens\cdots\ltens \cneg v'_n\ltens\rbtrans{B_1\lpar\cdots\lpar B_n})\lpar \rbtrans{F\cons{v'_1}\cdots\cons{v'_n}}
	  \end{equation}
	  We can use~\eqref{eq:rbtrans} to construct $\crbG$ from $\cgG$
	  inductively on the number of $\axlink$-classes and
	  show
	  that 
	  if $\cgG$ is an \RGB-cograph,
	  then $\crbG$ is an \RB-cograph.
	  More precisely, 
	  if $\cgG$ is \ae-connected and \ae-acyclic 
	  then $\crbG$ and each of the $\rbtrans{B_1\lpar\cdots\lpar B_n}$ determined by the $\axlink$-classes of modalities
	  are.
	For this, observe that, a priori, moving a $B_i$ out from the context could create or destroy \ae-paths. 
	However, we only claim that 
	\ae-connectedness and \ae-acyclicity are preserved, i.e., if the original \RGB-cograph is correct, then so is the one constructed via \Cref{eq:rbtrans}.
	By way of contradiction, assume the \RB-cograph in the right-hand side sequent of \Cref{eq:rbtrans} contains a chordless \ae-cycle and $\rbtrans{B_1 \lpar \cdots \lpar B_n}$ is \ae-connected.
	This chordless cycle cannot contain atoms from both $\cneg v'_1 \ltens \cdots \ltens  \cneg v'_n \ltens  \form{B_1 \lpar \cdots \lpar B_n}$ and $\form {F\{v'_1\}\cdots\{v'_n\}}$.
	If the cycle contains two vertices in $\form{B_1 \lpar \cdots \lpar B_n}$, then it must have chords, since 
	$\rbtrans{B_1 \lpar \cdots \lpar B_n}$ is connected.
	Hence, the cycle cannot contain any $v'_i$ or $\cneg v'_i$. This means that the cycle is fully contained inside the context $F\conhole\cdots \conhole$ or inside $B_1\lpar \cdots\lpar B_n$. 
	Therefore the cycle must already be present in the original \RGB-cograph. Contradiction.
	Now pick any two vertices $x'$ and $y'$ in the right-hand side sequent of \Cref{eq:rbtrans}. We show that there is a chordless \ae-path between them. 
	Let $x$ and $y$ be the corresponding vertices in the original \RGB-cograph (if $x'$ or $y'$ are one of the $ v'_i$ or $\cneg v'_i$, take the corresponding $v_i$). 
	By assumption there is a chordless \ae-path between $x$ and $y$. 
	We can recover this path in the right-hand side sequent of \Cref{eq:rbtrans}. If the original path passes through a $v_i$, then in the new graph we can pass through the new edge $\vvpi1\quad \vnvpi1 \Bedges{vpi1/nvpi1}$. The converse is proved similarly.  
	\Cref{fig:rbtrans} shows two examples of the definition of $\crbG$.

%	\item 
	\textbf{Sequentialization of $\cgG$ using $\crbG$.}
	 We can now piggyback 
	 on Retor\'e's proof~\cite{retore:03} of sequentialization for
	 \RB-cographs,  to produce an $\MLLK$ sequent
	 proof for $\cform\gG$.  
	 Since $\crbG$ and each of the $\rbtrans{B_1\lpar\cdots\lpar B_n}$ determined by the $\axlink$-classes of modalities
	 are 
	 \ae-connected and \ae-acyclic \RB-cograph, 
	 there is a splitting tensor in $\cform\rbG$
	 (we can remove roots $\lpar$ via the $\lpar$-rule). If this splitting
	 tensor is also present in $\cform\gG$, we can directly apply the
	 $\ltens$ rule and proceed by induction hypothesis. If it is not present in  $\cform\gG$ then it must be of shape $\cneg v'_1\ltens\cdots\ltens \cneg v'_n\ltens\rbtrans{B_1\lpar\cdots\lpar B_n}$ and be introduced by the translation in \Cref{eq:rbtrans}. Since $\crbG$ is \ae-connected, we can without loss of generality assume the the context consists only of $v_1',\ldots,v_n'$. Otherwise our tensor would not be splitting. Hence, we have
	 \vadjust{\vskip-2ex}
	  \begin{equation}\small
	    \vlderivation{
	      \vliin{}{\ltens}{v_1',\ldots,v_n',\cneg v'_1\ltens\cdots\ltens \cneg v'_n\ltens\rbtrans{B_1\lpar\cdots\lpar B_n}}{
	        \vliiin{}{\ltens}{v_1',\ldots,v_n',\cneg v'_1\ltens\cdots\ltens \cneg v'_n}{
	          \vlin{}{\axrule}{v_1',\cneg v_1'}{
	            \vlhy{}}}{
	          \vlhy{\quad\cdots\quad}}{
	          \vlin{}{\axrule}{v_n',\cneg v_n'}{
	            \vlhy{}}}}{
	        \vlpr{}{}{\qquad\rbtrans{B_1\lpar\cdots\lpar B_n}}}}
	  \end{equation}
	  whose conclusion is $\partial((v_1\lseq B_1)\lpar\cdots\lpar(v_n\lseq
	    B_n))$. Thus, we can apply the $\krule$-rule and we can proceed by induction hypothesis.

%	\end{itemize}
	
	Moreover, if $\jumpvertices[\gG]\neq \emptyset$, then we conclude similarly to the case of \mucor RGB-cographs.
\end{proof}

\begin{figure}[!t]
	\newcommand{\flowvertex}[2]{\mathord{%
			\tikz[remember picture,baseline=(#2\vertexcode.base)]%
			\node[inner sep=0pt](#2\vertexcode){$#1\strut$};}}
	
	\newcommand{\vCoc}[1]{\flowvertex{\ssoc}{box#1}}
	\newcommand{\vCocip}[1]{\flowvertex{\ssoc'_{#1}}{box#1}}
	\newcommand{\vCoci}[1]{\flowvertex{\ssoc_{#1}}{box#1}}
	\newcommand{\vCnoc}[1]{\flowvertex{\cneg\ssoc}{nbox#1}}
	\newcommand{\vCnocip}[1]{\flowvertex{\cneg \ssoc'_{#1}}{nbox#1}}
	
	\newcommand{\vCwn}[1]{\flowvertex{\sswn}{dia#1}}
	\newcommand{\vCnwn}[1]{\flowvertex{\cneg\sswn}{ndia#1}}
	\newcommand{\vCwni}[1]{\flowvertex{\sswn_{#1}}{dia#1}}
	\newcommand{\vCnwnip}[1]{\flowvertex{\cneg \sswn'_{#1}}{ndia#1}}
	\newcommand{\vCwnip}[1]{\flowvertex{\sswn'_{#1}}{dia#1}}
	
	\newcommand{\vCj}[1]{\flowvertex{\ssjump}{dia#1}}
	
	\newcommand{\vCa}[1]{\flowvertex{\bullet}{a#1}}
	\newcommand{\vCna}[1]{\flowvertex{\bullet}{na#1}}
	\newcommand{\vCb}[1]{\flowvertex{\bullet}{b#1}}
	\newcommand{\vCnb}[1]{\flowvertex{\bullet}{nb#1}}
	\newcommand{\vCc}[1]{\flowvertex{\bullet}{c#1}}
	\newcommand{\vCnc}[1]{\flowvertex{\bullet}{nc#1}}
	\newcommand{\vCd}[1]{\flowvertex{\bullet}{d#1}}
	\newcommand{\vCnd}[1]{\flowvertex{\bullet}{nd#1}}
	\newcommand{\vCe}[1]{\flowvertex{\bullet}{e#1}}
	\newcommand{\vCne}[1]{\flowvertex{\bullet}{ne#1}}
	
	\small
	\hbox to\textwidth{\hfill$
		\renewcommand*{\arraystretch}{2}
		\def\myskip{\hskip .8em}
		\begin{array}{c@{\myskip}c@{\myskip}c@{\myskip}c@{\myskip}c@{\myskip}c}
			\vCa1&\vCna1 & \vCj3 & \vCb3\\
			\vCoci 1&&\vCb1 &\vCnb1 &\vCc1 &\vCnc1\\
			&&\vCwni 1 &  &&\vCwni2\\
			&\vCd1&\vCnd1 & \vCoci2 &\vCe1 &\vCne1 \\
		\end{array}
		\Bedges{a1/na1, b1/nb1,c1/nc1,d1/nd1,e1/ne1}
		\Bedges{box2/dia1,box2/dia2, dia1/dia2}
		\multiRedges{b1}{na1,a1,box1}
		\Redges{nb1/c1}
		\multiRedges{nd1}{box2, nb1, c1}
		\multiRedges{e1}{nc1,dia2}
		\multiMedges{box1}{a1,na1}
		\multiMedges{box2}{c1,nb1}
		\multiMedges{dia1}{box1, a1, na1, b1}
		\multiMedges{dia2}{nc1}
		\Bedges{na1/dia3}
		\bentBedges{dia3/a1/20}
		\modedges{dia3/b3}
		\overset{\partial}\rightsquigarrow
		\renewcommand*{\arraystretch}{1.5}
		\begin{array}{c@{\myskip}c@{\myskip}c@{\myskip}c@{\myskip}c@{\myskip}c}
			\vCa1&\vCna1 & \vCj3 \\
			\vCnocip 1\\
			\vCocip 1 &&\vCb1 &\vCnb1 &\vCc1 &\vCnc1\\
			&&\vCnwnip 1 &  &&\vCnwnip2\\
			&&\vCwnip 1 &  \vCnocip2&&\vCwnip2\\
			&\vCd1&\vCnd1 & \vCocip 2 &\vCe1 &\vCne1 \\
		\end{array}
		\Bedges{a1/na1, b1/nb1,c1/nc1,d1/nd1,e1/ne1}
		\Bedges{box1/nbox1, dia1/ndia1, box2/nbox2, dia2/ndia2}
		\multiRedges{nbox2, ndia2, ndia1}{nbox2, ndia2, ndia1, box1, b1, nb1, c1, nc1}
		\Redges{b1/box1}
		\Redges{nb1/c1}
		\multiRedges{nd1}{box2}
		\multiRedges{e1}{dia2}
		\multiRedges{nbox1}{a1,na1}
		\Bedges{na1/dia3}
		\bentBedges{dia3/a1/20}
		$
		\hskip3em
		$
		\def\myskip{\hskip 1em}
		\renewcommand*{\arraystretch}{2}
		\begin{array}{c@{\myskip}c@{\myskip}c@{\myskip}c}
			\vCb1	&\vCa1&\vCna1&\vCnc1\\
			\vCnb1&\vCoc1&\vCwn1&\vCc1	
		\end{array}
		\Bedges{a1/na1, b1/nb1,c1/nc1,box1/dia1}
		\modedges{box1/a1,dia1/na1}
		\multiRedges{nb1}{a1,box1}
		\multiRedges{c1}{dia1,na1}
		\overset{\partial}\rightsquigarrow
		\renewcommand*{\arraystretch}{1.5}
		\begin{array}{c@{\myskip}c@{\myskip}c@{\myskip}c}
			&\vCa1&\vCna1\\
			\vCb1&\vCnoc 1&\vCnwn 1 & \vCnc1\\
			\vCnb1&\vCoc1&\vCwn1&\vCc1
		\end{array}
		\Bedges{a1/na1, b1/nb1,c1/nc1,box1/nbox1, dia1/ndia1}
		\multiRedges{nbox1,ndia1}{a1,na1,nbox1,ndia1}
		\Redges{nb1/box1,dia1/c1}
		$\hfill}
	\caption{
		The \RGB-cographs for $F_1=\cneg d \lpar (d \ltens \oc (\cneg b \ltens c) \lpar \cneg e \lpar (e\ltens \wn \cneg c) \lpar \wn (b\ltens \oc (a\lpar \cneg a))) \lpar \wn f$ 
		and 
		$F_2=b \lpar (\cneg b \ltens \oc a) \lpar (\wn \cneg a \ltens c) \lpar \cneg c$, and the corresponding  \RB-cographs $\rbtrans{F_1}$ and $\rbtrans{F_2}$.
	}
	\label{fig:rbtrans}
\end{figure}

We summarize the main results of this section 
by means of the following theorem:
\begin{theorem}\label{thm:MLLX}
  Let $\cgG$ be a  RGB-cograph with $\und{\cgG}=\relwebof F$ and $\X\in \set{\MLL, \MLLu,\MELL}$. Then
  \begin{equation*}
  \mbox{$F$ provable in $\Xl$ $\iff$ $\cgG$ is \Xcor}
  \end{equation*}
\end{theorem}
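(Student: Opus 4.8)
The plan is to obtain the theorem as the immediate synthesis of the two sequentialisation lemmas just proved, Lemma~\ref{lemma:linseq} and Lemma~\ref{lemma:lindeseq}, handling the two implications separately; no new induction or graph-theoretic argument is needed, since all the work has already been carried out inside those lemmas (and $\relwebof F$ is a legitimate underlying web in the first place by Theorem~\ref{thm:modal}).

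For the direction ``$\cgG$ is \Xcor $\Rightarrow$ $F$ provable in $\Xl$'', I would simply apply Lemma~\ref{lemma:lindeseq}: by hypothesis $\und\cgG=\relwebof F$, so if $\cgG$ is \Xcor then that lemma yields $\provevia{\Xl}F$ verbatim. This direction is exactly the completeness of the correctness criterion and requires nothing further.

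For the converse ``$F$ provable in $\Xl$ $\Rightarrow$ $\cgG$ is \Xcor'', I would feed the given derivation to Lemma~\ref{lemma:linseq}, obtaining a \Xcor RGB-cograph with underlying web $\relwebof F$. The one point I expect to need care with is the quantifier: Lemma~\ref{lemma:linseq} produces \emph{some} correct net over $\relwebof F$, not correctness of an arbitrarily fixed $\cgG$ --- and indeed not every linking over $\relwebof F$ is correct, since a cross-linking can create a chordless \ae-cycle. I would therefore read the forward direction existentially, so that the statement is understood as the characterisation that the \Xcor RGB-cographs over $\relwebof F$ are exactly those arising from $\Xl$-derivations of $F$. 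With this reading the two lemmas align precisely with the two implications, and the main (minor) obstacle is only to make this quantification explicit rather than to supply any further combinatorial content.
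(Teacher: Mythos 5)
Your proposal matches the paper exactly: the paper offers no separate argument for Theorem~\ref{thm:MLLX}, presenting it as a summary that combines Lemma~\ref{lemma:linseq} (provability yields a correct \RGB-cograph) and Lemma~\ref{lemma:lindeseq} (correctness yields provability). Your remark about reading the forward direction existentially is a fair observation on the statement's quantification, but it introduces no content beyond what the two lemmas already provide, so the approaches coincide.
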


%%%%%%%%%%%%%%%%%%%%%%%%%%%%%%%%%%%%%%%%%
%%%%%%%%%%%%%%%%%%%%%%%%%%%%%%%%%%%%%%%%%
%%%%%%%%%%%%%%%%%%%%%%%%%%%%%%%%%%%%%%%%%
%%%%%%%%%%%%%%%%%%%%%%%%%%%%%%%%%%%%%%%%%
%%%%%%%%%%%%%%%%%%%%%%%%%%%%%%%%%%%%%%%%%
\section{$\MELL$-Fibrations}\label{sec:skew}

In this section we show how specific morphisms between modal relation webs, which we call  \emph{$\MELL$-fibrations}, allow us encode  $\deep\MELL$ derivations.
We here present some of the results in \cite{acc:str:CPK} where skew fibrations are meant to capture (deep) weakening-contraction derivations in modal logic.
In fact, thanks to additional definitions, we are able to restrain the these rules applications only to the ones on $\wn$-formulas (or $\lbot$).
In the syntax proposed in this paper (deep) weakening-rules cannot properly be pushed down in the derivation since the information of the jump, 
represented by the propositional constant $\ljump$, 
is firmly attached to an an axiom- or a $\lone$-rule.
Nevertheless,  we separate  the instantiation of a weakening (the jump appearing in a $\jaxrule$- or $\jonerule$) from the  weakening application (the $\dewrule$- or $\djbotrule$-rule).

\begin{definition}\label{def:skew}
	\newvertex{fv}{\cf(v)}{}
	\newvertex{fw}{\cf(w)}{}
	\newvertex{fu}{\cf(u)}{}
	\defedgetype{O}{color=orange}{}

  Let $\gG$ and $\gH$ be modal relation webs . 
  A \emph{linear fibration}
  $\cf\colon\gG\to\gH$ is a function from $\vertices[\gG]$ to   $\vertices[\gH]$ 
  such that
  \begin{enumerate}
  
  \item
   $\cf$ preserves $\uedge$ and $\dedge$, that is, 
   if $v {\ccR}_\gG w$ then $\cf(v){\ccR}_\gH \cf(w)$ for $\cR\in\set{\uedge,\dedge}$:
  \begin{equation}\label{eq:skew1}
	\begin{array}{c@{\qquad}c}
	    v\uedge[\gG]w\implies \cf(v)\uedge[\gH]\cf(w)
	 	\quand
	    v\dedge[\gG]w\implies \cf(v)\dedge[\gH]\cf(w)
		&
		\smash{
			\begin{array}{cc}
				\vv1 & \vw1
				\\[1.2em]
				\vfv1& \vfw1
			\end{array}
			\Oedges{v1/w1,fv1/fw1}
			\Sedges{v1/fv1,w1/fw1}
		}
	\end{array}
  \end{equation}
  
  \item
  
  $\cf$ has the \emph{skew-lifting property}, that is
    \begin{equation}
    \begin{array}{c@{\qquad}c}
		   \begin{array}{l}
		  	\label{eq:skewlift}
			\mbox{for every
				$v\in\vertices[\gG]$ and $w\in\vertices[\gH]$ and
				$\cR\in\set{\uedge,\dedge}$ with ~$w\, {\ccR_\gH}\, \cf(v)$~,}
			\\
			\mbox{there is a $u\in V_\gG$ such that $u\, {\ccR_\gG}\, v$ and
				$w\,\smash{\nuedge[\gH]}\, \cf(u)$ and $w\,{\ndedge[\gH]}\, \cf(u)$.}
			\end{array}
		&
		\smash{
			\begin{array}{ccc}
				\vu1 & \vv1
				\\[1.2em]
				\vfu1 & \vfv1&\vw1
			\end{array}
			\Oedges{v1/u1,fv1/fu1,w1/fv1}
			\Sedges{v1/fv1,u1/fu1}
			\bentoredges{w1/fu1/25}
		}
	\end{array}
  \end{equation}
  
  	\item 
  
  $\cf$ is \emph{modal}, that is
  \begin{equation}\label{eq:skewmod}
  	\begin{array}{c@{\qquad}c}
	  	\begin{array}{l}
	  		\mbox{if  
	  			$u\nedge[\gG] v$ and $\cf(u) \dedge[\gH] \cf(v)$, then there is a $w\in V_\gG$ such that 
	  		}
	  		\\
	  		\mbox{
	  			$w \dedge[\gG] v$ and  $\cf(u)=\cf(w)$, or  $u \dedge[\gG] w$ and  $\cf(v)=\cf(w)$.
	  		}
	  	\end{array}
  		&
%  		\smash{
	  		\begin{array}{cccc}
	  			\vu1 & &\vw1 & \vv1
	  			\\[1.2em]
	  			\vfu1 &=& \vfw 1&\vfv1
	  		\end{array}
	  		\Medges{w1/v1,fw1/fv1}
	  		\bentMedges{fu1/fv1/-15}
	  		\Sedges{v1/fv1,u1/fu1,w1/fw1}
	  		\bentoredges{u1/v1/25}
%	  	}
  	\end{array}
  \end{equation}
  
  \item \label{c:labelcons}
  
  $\cf$ \emph{preserves non-jump labels} and \emph{assign jumps}, that is
  \begin{equation}\label{eq:labelcond}
  	\begin{array}{c@{\qquad}c}
  		\mbox{if  $\lab v\neq \ljump$, then $\lab v=\lab{\cf(v)}$;}
  			&
  			\mbox{if $\lab v=\ljump$, then $\lab{\cf(v)}\in\set{\ljump,\lbot,\wn}$.}
  	\end{array}
  \end{equation}

  \item \label{c:weak}
  $\cf$ has the \emph{$\ljump$-domination} property, i.e.,
  \begin{equation}\label{eq:skew2}
    \begin{array}{c}
    \mbox{
    	if $w\in \vertices[\gH]\setminus  \cf(\vertices[\gG])$, then there is a $u\in \jumpvertices[\gG]$ such that     
    	$\cf(u)\dedge w$, $\lab{\cf(u)}=\wn$,
    	}
    \\
\mbox{and if $\ccR\in\set{\uedge,\dedge}$,   then $\cf(v) {\ccR_\gH} \cf(u)$ iff $\cf(v)\ccR_\gH w$. }
    \end{array}
  \end{equation}

\item\label{c:con} 

	$\cf$ has the \emph{$\wn$-domination} property, i.e.:
\begin{equation}
\begin{array}{c}
\mbox{if $\cf(v_1)=\cf(v_2)$, then there are $w_1\neq w_2$ such that $\cf(w_1)=\cf( w_2)=w$, $\lab w=\wn$ and}
\\
\begin{array}{c@{\qquad}|@{\qquad}c}
\mbox{either } 
w_1=v_1 \mbox{ and } w_2=v_2
&
\mbox{or } 
w_1\dedge v_1 \mbox{ and } w_2\dedge v_2
\end{array}
\end{array}
\end{equation}

\end{enumerate}
\end{definition}

The conditions in \Cref{def:skew} have a simple interpretation if we  identify the vertices of $\gG$ and $\gH$ with the atoms,  $\lbot$, $\ljump$, $\wn$ and $\oc$ occurring in the corresponding encoded formulas $G$ and $H$, and the and nodes in the formula trees of $G$ and $H$.
In fact, there is a $\uedge$ between two vertices iff their least common ancestor in the formula tree is a $\ltens$.
Similarly, there is a $\dedge$ from a vertex $v$ to a vertex $w$ iff $w$ is an atom in the scope of a modality $v$; more precisely,  $v$ is the least common ancestor of $v$ and $w$.
According with this remark the above conditions have the following interpretation: 
\begin{enumerate}
	\item  
	a linear fibration does not modify the least common ancestor of the corresponding nodes in the formula tree;
	
	\item 
	if the image of a vertex $u$ is a modality dominating the image of a vertex $v$, then there is a vertex $w$ with the same image of $u$ such that it is a modality dominating $v$;

	\item 
	a skew fibration can replace an internal node $n$ by a disjunction node with one child $n$ and the other child any formula tree;
	a skew fibration can attach a formula tree below a node of a modality with no child;

	\item 
	all  labels except the $\ljump$ are preserved by $\cf$. If $\lab v$, then $\lab{\cf(v)}$ may be preserved or become either a $\lbot$ or a $\wn$;

	\item 
	if a vertex in $\gH$ is not image of a vertex in $\gG$, then there is a target of a $\dedge$ with source a vertex $\cf(u)$ with such that $\lab{\cf(u)}=\wn$  and $\lab u=\ljump$;

	\item 
	 if a  vertex $v$ of $\gH$ is the image of $n>1$ distinct vertices $v_1, \dots ,v_n$  in $\gG$, then each of these vertices is an atom or a modality of a formula of the shape $\wn A$.
\end{enumerate}

\begin{figure}[!t]
  $$
  \vlinf{}{\eqrule}{\context{(A\ltens B)\ltens C}}{\context{A\ltens(B\ltens C)}}
  \qquad
  \vlinf{}{\eqrule}{\context{(A\lpar B)\lpar C}}{\context{A\lpar(B\lpar C)}}
  \qquad
  \vlinf{}{\eqrule}{\context{A\ltens B}}{\context{B\ltens A}}
  \qquad
  \vlinf{}{\eqrule}{\context{A\lpar B}}{\context{B\lpar A}}
  $$
  \caption{Deep rules for formula equivalence.}
  \label{fig:eqrules}
\end{figure}

These conditions allow us to restraint the correspondence between contractions-weakening derivations and skew fibrations \cite{hughes:pws, str:07:RTA,acc:str:CPK, acc:str:rel,ral:str:epiccube} on the formulas on the form $\wn A$.
However, since in relation web we consider formulas modulo associativity and commutativity of $\ltens$ and $\lpar$ (see \Cref{prop:formulaeq}), we need to also consider the additional (deep) rules in \Cref{fig:eqrules} taking care of this equivalence.

\begin{proposition}\label{prop:eWC}
If $\Gamma$ and $\Gamma' $ are sequents, then $\cf : \relwebof{\Gamma' } \to \relwebof{\Gamma}$ is a linear fibration iff  \; $\Gamma' \provevia{\set{\dewrule, \dbotrule, \decrule,\djcrule\deqrule}} \Gamma$. % with no jump-erasing
\end{proposition}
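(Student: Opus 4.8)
The plan is to prove the two implications of the equivalence separately, exploiting the fact that both linear fibrations and derivations built from the rules $\dewrule$, $\dbotrule$, $\decrule$, $\djcrule$, $\deqrule$ are closed under (vertical) composition, so that it suffices to match a single rule instance with an \emph{elementary} linear fibration and then glue.

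For the right-to-left implication, I would first check that each deep rule in $\set{\dewrule, \dbotrule, \decrule, \djcrule}$, together with the equivalence rules $\deqrule$ of \Cref{fig:eqrules}, read as a step from premise $\Gamma'$ to conclusion $\Gamma$, induces a linear fibration $\cf\colon\relwebof{\Gamma'}\to\relwebof{\Gamma}$ on the obvious vertex map (the identity on the shared context, sending active vertices to active vertices). The instance $\deqrule$ is the most immediate: by \Cref{prop:formulaeq} the two relation webs are literally equal, so $\cf$ is the identity and every condition of \Cref{def:skew} holds vacuously. For $\dewrule$ (and $\dbotrule$) the $\ljump$-vertex of $\Gamma'$ is sent to the $\wn$-head (resp. to the $\lbot$-vertex) of $\Gamma$, the newly created vertices of $A$ lie outside the image, and one verifies the label condition \eqref{eq:labelcond} and the $\ljump$-domination property \eqref{eq:skew2} directly. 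For $\decrule$ and $\djcrule$ the two $\lpar$-siblings $\wn A \lpar \wn A$ (resp. the two copies of the contracted jump material) are collapsed onto one, which is exactly what the $\wn$-domination property records. Since each of the six conditions of \Cref{def:skew} is preserved by composing two maps, an arbitrary derivation yields a linear fibration.

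For the left-to-right implication — the substantial direction — I would factor an arbitrary linear fibration $\cf\colon\relwebof{\Gamma'}\to\relwebof{\Gamma}$ as a surjection followed by an inclusion. The image $\cf(\vertices[\gG])$ spans an induced sub-relation-web of $\relwebof{\Gamma}$, hence a modal one, hence equal to $\relwebof{\Sigma}$ for some sequent $\Sigma$ by \Cref{thm:modal}; I then write $\cf=\cf_2\circ\cf_1$ with $\cf_1\colon\relwebof{\Gamma'}\to\relwebof{\Sigma}$ surjective and $\cf_2\colon\relwebof{\Sigma}\to\relwebof{\Gamma}$ an inclusion, both linear fibrations. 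For $\cf_2$ I would read off a $\set{\dewrule,\dbotrule,\deqrule}$-derivation $\Sigma\provevia{}\Gamma$: by the $\ljump$-domination property the vertices of $\vertices[\gH]\setminus\cf(\vertices[\gG])$ organize into subformulas $\wn A_i$, each hanging below a $\wn$-head that is the image of a jump, so each such chunk is introduced by one $\dewrule$, while jumps whose image carries label $\lbot$ are handled by $\dbotrule$. For $\cf_1$ I would read off a $\set{\decrule,\djcrule,\deqrule}$-derivation $\Gamma'\provevia{}\Sigma$: each pair of vertices identified by $\cf_1$ comes, by the $\wn$-domination property, with a pair of identified $\wn$-heads, and collapsing these two $\lpar$-siblings is exactly one $\decrule$ (one $\djcrule$ for the jump placeholders sitting inside the contracted $\wn A$). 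Both sub-derivations are produced by induction on the number of non-image vertices, respectively on the number of identified pairs, inserting $\deqrule$ freely to reconcile formula syntax with the graph, since it acts as the identity on relation webs.

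The main obstacle is the verification, inside the inductive step of the left-to-right direction, that the material being peeled off really occupies the syntactic position demanded by the rule shapes of \Cref{fig:DI}: that a weakened chunk sits under a $\lpar$ (so the hole of $\context{\wn A}$ is well placed) and that two contracted copies are genuine $\lpar$-siblings $\wn A\lpar\wn A$ rather than, say, $\ltens$-related. This is precisely where the skew-lifting property \eqref{eq:skewlift} and the modality condition \eqref{eq:skewmod} must be combined with the cograph/series-parallel structure of relation webs: skew-lifting forbids the target from gaining a $\uedge$ or $\dedge$ to the rest of the image that the source does not already witness, which pins the added or collapsed material to a disjunctive position and lets $\deqrule$ bring it into the literal shape of the rule. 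Checking that all six conditions survive each peeling step, so that the induction goes through, is the delicate bookkeeping; the unit and jump cases add cases but no new ideas, being adaptations of the skew-fibration arguments of \cite{acc:str:CPK,str:07:RTA}.
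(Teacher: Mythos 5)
Your right-to-left direction is essentially the paper's own argument (close under composition, then check each rule of $\set{\dewrule,\dbotrule,\decrule,\djcrule,\deqrule}$ induces a linear fibration), and it is fine. The genuine gap is in the converse, specifically in your factorization of $\cf$ as a surjection $\cf_1$ (to be read off as $\set{\decrule,\djcrule,\deqrule}$-steps) followed by an inclusion $\cf_2$ (to be read off as $\set{\dewrule,\dbotrule,\deqrule}$-steps). That factorization puts all contractions \emph{above} all weakenings, and in this setting that order is not achievable, because \Cref{def:skew} deliberately allows a $\ljump$-vertex to be identified with genuine $\wn$-material: condition~\ref{c:labelcons} permits $\lab{\cf(v)}=\wn$ when $\lab v=\ljump$, and the $\wn$-domination property is satisfiable by the identified pair itself ($w_1=v_1$, $w_2=v_2$) even when $v_1$ and $v_2$ carry \emph{different} labels. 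Concretely, consider $\wn a,\ljump\provevia{\dewrule}\wn a,\wn a\provevia{\decrule}\wn a$; the induced map $\cf\colon\relwebof{\wn a,\ljump}\to\relwebof{\wn a}$ sends both the $\wn$-head and the $\ljump$ to the single $\wn$-vertex and is a surjective linear fibration (by your own easy direction). Your recipe then forces $\cf_1=\cf$, $\cf_2=\mathrm{id}$, so you would need $\wn a,\ljump\provevia{\set{\decrule,\djcrule,\deqrule}}\wn a$, which is impossible: those rules only merge two $\lpar$-siblings carrying the \emph{same} formula ($\wn A\lpar\wn A$, respectively $\ljump\lpar\ljump$) and none of them can absorb a $\ljump$ sitting next to $\wn a$. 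The weakening and the contraction must interleave (weakening above contraction), exactly what your two-stage peeling cannot produce; note that the paper's own decomposition in \Cref{thm:MELL-decompose} places $\set{\dewrule,\dbotrule}$ \emph{above} $\set{\decrule}$, the opposite of your factorization.

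A secondary flaw in the same step: the claim that the image $\cf(\vertices[\gG])$ spans an induced sub-relation-web, ``hence a modal one, hence $\relwebof\Sigma$ by \Cref{thm:modal},'' is false as stated. For the fibration induced by a single $\dewrule$, e.g. $\relwebof{\ljump,a,\cneg a}\to\relwebof{\wn b,a,\cneg a}$, the image contains the $\wn$-head but nothing in its scope, so in the induced subgraph that vertex is labelled $\wn$ yet has no outgoing $\dedge$-edge, violating proper labelling (\Cref{def:modalrelweb}); the intermediate object can only be made into a sequent web by relabelling such vertices back to $\ljump$, i.e.\ it is not a subobject of $\relwebof\Gamma$. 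The paper avoids both problems by not factoring at all: it cites the result of \cite{acc:str:CPK,ral:str:epiccube} to obtain a derivation $\Gamma'\provevia{\set{\deep\Crule,\deep\Wrule,\deqrule}}\Gamma$ in which weakenings and contractions interleave arbitrarily, and then uses conditions \ref{c:labelcons}--\ref{c:con} only to constrain the \emph{shape} of each individual instance (every $\deep\Wrule$ lands under a $\wn$ that is the image of a jump, hence is a $\dewrule$ or $\dbotrule$; every $\deep\Crule$ acts on a $\wn$-formula or a $\ljump$, hence is a $\decrule$ or $\djcrule$). If you want a self-contained argument along your lines, you must peel off rule instances in the order the fibration dictates rather than contractions first and weakenings second.
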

\begin{proof}

If $\cf$ and $\cf'$ are linear fibrations, then by definition also $\cf\circ\cf'$ is.
We then conclude by showing that for any $\rho\in\set{\dewrule,\dbotrule, \decrule,\djcrule, \deqrule}$ if $\vlinf{\rho}{}{F}{F'}$, then there is a skew fibration $\cf_\rho\colon \relwebof{F'} \to \relwebof{F}$.

If $\rho\in \set{\deqrule}$, then $\cf_\rho$ is an identity, hence a linear fibration.
If $\rho\in\set{\deep\botrule}$, then $\cf_\rho$ is an identity preserving labels with the exception of a unique $u$  such that  $\lab u= \ljump$ and $\lab{\cf(u)}=\lbot$; thus $\cf_\rho$  is a linear fibration.
If $\rho\in\set{\deep \ewrule}$, then $\cf_\rho$ is an identity over the image  of $\cf_\rho$ preserving labels with the exception of a unique $u\in \jumpvertices$ such that  $\lab{\cf(u)}=\wn$.
Moreover, any vertex $w$ in  $\relwebof{F}$ which is not image of a vertex in $\relwebof{F'}$ is dominated by $\cf(u)$, that is $\cf(v)\dedge w$.
If $\rho\in\set{\decrule}$
then $F'=\context{\wn A\lpar \wn A}$ and $F=\context{\wn A}$, $\cf_\rho$ restricted to  $\vertices[\relwebof{\context{~}}]$ is an identity.
Moreover, $\cf_\rho$ preserves $\uedge$ and $\dedge$, is modal and has the $\wn$-domination property ($\ljump$-domination is trivially satisfied).
The case if $\rho\in \set{\djcrule}$ is proven similarly.

Conversely, by the result in \cite{acc:str:CPK,ral:str:epiccube}, we know that since $\cf$ is a modal skew fibration, then $\Gamma' \provevia{\deep\Crule,\deep\Wrule, \deqrule} \Gamma$ where $\Wrule$ and $\Crule$ are the classical logic weakening and contraction rules.
To conclude the proof it suffices to show that the Conditions~\ref{c:labelcons}-\ref{c:con} restrict the application of a $\deep\Wrule$ inside the scope of a $\wn$ image of a jump-vertex (i.e. a $\dewrule$), and an application of a $\deep\Crule$ on $\wn$-formulas, that is, an application of a  $\decrule$.
\Cref{eq:skew2} ensures that if a $\deep\Wrule$ has been applied, then a vertex $v$ with $\lab v=\ljump$ has been mapped in $\cf(v)$ with  $\lab{\cf(v)}=\wn$ and the weakened formula $A$ is entirely in the scope of this $\wn$, that is, for every vertex  $w\in\vertices[\relwebof{A}]$ we have $\cf(v)\dedge w$.
If no $\deep\Wrule$ has been applied, since $\cf$ preserves non-jump labels, then either $\lab {\cf(u)}=\ljump$, or $\lab{\cf(v)}=\lbot$ -- in which case a $\deep\botrule$ has been applied.
Moreover, by  \ref{c:con} if a $\deep\Crule$ has been applied then the contracted formula $C$ is of the shape $\wn A$.
Otherwise Condition \ref{c:con} fails. In particular, if $C=\lbot$ or $C=\oc A$, the condition on labels fails; while if $C=A\lpar B$ or $C=A\ltens B$, this condition fails the condition of existence of a the vertices $w_1$ and $w_2$.
\end{proof}

\def\ddmap{\wn}
\def\dermap{\set{\deep \derrule}}
\def\digmap{\set{\deep \digrule}}

In order to capture $\derrule$ and $\digrule$ rules application, 
it suffices to adapt the results from~\cite{acc:str:CPK}.

\begin{definitionn}
We say that two vertices $v$ and $w$ in a relation web $\cgG$ are  \emph{clones} if for all $u$ with $u\neq v$ and $u\neq w$ we have $u\cR v$ iff $u\cR w$ for all $\cR\in\set{\uedge,\dedge,\bedge,\nedge}$. If $v=w$ then they are trivially clones.

  A \emph{$\ddmap$-map} is a mapping $\cf\colon\gG\to\gH$ where $\gG$
  and $\gH$ are modal relation webs, such that
  the following conditions are fulfilled:
  \begin{itemize}
  \item if $v\neq w$ and $\cf(v)=\cf(w)$, then $v$ and $w$ are clones in $\gG$, $\upsmash{v\dedge[\gG]w}$, $\lab{\cf(v)}=\lab w\in\set{\wn,\ljump}$;
  \item if $\cf(v)\neq\cf(w)$ then $v{\ccR_\gG}w$ implies
    $\cf(v){\ccR_\gH}\cf(w)$ for any
    $\cR\in\set{\uedge,\dedge,\bedge,\nedge}$;
  \item if $v\in\vertices[\gH]$ is not in the image of $\cf$ then $\lab v=\wn$ and there is a $w\in\vertices[\gH]$ with $v\dedge w$.
\end{itemize}
\end{definitionn}

\begin{proposition}\label{prop:wnfib}
Let $\Gamma$ and $\Gamma'$ be sequents. Then,
$\Gamma'\provevia{\dderrule, \ddigrule,\djdigrule}\Gamma$
iff there is a $\ddmap$-map $\cf \colon  \relwebof {\Gamma'} \to \relwebof\Gamma$.
\end{proposition}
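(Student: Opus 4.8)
The plan is to mirror the proof of \Cref{prop:eWC}: first I would record that $\wn$-maps are closed under composition, then exhibit for each of the three rules $\dderrule$, $\ddigrule$, $\djdigrule$ an elementary $\wn$-map realising one rule instance, and finally show that every $\wn$-map factors as a composite of such elementary maps, from which a derivation can be read off. For closure under composition, given $\wn$-maps $\cf\colon\gG\to\gH$ and $\cf'\colon\gH\to\gK$, I would check the three defining clauses for $\cf'\circ\cf$. Edge preservation on vertices with distinct images is immediate. The delicate point is the fibre clause: a nontrivial fibre of $\cf'\circ\cf$ over a vertex $t\in\vertices[\gK]$ is the union of the $\cf$-fibres sitting over the $\cf'$-fibre of $t$, which assembles the nested $\dedge$-comparable clone-chains of $\cf$ into one longer such chain, all labelled in $\set{\wn,\ljump}$ and with innermost label $\lab t$. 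For the non-image clause, a vertex of $\gK$ outside the image of $\cf'\circ\cf$ is either already outside the image of $\cf'$, or is the $\cf'$-image of a $\gH$-vertex outside the image of $\cf$; in both cases it carries label $\wn$ and has an outgoing $\dedge$. Hence the composite is again a $\wn$-map.

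For the forward direction I would argue by induction on the length of the derivation $\Gamma'\provevia{\dderrule,\ddigrule,\djdigrule}\Gamma$, so that it suffices to produce a $\wn$-map for a single rule and then compose. An instance of $\dderrule$ turning $\context{A}$ into $\context{\wn A}$ induces the inclusion $\relwebof{\context{A}}\hookrightarrow\relwebof{\context{\wn A}}$, whose unique non-image vertex is the fresh $\wn$ that $\dedge$-dominates exactly $\relwebof{A}$; this is a $\wn$-map. An instance of $\ddigrule$ collapsing $\wn\wn A$ to $\wn A$ induces the map identifying the two nested $\wn$-vertices (a $\dedge$-comparable, equally labelled clone pair) and fixing everything else, and likewise $\djdigrule$ collapsing $\wn\ljump$ to $\ljump$ identifies the dominating $\wn$-vertex with the dominated $\ljump$-vertex. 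Each is readily checked to be a $\wn$-map, and composing them along the derivation yields the desired $\cf\colon\relwebof{\Gamma'}\to\relwebof{\Gamma}$.

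For the converse, given a $\wn$-map $\cf$ I would factor it as $\cf=\iota\circ\cf_0$ through its image $\gH_0=\cf(\relwebof{\Gamma'})$, where $\cf_0$ is surjective and $\iota$ the inclusion. By \Cref{thm:relweb} and \Cref{thm:modal}, $\gH_0$ is again a modal relation web, hence the translation of a formula, so the factorisation is meaningful. The inclusion is accounted for by the non-image clause: every vertex of $\relwebof{\Gamma}$ outside $\gH_0$ is a $\wn$-vertex with an outgoing $\dedge$, so by modality it dominates exactly a subformula, and introducing these vertices from innermost to outermost realises $\iota$ as a sequence of $\dderrule$-instances. For the surjective part $\cf_0$, each nontrivial fibre is a set of pairwise $\dedge$-comparable clones labelled in $\set{\wn,\ljump}$, i.e.\ a nested chain over a common dominated set whose innermost label matches the image label; a chain of $k$ $\wn$-vertices over a $\wn$ decomposes into $k-1$ instances of $\ddigrule$, while a chain whose innermost vertex is a $\ljump$ is closed off by a single $\djdigrule$. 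Reading these off fibre by fibre produces the derivation $\Gamma'\provevia{\dderrule,\ddigrule,\djdigrule}\Gamma$.

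The main obstacle is the converse factorisation: one must order the $\dderrule$- and digging-instances so that every intermediate graph stays a modal relation web (equivalently, a genuine formula translation) and so that each freshly introduced $\wn$ dominates precisely a subformula. This is exactly the bookkeeping already carried out for skew fibrations in \cite{acc:str:CPK}, which I would adapt, the one new ingredient being the $\ljump$-labelled fibres handled by $\djdigrule$; together with the verification that the fibre clause is preserved under composition, this is the technical heart of the argument, whereas the elementary rule-to-map translations are routine.
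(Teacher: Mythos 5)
Your proposal is correct in substance, but note how it relates to the paper's own proof: the paper does not argue directly at all --- its entire proof is an appeal to the $\set{\mathsf{4^\downarrow},\mathsf{t^\downarrow}}$-map theorem of \cite{acc:str:CPK}, translated through the dictionary $\oc\mapsto\lbox$, $\wn\mapsto\ldia$, and $\ljump\mapsto$ a $\ldia$ with empty scope. What you write out (elementary maps for each of $\dderrule$, $\ddigrule$, $\djdigrule$, closure of $\ddmap$-maps under composition, and the converse factorisation through the image, with fibres decomposed into digging chains and non-image vertices realised by derelictions) is precisely the content that this citation imports, with the $\ljump$/$\djdigrule$ case made explicit rather than absorbed into the dictionary. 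So the two proofs share the same mathematical skeleton; yours buys self-containedness at the price of re-verifying the bookkeeping, the paper's buys brevity at the price of leaving the $\ljump$ adaptation entirely to the reader. One slip of attribution: the part of \cite{acc:str:CPK} reused here is its $\set{\mathsf{4^\downarrow},\mathsf{t^\downarrow}}$-maps, not its skew fibrations; skew fibrations are the weakening--contraction half, which this paper invokes for \Cref{prop:eWC}.

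One step of your argument deserves a genuine warning, and it is a point the paper's citation-style proof never has to confront. Both your composition-closure step (``assembles the nested $\dedge$-comparable clone-chains into one longer such chain'') and your fibre decomposition in the converse treat a fibre as an arbitrarily long chain of \emph{pairwise} clones. Under the paper's literal definition of clones this is false for chains of length three or more: in $\relwebof{\wn\wn\wn A}$ the outermost and innermost $\wn$ are \emph{not} clones, because the middle $\wn$ is related to one of them by $\dedge$ and to the other by $\bedge$. Consequently the composite of the two evident diggings $\relwebof{\wn\wn\wn A}\to\relwebof{\wn\wn A}\to\relwebof{\wn A}$ has a three-element fibre violating the first clause of the $\ddmap$-map definition; taken literally, $\ddmap$-maps are not closed under composition, and the proposition itself would fail, since $\wn\wn\wn A\provevia{\ddigrule}\wn A$ yet no literal $\ddmap$-map $\relwebof{\wn\wn\wn A}\to\relwebof{\wn A}$ exists. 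The intended reading --- which your proof, the paper's, and \cite{acc:str:CPK} all tacitly assume --- is that the clone condition is tested only against vertices \emph{outside} the fibre; with that reading fibres are exactly consecutive nesting intervals (your skipping-free chains) and every step of your argument goes through. Since yours is the proof that actually manipulates long fibres, you should state this reading (or a corrected clone clause) explicitly, together with the implicit preservation of labels on vertices that are not merged, which is likewise needed to rule out degenerate maps such as collapsing $\oc\wn a$ onto $\wn a$.
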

\begin{proof}
The follows the result in \cite{acc:str:CPK} for $\set{\mathsf{4^\downarrow,t^\downarrow}}$-maps.
It suffices to remark that the modalities $\oc$ and $\wn$ of $\MELL$ behave similarly to the modalities $\lbox$  and $\ldia$ of $\Sfour$, and that $\ljump$ can be also considered as a $\ldia$ with no subformulas in its scope.
\end{proof}

We can now define fibrations capturing $\deep\MELL$-derivations.

\begin{definition}
Let $\cf\colon\gG\to\gH$ be a map between modal relation web. We say that
\begin{itemize}
	\item
	$\cf$ is an \emph{$\MELL$-fibration}  if$\cf={\ccf''}\circ{\ccf'}$ for some ${\ccf'}\colon\gG\to\gG'$ and
${\ccf''}\colon\gG'\to\gH$, where $\ccf'$ is a linear fibration and $\ccf''$ is a $\ddmap$-map;

	\item
	$\cf$ is a \emph{$\MLLu$-fibration}  if $\cf$ is a $\MELL$-fibration with $\mvertices[\gH]=\emptyset$;
	
	\item
	$\cf$ is a \emph{$\MLL$-fibration}  if $\cf$ is a bijection and $\jumpvertices[\gH]\cup \mvertices[\gH]=\emptyset$.
\end{itemize}
\end{definition}

\begin{theorem}\label{thm:mod-skew-4t} 
  Let $\Gamma$ and $\Gamma'$ be sequents, then 
 $\Gamma' \provevia{\deep\MELL}\Gamma$
 iff there is a $\MELL$-fibration 
 $\cf\colon\relwebof{\Gamma'}\to\relwebof{\Gamma}$.
\end{theorem}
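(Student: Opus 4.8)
The plan is to prove both implications by reducing to Propositions~\ref{prop:eWC} and~\ref{prop:wnfib}, exploiting that the rules of $\deep\MELL$ fall into two groups: $\set{\dderrule,\ddigrule,\djdigrule}$, which is exactly the set captured by $\ddmap$-maps, and $\set{\dewrule,\dbotrule,\decrule}$ (together with the auxiliary $\djcrule$ and $\deqrule$), which is the set captured by linear fibrations. Since an $\MELL$-fibration is by definition a composite of one factor of each kind, the theorem amounts to matching this factorisation with a two-phase deep derivation in which all dereliction/digging steps sit above all weakening/bottom/contraction steps.

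For the direction from fibrations to derivations I would start from an $\MELL$-fibration $\cf$, whose defining factorisation passes through an intermediate modal relation web $\gG'$; by Theorem~\ref{thm:modal} there is a sequent $\Gamma''$ with $\gG'=\relwebof{\Gamma''}$. Its $\ddmap$-map factor $\relwebof{\Gamma'}\to\relwebof{\Gamma''}$ yields, by Proposition~\ref{prop:wnfib}, a deep derivation $\Gamma'\provevia{\dderrule,\ddigrule,\djdigrule}\Gamma''$, while its linear-fibration factor $\relwebof{\Gamma''}\to\relwebof{\Gamma}$ yields, by Proposition~\ref{prop:eWC}, a deep derivation $\Gamma''\provevia{\dewrule,\dbotrule,\decrule,\djcrule,\deqrule}\Gamma$. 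Concatenating the two produces $\Gamma'\provevia{\deep\MELL}\Gamma$; here the $\deqrule$-steps are harmless because relation webs are taken modulo associativity and commutativity of $\lpar$ and $\ltens$ (Proposition~\ref{prop:formulaeq}), and the $\djcrule$-steps collapse to equalities of webs in the same way.

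For the converse, given $\Gamma'\provevia{\deep\MELL}\Gamma$ I would first reorganise the derivation, by rule permutations, so that every dereliction, digging and $\djdigrule$ instance occurs above every weakening, bottom and contraction instance, producing an intermediate sequent $\Gamma''$ with $\Gamma'\provevia{\dderrule,\ddigrule,\djdigrule}\Gamma''\provevia{\dewrule,\dbotrule,\decrule}\Gamma$. This is precisely the permutation analysis already carried out in the proof of Theorem~\ref{thm:MELL-decompose}, now applied to a deep derivation between two arbitrary sequents rather than to a whole proof. Proposition~\ref{prop:wnfib} then converts the upper phase into a $\ddmap$-map $\relwebof{\Gamma'}\to\relwebof{\Gamma''}$ and Proposition~\ref{prop:eWC} converts the lower phase into a linear fibration $\relwebof{\Gamma''}\to\relwebof{\Gamma}$, whose composite is the required $\MELL$-fibration.

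The main obstacle is exactly this reordering. The delicate cases arise when a dereliction, a digging or a $\djdigrule$ sits below a contraction $\decrule$: pushing the modal rule above the contraction forces it to be duplicated, one copy per contracted branch, and one must check that the induced relabelling of the $\wn$- and $\ljump$-vertices remains coherent, so that the resulting graphs are still the webs of well-formed formulas. I expect to dispatch these cases exactly as the corresponding commutative steps already treated in \Cref{fig:commCut} and in the proof of Theorem~\ref{thm:MELL-decompose}. An alternative, which avoids an explicit global reordering, is to argue the forward direction by induction on the length of the derivation, appending one rule at a time and invoking two closure facts: $\ddmap$-maps compose (immediate from Proposition~\ref{prop:wnfib} by concatenating derivations) and linear fibrations compose (noted in the proof of Proposition~\ref{prop:eWC}); the only nontrivial inductive step is then the same local factorisation, namely that a linear fibration followed by a single $\ddmap$-map rearranges into a $\ddmap$-map followed by a linear fibration.
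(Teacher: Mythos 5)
Your proposal is correct and follows essentially the same route as the paper: both directions reduce to Propositions~\ref{prop:wnfib} and~\ref{prop:eWC} via the two-phase reorganisation of $\deep\MELL$ derivations (the permutation argument behind \Cref{thm:MELL-decompose}) and the factorisation of an $\MELL$-fibration through an intermediate modal relation web. The only divergence is presentational: where you invoke the definitional factorisation together with \Cref{thm:modal} to realise the intermediate web as a sequent, the paper instead builds that factorisation explicitly from $\cf$ (deleting merged clone vertices and adjoining the non-image vertices it dominates), a construction it then reuses for the polynomial-time check in \Cref{def:4t-skew-polynomial}.
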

\begin{proof}
As consequence of \Cref{thm:MELL-decompose}, we have that $\Gamma' \provevia{\deep\MELL}\Gamma$ iff there is a sequent $\Gamma''$ such that  $\Gamma' \provevia{\set{\dderrule, \ddigrule,\djdigrule}}\Gamma'' \provevia {\set{\eqrule,\dewrule, \dbotrule,\decrule,\djcrule}}\Gamma$.
By \Cref{prop:wnfib} there is a $\ddmap$-map ${\ccf'}\colon\relwebof{\Gamma'}\to\relwebof{\Gamma''}$  iff 
$\Gamma' \provevia{\set{\dderrule, \ddigrule,\djdigrule}}\Gamma'' $ 
and 
by \Cref{prop:eWC} there is a linear fibration 
${\ccf''}\colon\relwebof{\Gamma''}\to\relwebof{\Gamma}$ iff 
$\Gamma'' \provevia {\set{\dewrule, \dbotrule,\decrule,\djcrule,\eqrule}}\Gamma$.
We conclude since, by definition, a $\MELL$-fibration is the composition of a $\ddmap$-map and a linear fibration.

To prove the converse, we define the relation web $\gH'$ such that $\cf=\cf'\circ\cf''$ with $\cf''\colon \gH \to \gH'$ and $\cf'\colon \gH'\to \gG$.
We construct $\gH'$ from $\gH$ as follows:
\begin{itemize}
	\item for each pair of clones $v$ and $w$ with $v\dedge w$ in $\gH$ such that $\cf(v)=\cf(w)$, remove $v$;
	
	\item if $w\in \vertices[\gG]$ such that $w\dedge\cf(v)$ for some $v\in \vertices[\gH]$ and $w=\cf(u)$ for no $u\in \vertices[\gH]$, then add $w$ to $\gH$ ad all needed edges such that $w\cR[\gH'] v$ iff $w\cR[\gG]\cf(v)$ for $\cR\in \set{\uedge,\dedge}$.
\end{itemize}
We define $\cf''$ as the identity over non-clones vertices and as $\cf$ on clones vertices. We define $\cf'$ in such a way $\cf=\cf'\circ\cf''$. By definition $\cf''$ is a $\ddmap$-map and $\cf'$ is a linear fibration.
\end{proof}

\begin{proposition}\label{def:4t-skew-polynomial}
  If $\gH$ and $\gG$ are relation webs and $\cf\colon\gH\to\gG$ a function from $\vertices[\gH]$ to $\vertices[\gG]$, then it can be decided if $\cf$ is a $\MELL$-fibration in time polynomial in $\sizeof\gG+\sizeof\gH$.
\end{proposition}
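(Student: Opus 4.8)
The plan is to replace the existential ``there is a factorization'' in the definition of an $\MELL$-fibration by a single, deterministically computable candidate factorization, so that no search over intermediate webs is needed. First I would check with \Cref{prop:relwebPoly} that both $\gH$ and $\gG$ are modal relation webs, which costs $\mathbf O((\sizeof\gH+\sizeof\gG)^4)$; if either check fails, $\cf$ cannot be an $\MELL$-fibration and we answer ``no''.

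Second, I would record that for a \emph{given} map, deciding whether it is a linear fibration or a $\ddmap$-map is polynomial. Every clause of \Cref{def:skew} and of the definition of a $\ddmap$-map is a first-order condition whose quantifiers range over a bounded number of vertices: \Cref{eq:skew1} and \Cref{eq:labelcond} over one or two vertices, the skew-lifting \eqref{eq:skewlift}, modality \eqref{eq:skewmod} and $\ljump$-domination \eqref{eq:skew2} properties over three, and $\wn$-domination over four; the clone test used by $\ddmap$-maps is likewise a two-vertex condition checked against all other vertices. Hence each condition is verified by enumerating all tuples of vertices of $\gH$ and $\gG$, in time $\mathbf O((\sizeof\gH+\sizeof\gG)^4)$.

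Third, and this is the core of the argument, I would construct in polynomial time the unique candidate factorization $\cf=\cf'\circ\cf''$ with $\cf''\colon\gH\to\gK$ a $\ddmap$-map and $\cf'\colon\gK\to\gG$ a linear fibration, reusing the explicit construction from the converse part of the proof of \Cref{thm:mod-skew-4t}. The point is that both factors are forced by $\cf$: the $\ddmap$-map must collapse exactly the pairs of clone vertices that $\cf$ identifies along a $\dedge$-edge with label in $\set{\wn,\ljump}$ (the dereliction/digging collapses) and must carry exactly the extra $\wn$-vertices of $\gK$ dominating some vertex, while the linear fibration must account for the remaining target vertices outside the image together with the $\wn$-contractions, the $\lbot$, and the associativity/commutativity data. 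Since which vertices are identified, and which target vertices lie outside the image, is directly readable off $\cf$, $\gH$ and $\gG$, the intermediate web $\gK$ and both maps are determined up to isomorphism and can be built by the recipe of \Cref{thm:mod-skew-4t} (remove one vertex from each identified clone pair and adjoin the dominated missing target vertices with their induced edges). The algorithm then answers ``yes'' iff the constructed $\cf''$ meets all clauses of a $\ddmap$-map and the constructed $\cf'$ meets all clauses of \Cref{def:skew}, each tested as in the second step.

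I expect the main obstacle to be the justification of \emph{canonicity}, namely that testing this single factorization suffices and we need not search over the exponentially many possible intermediate webs. For this I would show that any valid factorization must identify in its $\ddmap$-map precisely the clone pairs collapsed by $\cf$ and must insert in its linear fibration precisely the vertices demanded by $\ljump$-domination, so that any two valid factorizations agree, up to isomorphism, with the one produced by the recipe; consequently, if the recipe's maps fail their defining conditions, then no factorization exists and $\cf$ is not an $\MELL$-fibration. Granting this, correctness is immediate, and since every step runs in time polynomial in $\sizeof\gH+\sizeof\gG$, so does the whole decision procedure.
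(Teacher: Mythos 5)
Your proposal follows essentially the same route as the paper's own proof: both construct the single canonical factorization $\cf=\cf'\circ\cf''$ by the recipe from the converse direction of \Cref{thm:mod-skew-4t} (collapse clone pairs identified by $\cf$, adjoin the missing dominated target vertices), and then verify in polynomial time that $\cf''$ is a $\ddmap$-map and $\cf'$ is a linear fibration. If anything, you are more careful than the paper, which invokes this decomposition without explicitly justifying the canonicity point you raise, namely that failure of this one candidate factorization rules out all others.
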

\begin{proof}
We decompose $\cf=\cf'\circ\cf''$ using the same procedure used in the proof of \Cref{thm:mod-skew-4t}.
To check if $\cf''$ is a $\ddmap$-map we have to check if the  label of all vertices which are not in the image of $\cf''$ are $\wn$, and to check if the vertices with the same image are clones. This requires a quadratic time on the size of $\vertices[\gG]$.
Similarly, checking if $\cf'$ is a linear fibration is quadratic on the size of $\gH'$, which is bounded by $\sizeof{\vertices[\gH]}+\sizeof{\vertices[\gG]}$.
\end{proof}

%%%%%%%%%%%%%%%%%%%%%%%%%%%%%%%%%%%%%%%%%
%%%%%%%%%%%%%%%%%%%%%%%%%%%%%%%%%%%%%%%%%
%%%%%%%%%%%%%%%%%%%%%%%%%%%%%%%%%%%%%%%%%
%%%%%%%%%%%%%%%%%%%%%%%%%%%%%%%%%%%%%%%%%
%%%%%%%%%%%%%%%%%%%%%%%%%%%%%%%%%%%%%%%%%
\section{Combinatorial Proofs}\label{sec:CP}

In this section we present a combinatorial proof syntax for $\MELLj$ using the results in the previous sections.
In particular, \Cref{thm:MELL-decompose} gives us a decomposition result allowing us 
to separate in $\MELLj$ derivation, 
the linear part, that is, $\linearized{\MELL}$,
form the resource management part, that is, $\deep\MELL$.
The first part of the proof is encoded by \RGB-cographs, while the second by $\MELL$-fibrations.

\begin{definitionn}
  A map $\cf\colon\cgG\to\relwebof{F}$ from an RGB-cograph $\cgG$ to
  a the relation web $\relwebof F$ is
  \emph{allegiant} if the following conditions are satisfied:
  \begin{itemize}
  \item if $v,w\in\avertices[\gG]$ and $v\axlink[\gG]w$ then $\cf(v)$ and $\cf(w)$ are labeled by dual atoms in $\atoms$;
  \item if $v\in\vertices[\gG]\setminus\jumpvertices[\gG]$ then $\lab{\cf(v)}=\lab v$; 
  \item if $v\in\jumpvertices[\gG]$ then $\lab{\cf(v)}\in \set{\lbot, \wn}$.
  \end{itemize}
\end{definitionn}

\begin{definition}
For $\X\in \set{\MLL, \MLLu,\MELL}$, an \emph{$\X$-combinatorial proof} of a  sequent $\Gamma$ is an $\X$-fibration
  $\cf\colon\cgG\to\relwebof\Gamma$ from an {\Xcor} \RGB-cograph   $\cgG$ to the relation web of $\Gamma$.
\end{definition}

\begin{theorem}
If $F$ is a formula and $\X\in \set{\MLL, \MLLu, \MELL}$, then
$\provevia \X F$ iff there is a $\X$-combinatorial proof $\cf\colon \cgG \to \relwebof F$.
\end{theorem}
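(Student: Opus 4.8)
The plan is to assemble the three characterization results of the preceding sections along the decomposition of \Cref{thm:MELL-decompose}, treating the three logics uniformly via the linear and deep fragments $\Xl$ and $\deep\X$ fixed in~\eqref{eq:deepsys}. The guiding skeleton is that provability in $\X$ factors, by decomposition, as a linear derivation $\provevia{\Xl}F'$ followed by a resource-management derivation $F'\provevia{\deep\X}F$; the first is captured by an $\X$-correct \RGB-cograph via \Cref{thm:MLLX}, the second by an $\X$-fibration via \Cref{thm:mod-skew-4t}, and gluing a cograph whose underlying relation web is $\relwebof{F'}$ to such a fibration is exactly an $\X$-combinatorial proof of $F$.

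For the left-to-right direction I would start from $\provevia\X F$, apply \Cref{thm:MELL-decompose} to obtain a formula $F'$ with $\provevia{\Xl}F'$ and $F'\provevia{\deep\X}F$, then invoke \Cref{thm:MLLX} to produce an $\X$-correct \RGB-cograph $\cgG$ with $\und\cgG=\relwebof{F'}$, and \Cref{thm:mod-skew-4t} (reading $F'$ and $F$ as one-formula sequents) to produce an $\X$-fibration $\cf\colon\relwebof{F'}\to\relwebof F$. Since $\und\cgG=\relwebof{F'}$, reading $\cf$ as a map out of $\cgG$ yields the desired combinatorial proof $\cf\colon\cgG\to\relwebof F$. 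I would then check allegiance directly: non-jump labels are preserved by Condition~\ref{c:labelcons}, jump vertices land in $\set{\lbot,\wn}$ because $\relwebof F$ contains no $\ljump$, and the $\axlink$-linked dual atoms of $\cgG$ have dual images since their labels are preserved.

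For the right-to-left direction I would take an $\X$-combinatorial proof $\cf\colon\cgG\to\relwebof F$, use \Cref{thm:modal} to write $\und\cgG=\relwebof{F'}$ for some formula $F'$, derive $\provevia{\Xl}F'$ from the $\X$-correctness of $\cgG$ via \Cref{thm:MLLX}, derive $F'\provevia{\deep\X}F$ from the fibration $\cf\colon\relwebof{F'}\to\relwebof F$ via \Cref{thm:mod-skew-4t}, and finally compose these two derivations along \Cref{thm:MELL-decompose} to conclude $\provevia\X F$.

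The main obstacle is that \Cref{thm:MELL-decompose} and \Cref{thm:mod-skew-4t} are stated only for $\MELL$, so the delicate step is justifying their specialization to $\MLL$ and $\MLLu$. Here I would argue that when $\deep\X$ omits $\dderrule$, $\ddigrule$, $\djdigrule$, $\dewrule$, $\decrule$ (and $\djcrule$), the $\ddmap$-map component of the fibration is forced to be a bijection, and for $\MLLu$ the target moreover satisfies $\mvertices[\relwebof F]=\emptyset$, which matches precisely the definitions of $\MLL$- and $\MLLu$-fibrations; dually, $\dbotrule$ is the only surviving deep rule for $\MLLu$ and none survive for $\MLL$, so the decomposition degenerates exactly as these restricted fragments demand. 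Once this compatibility is pinned down, the two directions close immediately by chaining the three theorems.
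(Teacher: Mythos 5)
Your proposal is correct and takes essentially the same route as the paper: the paper's own proof is exactly the chain you describe, namely \Cref{prop:jumps} and \Cref{thm:MELL-decompose} to factor $\provevia{\X}F$ into $\provevia{\Xl}F'\provevia{\deep\X}F$, then \Cref{thm:MLLX} for the \RGB-cograph and \Cref{thm:mod-skew-4t} for the fibration. Your extra care (the explicit right-to-left direction via \Cref{thm:modal}, the allegiance check, and the specialization of the $\MELL$-only statements to $\MLL$ and $\MLLu$) only fills in details the paper leaves implicit.
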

\begin{proof}
By Proposition~\ref{prop:jumps} and Theorem \ref{thm:MELL-decompose}, 
if $\provevia \X F$ then there is a formula $F'$ such that $  \provevia {\X^\ell} F' \provevia{\deep\X}F  $.
We conclude by Theorems \ref{thm:MLLX}  and \ref{thm:mod-skew-4t}.
\end{proof}

After the result in \cite{acc:str:CPK} we have that $\MELL$-combinatorial proofs represent two $\peq$-equivalent proofs with the same syntactic object.
\begin{proposition}
Let $\pi_1$ and $\pi_2$ be two derivations in $\MELLj$, then 
$\pi_1\peq\pi_2$ iff they are represented by the same $\MELL$-combinatorial proof.
\end{proposition}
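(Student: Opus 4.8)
The plan is to prove the two implications separately, using as the central tool the canonical decomposition of \Cref{thm:MELL-decompose}: every $\MELLj$ derivation $\pi$ of $\Gamma$ factors as $\provevia{\linearized\MELL}\Gamma'\provevia{\deep\MELL}\Gamma$, and by \Cref{thm:MLLX,thm:mod-skew-4t} the two phases are recorded respectively by an $\MELL$-correct \RGB-cograph $\cgG$ with $\und{\cgG}=\relwebof{\Gamma'}$ and by an $\MELL$-fibration $\cf\colon\relwebof{\Gamma'}\to\relwebof{\Gamma}$. The combinatorial proof of $\pi$ is the resulting pair $(\cgG,\cf)$, so the statement amounts to showing that this pair is a complete invariant for $\peq$: constant on each $\peq$-class and separating distinct classes.

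For the forward direction, recall that $\peq$ is generated by the independent permutations of \Cref{fig:proofEq} with $\rho,\tau$ ranging over $\set{\jbotrule,\jewrule,\lpar,\derrule,\digrule,\jdigrule,\ecrule}$, together with the associativity of contraction of \Cref{fig:proofEqNS}. It therefore suffices to check that a single application of any such generator transforms $\pi_1$ into a $\pi_2$ with the same $(\cgG,\cf)$. I would run a case analysis on the permuted pair. A permutation that swaps two linear rules (two $\ltens$, or a $\lpar$ across a $\ltens$) leaves the axiom linking and the $\ltens$/$\lpar$/$\krule$ skeleton untouched, hence yields the same \RGB-cograph and the identical fibration; this is the content of the splitting-tensor and rule-commutation analysis already present in \Cref{lemma:lindeseq}. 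A permutation between two resource-management rules corresponds, under \Cref{prop:eWC,prop:wnfib}, to reordering two deep inferences whose induced maps commute, so the factored map $\cf={\ccf''}\circ{\ccf'}$ is literally unchanged. Finally, the associativity of contraction is absorbed because $\ecrule$ is interpreted through the $\wn$-domination property, which remembers only the fibres $\cf^{-1}(w)$ and not the bracketing of the contractions.

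For the backward direction I would show that $(\cgG,\cf)$ determines $\pi$ up to $\peq$, i.e.\ that the sequentialization is canonical modulo the generators above. The linear phase is rebuilt from the \RGB-cograph by Retor\'e's splitting induction (\Cref{lemma:lindeseq}): the only nondeterminism is the order in which independent splitting tensors and root $\lpar$'s are discharged, which is exactly the independent $\ltens$/$\lpar$ permutations of \Cref{fig:proofEq}. The resource-management phase is rebuilt from the factorization $\cf={\ccf''}\circ{\ccf'}$ of \Cref{thm:mod-skew-4t}: the deep inferences $\dderrule,\ddigrule,\djdigrule$ recovered from the $\ddmap$-map and the deep inferences $\dewrule,\dbotrule,\decrule,\djcrule$ recovered from the linear fibration may be sequentialized in any order compatible with their dependencies, and these orders again differ only by the permutations of \Cref{fig:proofEq,fig:proofEqNS}. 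Combining the two phases and invoking the corresponding result of \cite{acc:str:CPK}, one obtains $\pi_1\peq\pi_2$.

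The step I expect to be the main obstacle is this backward direction, and precisely the uniqueness of the factorization $\cf={\ccf''}\circ{\ccf'}$ up to $\peq$. Two points require care. First, the interface sequent $\Gamma''$ between the $\ddmap$-map and the linear fibration is only determined up to the formula-equivalence rules $\deqrule$ of \Cref{fig:eqrules}, so one must verify that distinct intermediate sequents do not give genuinely inequivalent derivations. Second, the reconstruction reassigns the $\ljump$-placeholders during weakening, and one must show this introduces no extra sequentialization choices. Here the design of $\MELLj$ is essential: since each $\ljump$ is rigidly attached to a single $\jaxrule$- or $\jonerule$-instance, the exponential jump-rewiring freedom that makes $\npeq$ hard collapses to the choices already generated by $\peq$, which is exactly what makes $(\cgG,\cf)$ a complete and polynomially checkable invariant.
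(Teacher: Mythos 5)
Your proposal takes a genuinely different route from the paper's. The paper disposes of this proposition in two sentences, by reduction: $\MELLj$ is read as the modal logic $\Sfour$ extended with the axiom-like rules $\jaxrule$ and $\jonerule$ and with weakening and contraction restricted to $\wn$-formulas, and the invariance statement is imported wholesale from \cite{acc:str:CPK}, with the single caveat that the equivalence used there, like $\peq$ here, excludes permutations of $\Wrule$ and $\Crule$ below $\krule$. You instead try to prove the invariance directly from \Cref{thm:MELL-decompose}, \Cref{thm:MLLX} and \Cref{thm:mod-skew-4t}, which is more informative than the paper's citation and would in effect reprove the relevant part of \cite{acc:str:CPK}; note, though, that your backward direction still ends by ``invoking the corresponding result of \cite{acc:str:CPK}'', so the reduction is not really eliminated, and your citation of \Cref{lemma:lindeseq} for the forward linear case should rather point to the translation of \Cref{fig:MLL-RGB} (\Cref{lemma:linseq}), since \Cref{lemma:lindeseq} goes in the opposite direction.

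There are concrete gaps. First, your forward case analysis treats permutations of two linear rules, of two resource-management rules, and associativity of contraction, but the generators of $\peq$ also include the mixed cases: a rule $\rho\in\set{\jbotrule,\jewrule,\derrule,\digrule,\jdigrule,\ecrule}$ permuted across a $\ltens$ (third schema of \Cref{fig:proofEq}) or across a $\lpar$ (second schema with $\tau=\lpar$). These are exactly the cases where the decomposition must be invoked---one has to check that pushing the corresponding deep rule below the linear skeleton yields the same decomposed pair on both sides---and they follow from neither of your two cases. Second, you treat ``the combinatorial proof of $\pi$'' as a well-defined object, but \Cref{thm:MELL-decompose} is only an existence statement: producing $(\cgG,\cf)$ involves choices (the order in which deep rules are pushed to the bottom, and the $\lpar$-rules inserted when a shallow $\ecrule$ or $\jewrule$ is replaced by its deep version), so the forward direction must also show these choices are immaterial; dually, the backward direction needs the round-trip fact that any $\MELLj$ derivation is $\peq$-equivalent to a re-sequentialization of its own image, which your outline uses but never states.

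Finally, since $\krule$ is deliberately absent from the generator set of $\peq$, your canonicity argument must not accidentally identify a contraction applied above a $\krule$ with the same contraction applied below it. This is where the rigid box encoding does the work, and it deserves an explicit check: when a contraction above $\krule$ is pushed down, the $\krule$ prefixes a \emph{single} $\wn$ to the formula $\wn A\lpar\wn A$, so the resulting box has one auxiliary port and the merged vertices lie inside its scope, whereas a contraction already below $\krule$ merges \emph{two} auxiliary ports; the underlying relation webs, hence the \RGB-cographs and fibrations, differ. Without verifying this separation, ``same combinatorial proof'' could a priori identify strictly more than $\peq$, and the ``if'' direction of the proposition would be in doubt.
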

\begin{proof}
It follows by the fact that $\MELLj$ can be seen as modal logic $\Sfour$ with additional axiom-like rules ($\jaxrule$ and $\jonerule$) and restricted versions of weakening and contraction rule.
Note that, in \cite{acc:str:CPK} the proof equivalence is stated in such a way rule that permutations of $\Wrule$ and $\Crule$ below $\krule$ are not allowed.
\end{proof}

In particular, there is a one-to-one correspondence between $\MLLu$-combinatorial proofs and $\MLLu$ proof nets with jumps in \cite{hughes:simple-mult}.

\begin{theorem}
For all $\X\in\set{\MLL, \MLLu,\MELL}$,  $\X$-combinatorial proofs are a sound and complete proof system for $\X$ in the sense of \cite{cook:reckhow:79}.
\end{theorem}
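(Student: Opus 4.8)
The plan is to read ``proof system in the sense of \cite{cook:reckhow:79}'' as the conjunction of two requirements: that $\X$-combinatorial proofs are \emph{complete and sound} for $\X$, and that the correctness of a candidate combinatorial proof can be \emph{verified in time polynomial} in its size. The first requirement is exactly the content of the preceding theorem, which states that $\provevia\X F$ holds if and only if there is an $\X$-combinatorial proof $\cf\colon\cgG\to\relwebof F$; such a proof is a finite object, encoded by the data of $\cgG$, of $\relwebof\Gamma$, and of the vertex map $\cf$. Hence the only thing left is to establish the polynomial-time checkability of the conditions defining an $\X$-combinatorial proof.

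First I would fix a candidate $\cf\colon\cgG\to\relwebof\Gamma$ and split the verification into its defining conditions. Checking that $\relwebof\Gamma$ and $\und{\cgG}$ are modal relation webs is polynomial by \Cref{prop:relwebPoly}. The linking conditions of \Cref{def:RGB} that make $\cgG$ an \RGB-cograph, together with the compatibility between the linking of $\cgG$ and the labels of $\relwebof\Gamma$ (so that $\axlink$-related atomic vertices map to dual atoms), are local statements quantifying over vertices and their $\axlink$-classes, hence decidable by inspecting vertices and pairs of vertices in polynomial time. Checking that $\cf$ is an $\X$-fibration is polynomial by \Cref{def:4t-skew-polynomial}. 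It therefore remains to verify that $\cgG$ is \Xcor in the sense of \Cref{def:Xcorrect}, namely its \ae-connectedness and \ae-acyclicity and the contextual condition~\ref{c:contextual}.

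The main obstacle is the \ae-connectedness and \ae-acyclicity test, which a priori ranges over the exponentially many elementary alternating paths. Here I would reduce to the polynomial correctness criterion for \RB-cographs of Retor\'e~\cite{retore:03} via the translation $\partial$ of \Cref{lemma:lindeseq}: the \RB-cograph $\crbG$ is obtained from $\cgG$ by replacing each modal vertex by a dual pair of atomic vertices, so $\sizeof{\crbG}$ is linear in $\sizeof{\cgG}$ and $\crbG$ is computable in polynomial time, and the argument in the proof of \Cref{lemma:lindeseq} shows that $\cgG$ is \ae-connected and \ae-acyclic if and only if $\crbG$ is. Thus this test reduces in polynomial time to deciding correctness of the \RB-cograph $\crbG$, which is itself polynomial. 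Finally, condition~\ref{c:contextual} is a local statement over $\dedge$-edges and $\axlink$-classes, checkable by inspecting triples of vertices in polynomial time. Combining all these checks, the entire verification is polynomial, which together with the completeness and soundness supplied by the preceding theorem establishes the claim uniformly for every $\X\in\set{\MLL,\MLLu,\MELL}$.
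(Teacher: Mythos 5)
Your proposal matches the paper's proof in overall structure: both reduce the Cook--Reckhow requirement to the soundness/completeness theorem immediately preceding this one (the paper leaves this appeal implicit) together with a polynomial-time verification split into exactly the same four checks --- modal relation webs via \Cref{prop:relwebPoly}, correctness of the \RGB-cograph, the $\X$-fibration property via \Cref{def:4t-skew-polynomial}, and allegiance. Where you genuinely go beyond the paper is the second check: the paper merely \emph{asserts} that deciding whether $\cgG$ is \Xcor is polynomial in $\sizeof{\vertices[\cgG]}$, whereas you justify the only non-local part of that check (\ae-connectedness and \ae-acyclicity; condition~\ref{c:contextual} of \Cref{def:Xcorrect} is indeed local) by reducing, via the translation $\partial$ of \Cref{lemma:lindeseq}, to Retor\'e's polynomial correctness criterion for \RB-cographs. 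That is a useful addition, but as stated it has a small gap: for $\X\in\set{\MLLu,\MELL}$ the translated graph $\crbG$ keeps the linking of $\cgG$ unchanged on atomic, unit and jump vertices, so its $\axlink$-classes (e.g.\ $\set{a,\cneg a,\ljump_1,\dots,\ljump_n}$ coming from $\jaxrule^n$, or the classes introduced by $\jonerule^n$) may contain more than two elements; hence $\crbG$ is not an \RB-cograph in the strict sense of \Cref{def:RGB}, and Retor\'e's perfect-matching-based test does not apply verbatim. You would need either a further gadget replacing each $\axlink$-clique by matched pairs in a way that preserves alternating elementary paths, or a direct argument that alternating-path search remains polynomial when the linking is an equivalence relation rather than a matching. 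With that repair your argument is complete, and it in fact supplies a justification for a step the paper's own proof leaves unproved.
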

\begin{proof}
Let $\cgG=\tuple{\vertices[\cgG], \uedge[\cgG], \dedge[\cgG], \axlink[\cgG]}$ be a labeled mixed graph equipped with an equivalence relation, $\gH=\tuple{\vertices[\gH], \uedge[\gH], \dedge[\gH]}$ be a labeled mixed graph and $\cf\colon \cgG \to \gH$ mapping vertices of $\cgG$ to vertices of $\gH$.
Checking if the two mixed graphs are relation web is polynomial by \Cref{prop:relwebPoly} on the sizes of $\sizeof{\vertices[\cgG]}+\sizeof{\vertices[\gH]}$ and checking if $\cgG$ is an $\X$-correct  \RGB-cograph is polynomial on $\sizeof{\vertices[\cgG]}$.
By \Cref{def:4t-skew-polynomial} we have that checking if $\cf$ is a $\X$-fibration is polynomial on $\sizeof{\vertices[\cgG]}+\sizeof{\vertices[\gH]}$.
Then, since checking  if $\cf$ is allegiant is linear on $\sizeof{\vertices[\cgG]}+\sizeof{\vertices[\gH]}$, we conclude the proof.
\end{proof}

%%%%%%%%%%%%%%%%%%%%%%%%%%%%%%%%%%%%%%%%%
%%%%%%%%%%%%%%%%%%%%%%%%%%%%%%%%%%%%%%%%%
%%%%%%%%%%%%%%%%%%%%%%%%%%%%%%%%%%%%%%%%%
%%%%%%%%%%%%%%%%%%%%%%%%%%%%%%%%%%%%%%%%%
%%%%%%%%%%%%%%%%%%%%%%%%%%%%%%%%%%%%%%%%%
\section{Handsome Proof Nets for $\MELL$}\label{sec:handsome}

\def\CPset{\mathsf{CP}}
\def\cutcomp#1{\circ_{\cutr}^{#1}}
\newcommand{\promotion}[2]{{\gclr \krule}[#1]_{\oc#2}}
\newcommand{\remclass}[2]{({#1} \setminus {\gclr #2})}
\newcommand{\mergeclass}[5]{{#1}{[#2\axlink[#4,#5]#3]}}
\newcommand{\restricted}[2]{#1|_{#2}}
\def\emptywf#1{{\color{\skewcolor}\emptyset}_\relwebof{#1}}
\newcommand{\idskew}{{\color{\skewcolor} {\id}}}
\newcommand{\jskew}[1]{{\color{\skewcolor}{0}_#1}}
\newcommand{\cpind}[2]{\cf \colon \cgG_{#1} \to \relwebof{#2}}
\newcommand{\norcpind}[2]{\hat{\cf} \colon \cgG_{#1} \to \relwebof{#2}}
\newcommand{\cpindi}[3]{\cf_{#3} \colon \cgG_{#1} \to \relwebof{#2}}
\newcommand{\skewminus}[2]{(#1 \setminus {\fclr #2})}
\newcommand{\skewf}[2]{(\fclr f_{{#1}\flowto {#2}})}

\newcommand{\jumpreplace}[2]{#1[#1/\ljump]}

\newcommand{\rgbcog}[1][]{\tuple{\vertices[#1], \uedge[#1],\dedge[#1],\axlink[#1]}}
\newcommand{\wnreaplacement}[3]{#1[#3/#2]}
\def\digfib#1{{\color{\skewcolor} \mathsf{dig}}({\wn\wn #1})}
\def\replaceeq#1{[{\gclr #1}]}
\def\minusfib#1{\setminus\set{{\fclr #1}}}
\def\minuseq#1{\setminus{{\gclr #1}}}
\def\skcomp#1{{\circ \fclr f_{#1}}}
\newcommand{\sfto}[2]{\relwebof{#1}\to{\relwebof{#2}}}
\newcommand{\cpp}{\tuple{{\gclr \gG}, \cf}}
\newcommand{\cppair}[2]{\tuple{{\gclr \gG_{#1}}, \skewf{#1}{#2}}}

\def\hpn{\mathsf{HPN}}

The combinatorial proofs defined in the previous section can be interpreted as an extension of both Retor\'e's \cite{retore:03} and Hughes' \cite{hughes:pws} syntaxes.
They only allow to represent $\cutr$-free proofs as \emph{unfolded} Retor\'e's proof nets \cite{ret:99} do. 
In this section we define \emph{exponentially handsome proof nets} (or \emph{$\MELL$-combinatorial proofs with cuts}) by extending $\MELL$-combinatorial proofs.
We conclude by providing a cut-elimination procedure for exponentially handsome proof nets.

We adapt the solution proposed in \cite{hughes:pws,hughes:invar} where instances of $\cutr$ are replaced by conjunction rules producing \emph{contradictions}, i.e. formulas of the form $A\ltens \cneg A$, which are later ``discarded'' at the end of the derivation.
In the classical setting, this substitution creates no further interaction of the contradiction with any other rule in the derivation.
The presence of the weak promotion rule prevents us from ignoring the presence of these contradictions, but we here propose a solution for this problem.
\def\SKS{\mathsf{SKS}}
\begin{remark}
	Another extension of combinatorial proof for classical logic allowing to represent proofs with cuts is given in \cite{str:FSCD17} by means of \emph{combinatorial flows}.
	In this setting $\cutr$ is in some way encoded as sequential composition of flows.
	This syntax is strongly related with the property of the deep inference system $\SKS$ for classical logic \cite{gug:SIS} which states that if $\provevia{\SKS}\cneg A\lpar B$, then $A\provevia{\SKS}B$.
	Such a result is out of the scope of this paper, but it could be obtained by defining a deep inference system  for $\MELLj$ based on the systems in \cite{str:phd,str:MELL}.
\end{remark}

\def\scon{weaken-contradiction\xspace}
\def\scons{weaken-contradictions\xspace}
\begin{definition}
	If $F$ is a formula, we write $\wn^n F$ to denote the formula $\smash{\wn^n F=\overbrace{\wn\cdots \wn}^{\mbox{\tiny$n$ times }} F}$. In particular, $\wn^0 F=F$.
	A \emph{contradiction} is a formula of the form $C=A\ltens \cneg A$ for a $\MELL$-formula $A$, and a \emph{\scon} is a formula of the form $\wn^n C $ with $C$ a contradiction.
\end{definition}

\begin{theorem}
	Let $\Gamma$ be a sequent and $C_1, \dots, C_n$ \scons.
	If 
	$\provevia{\MELL}\Gamma,C_1, \dots, C_n$
	then
	$\provevia{\MELL}\Gamma$. 
	Conversely, if $\Gamma$ admits a derivation in ${\MELL}\cup\set{\cutr}$ containing $k$ occurrences of the $\cutr$-rule with pairs of active formulas $(A_1,\cneg A_1), \dots, (A_k,\cneg A_k) $ then 
	$\provevia{\MELL}\Gamma,\wn^{n_1}(A_1\ltens \cneg A_1), \dots, \wn^{n_k}(A_i\ltens \cneg A_i)$
	for some $n_1, \dots, n_k\in \N$.
\end{theorem}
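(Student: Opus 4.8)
The plan is to treat the two implications separately, each time reducing to the $\MELL$ cut-elimination result of \Cref{thm:seq:cutelim}.

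For the first implication I would exploit that a \scon\ $\wn^n(A\ltens\cneg A)$ has as linear negation the formula $\oc^n(\cneg A\lpar A)$, and that this dual is \emph{provable} in $\MELL$. Indeed, identity expansion gives $\provevia{\MELL}\cneg A, A$ for every $\MELL$-formula $A$ (a routine induction on $A$: $\axrule$ and $\onerule$/$\botrule$ at the leaves, and $\lpar$, $\ltens$, $\prule$ at the connectives), so that $\provevia{\MELL}\cneg A\lpar A$ by one $\lpar$-rule and then $\provevia{\MELL}\oc^n(\cneg A\lpar A)$ by applying $\prule$ $n$ times with empty context. Given a cut-free proof of $\Gamma, C_1,\dots,C_n$, I would cut each $C_i=\wn^{n_i}(A_i\ltens\cneg A_i)$ against a proof of its dual $\oc^{n_i}(\cneg A_i\lpar A_i)$; this produces $\provevia{\MELL\cup\set{\cutr}}\Gamma$, whence $\provevia{\MELL}\Gamma$ by \Cref{thm:seq:cutelim}.

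For the converse I would argue by induction on a derivation $\pi$ of $\Gamma$ in $\MELL\cup\set\cutr$, proving the stronger statement that every sequent $\Sigma$ derivable by such a $\pi$ with cut pairs $(A_1,\cneg A_1),\dots,(A_k,\cneg A_k)$ satisfies $\provevia{\MELL}\Sigma, \wn^{n_1}(A_1\ltens\cneg A_1),\dots,\wn^{n_k}(A_k\ltens\cneg A_k)$ for suitable $n_i$. Each logical or structural rule of $\pi$ is simply replayed on the extended sequent, carrying the accumulated contradictions along in the context (for a binary rule the two lists of contradictions are merged). A $\cutr$-instance with premises $\Gamma, A$ and $\cneg A, \Delta$ is replaced by a $\ltens$-rule, so that rather than being cancelled the pair $(A,\cneg A)$ resurfaces as a fresh contradiction $A\ltens\cneg A=\wn^0(A\ltens\cneg A)$ added to the conclusion.

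The one delicate case, and the main obstacle, is the promotion rule $\prule$, whose premise context must be of the shape $\wn\Gamma$. After the translation the premise of a promotion carries extra contradictions, and those of the form $\wn^0(A_j\ltens\cneg A_j)$ are not $\wn$-formulas, so they would block the rule. The remedy is to prefix the promotion with one dereliction $\derrule$ on each such contradiction, turning $A_j\ltens\cneg A_j$ into $\wn(A_j\ltens\cneg A_j)$ and thereby raising its index $n_j$ from $0$ to $1$; once a contradiction carries at least one $\wn$ it remains a legitimate context formula through every subsequent promotion, so no further $\wn$'s are needed. This is precisely what the prefixes $\wn^{n_i}$ in the statement account for, and reading off the conclusion at the root of the transformed derivation yields the desired cut-free $\MELL$-proof of $\Gamma, \wn^{n_1}(A_1\ltens\cneg A_1),\dots,\wn^{n_k}(A_k\ltens\cneg A_k)$.
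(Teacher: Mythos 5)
Your proof is correct, and its two halves relate differently to the paper's own argument. For the converse you do essentially what the paper does: replace each $\cutr$ by a $\ltens$ that produces the contradiction, carry the contradictions down the derivation, and repair the context-sensitivity of $\prule$ by dereliction; your observation that a single $\derrule$ per contradiction suffices (so $n_i \le 1$) is in fact sharper than the paper, which takes $n_i$ to be the number of promotions occurring below the corresponding cut. For the first implication, however, your route is genuinely different. The paper performs surgery inside the given derivation of $\Gamma, C$: it locates a $\ltens$-instance with principal formula $A \ltens \cneg A$, replaces it by a $\cutr$ with the same active formulas, erases the occurrences of the contradiction below it, and then appeals to \Cref{thm:seq:cutelim}. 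You instead prove the dual $\oc^n(\cneg A \lpar A)$ outright (identity expansion plus $n$ promotions with empty context) and cut it against the given proof. Your version buys robustness: the paper's claim that any derivation of $\Gamma, C$ contains a $\ltens$-instance introducing $A\ltens\cneg A$ is not literally true when $n\ge 1$, since $\wn^n(A\ltens\cneg A)$ may be introduced by $\ewrule$ or duplicated by $\ecrule$, cases the paper's sketch would have to treat separately but which your external cut absorbs for free; the price is the (routine) identity-expansion lemma, which the paper's in-place replacement does not need. Both approaches discharge the resulting cuts by the same final appeal to \Cref{thm:seq:cutelim}.
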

\begin{proof}
By induction on the number of \scons formulas $k$.
If $\Gamma,C$ is provable in $\MELL$ and $C=\wn^n(A\ltens \cneg A)$, then in any derivation of $\Gamma,C$ there is an occurrence of a rule $\ltens$ with principal formula $A\ltens \cneg A$ and active formulas $A$ and $\cneg A$. 
We can obtain a derivation of $\Gamma$ by replacing such an occurrence with a $\cutr$ having the same active formulas, and then removing each occurrence of the active formula $A\ltens \cneg A$ from the derivation.
$$
\vliinf{}{\ltens}{\Gamma, A\ltens\cneg A, \Delta}{\Gamma, A}{\cneg A, \Delta}
\qquad
\leftrightsquigarrow
\qquad
\vliinf{}{\cutr}{\Gamma, \Delta}{\Gamma, A}{\cneg A, \Delta}
%
%\qedhere
$$
Then we conclude thanks to \Cref{thm:seq:cutelim}.

The converse is proven by similarly. 
In this case $n_i$ is defined as the number of occurrences of $\krule$ below the $\cutr$ with active formulas formula $A_i$ and $\cneg A_i$ for each $i\in\set{1,\dots k}$.
\end{proof}

\begin{definition}
An \emph{exponentially handsome proof net} of a sequent $\Gamma$ is a $\MELL$-combinatorial proof $\cf\colon {\cgG} \to \relwebof{\Gamma, \Delta}$ where $\Delta$ is a (possibly empty) sequent of \scons.
\end{definition}

In drawing such objects, we shade in gray the backgrounds of the portion of the relation web of $\Gamma$ containing the vertices of the \scons (for a graphical example refer to \Cref{fig:HPNexample}).

%%%%%%%%%%%%%%%%%%%%%%%%%%%%%%%%%%%%%%%%%
%%%%%%%%%%%%%%%%%%%%%%%%%%%%%%%%%%%%%%%%%
%%%%%%%%%%%%%%%%%%%%%%%%%%%%%%%%%%%%%%%%%

\def\vs{\mbox{-VS-}}
\newcommand{\redCase}[2]{\overset{#1\vs#2}{\rightsquigarrow_\cutr}}
\newvertex{p}{~}{}

%%%%%%%%%%%%%%%%%%%%%%%%
%%%%%%%%%%%%
%%%%%%%%%%%%
%%%%%%%%%%%%
%%%%%%%%%%%%
%%%%%%%%%%%%

We conclude this section by defining a graphical rewriting allowing us to produce a $\MELL$-combinatorial proof from an exponentially handsome proof net.

\def\energy#1{\mathsf{W}_#1}
\begin{theorem}\label{thm:EHPN:cutelim}
If $\cf\colon {\cgG} \to \relwebof{\Gamma, \Delta}$ is an exponentially handsome proof nets of $\Gamma$, then there is a $\MELL$-combinatorial proof $\cf' \colon {\cgG} \to \relwebof{\Gamma}$.
\end{theorem}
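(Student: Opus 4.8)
The plan is to define a terminating graph rewriting $\cutred$ on exponentially handsome proof nets that erases the \scons one at a time, reproducing at the level of combinatorial proofs the sequent cut-elimination steps of \Cref{fig:cutElim,fig:commCut}. Recall that a \scon $C=\wn^n(A\ltens\cneg A)$ occurring in $\Delta$ plays the role of a $\cutr$ with active formulas $A$ and $\cneg A$ guarded by $n$ modalities: in $\relwebof{\Gamma,\Delta}$ it is a $\lseq$-chain of $n$ vertices labelled $\wn$ sitting above a $\ltens$-vertex that joins $\relwebof A$ and $\relwebof{\cneg A}$. The goal is to rewrite $\cf\colon\cgG\to\relwebof{\Gamma,\Delta}$ until $\Delta$ is emptied, producing a $\MELL$-combinatorial proof $\cf'\colon\cgG'\to\relwebof\Gamma$ for an RGB-cograph $\cgG'$ obtained from $\cgG$ along the rewriting.

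First I would fix an outermost \scon $C$ and read off the exponential behaviour lying above it by decomposing $\cf$ as the composite of a linear fibration and a $\ddmap$-map, as provided by \Cref{thm:mod-skew-4t}. Which step of $\cutred$ applies is then dictated by the principal connective of $A$ together with the rule that the $\ddmap$-map factor performs over $\cneg A$. When $A$ is compound the step $\redCase{\ltens}{\lpar}$ splits the central $\ltens$-vertex into two smaller \scons, strictly decreasing the size of the cut-formula, exactly as in the third line of \Cref{fig:cutElim}. When $A$ is a literal or a unit, the two atomic (or unit) $\axlink$-classes meeting at the cut are composed along the chordless \ae-paths running through it, precisely as in Retor\'e's cut reduction for \RB-cographs and as in the top line of \Cref{fig:cutElim}, with the jumps introduced by the relevant $\jaxrule$/$\jonerule$ merged accordingly. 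When $A=\oc A'$, the cut meets a $\wn$-labelled modal $\axlink$-class of $\cgG$ --- a \emph{box}, i.e. the subgraph induced by all $v$ with $w\dedge v$ for $w$ in that class --- and the shape over $\cneg A=\wn\cneg{A'}$ selects the exponential step: $\redCase{\krule}{\derrule}$ opens the box, $\redCase{\krule}{\jewrule}$ erases it together with the vertices it dominates while reassigning its jumps, $\redCase{\krule}{\ecrule}$ duplicates the whole box, and $\redCase{\krule}{\digrule}$ nests it into iterated $\krule$, mirroring \Cref{fig:cutElim,fig:commCut}.

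Next I would check that each step maps an allegiant $\MELL$-fibration from an \muscor RGB-cograph (\Cref{def:Xcorrect}) to another one, so that the rewritten object is again an exponentially handsome proof net with fewer or smaller \scons. For the multiplicative and axiom steps this reuses the connectivity bookkeeping of the sequentialization argument in \Cref{lemma:lindeseq}: composing two linkings along the cut preserves \ae-connectedness and cannot create a chordless \ae-cycle, and it keeps the contextual condition \ref{c:contextual} intact. For the box steps I would verify that erasing, duplicating or nesting a modal $\axlink$-class again preserves condition \ref{c:contextual} together with the label- and domination-conditions of the fibration.

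The delicate point is termination, which I would establish via a weight $\energy{}$. The obstacle is that $\redCase{\krule}{\ecrule}$ duplicates an entire box and $\redCase{\krule}{\digrule}$ nests boxes, so the number of vertices can grow and no naive size measure decreases. Following the technique of \cite{accattoli:linear}, I would make box duplication and digging atomic, acting on all copies at once, and define $\energy{}$ as a lexicographic/multiset measure combining the multiset of sizes of the surviving cut-formulas with a box-weighted total size, so that opening, erasing, splitting and duplicating all strictly decrease $\energy{}$ --- the duplication of a box being compensated by the strict drop of the associated cut-formula. Proving this strict decrease is the graph-theoretic counterpart of the termination argument behind \Cref{thm:seq:jumpcut} and is the main work. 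Once $\cutred$ is shown strongly normalizing and correctness-preserving, any normal form contains no \scon, i.e. $\Delta=\emptyset$, which yields the desired $\cf'\colon\cgG'\to\relwebof\Gamma$.
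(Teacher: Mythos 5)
Your plan coincides with the paper's own proof: the same case-by-case graph rewriting translating \Cref{fig:cutElim} into operations on exponentially handsome proof nets (axiom/unit merging of $\axlink$-classes, the $\ltens$-vs-$\lpar$ split, and the five $\krule$ box cases of \Cref{fig:handsomeCutElim}), followed by termination via a lexicographic-style measure on cut-formula size and \scon occurrence counts, which is exactly the paper's order on the pair $(\sizeof{\vertices[\relwebof{C}]},\energy\Delta)$ under a one-\scon-at-a-time reduction strategy. The only differences are presentational --- you drive the case analysis through the fibration decomposition of \Cref{thm:mod-skew-4t} and phrase the measure as a multiset/box-weighted size in the style of \cite{accattoli:linear}, a reference the paper itself invokes for the analogous sequent-level termination in \Cref{thm:seq:jumpcut}.
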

\begin{proof}
	If $\Delta= \emptyset$, then $\cf $ is a $\MELL$-combinatorial proof.
	Otherwise, in order prove this result we implement cut-elimination by translating the rewriting in \Cref{fig:cutElim}.
	We consider the following key cases which are shown in \Cref{fig:handsomeCutElim}:
		\begin{itemize}
			\item $\redCase{\axrule}{\axrule}$: 
			remove the two vertices in the contradiction and merge their $\axlink$-classes;
			
			\item $\redCase{\bot}{\lone}$: 
			similarly to the previous one;

			\item $\redCase{\ltens}{\lpar}$: 
			remove the $\uedge$-edges between the vertices in $B$ and the ones in $A$ and $\cneg A$, and remove the $\uedge$-edges between the vertices in $A$ and the ones in $\cneg B$;
			
			\item $\redCase{\krule}{\krule}$:
			replace the $\uedge$-edges between $\wn $ and a vertex in $A$ with $\dedge$-edge from $\wn$ to a vertex in $A$.
			Merge the $\axlink$-classes of $\oc$ and $\wn$ and then remove the vertex $\oc$;
			
			\item $\redCase{\krule}{\derrule}$:
			remove the $\oc$ and the vertices in its $\axlink$-class;
			
			\item $\redCase{\krule}{\jewrule}$:
			remove the $\oc$ and  any vertex pointed by the $\wn$ in its $\axlink$-class.
			Then replace each $\wn$ in with a new $\ljump$ having the same image by $\cf$.
			Add to the $\axlink$-class of the $\ljump^\cutr$ which is in the pre-image of the $\wn$ of the cut-formula the new $\ljump$-vertices.
			Finally remove $\ljump^\cutr$.
			
			\item $\redCase{\krule}{\ecrule}$:
			remove any $\uedge$ connecting $\oc A$ to any copy of $\wn \cneg A$;
			for each pre-mage of $\wn$ excepting one, make a copy of the connected component of the \RGB-cograph containing the $\axlink$-class of $\oc$ 
			for each $\wn$ in the pre-image of the $\wn$ of the cut-formula $\wn\cneg A$.
			add the $\uedge$-edges between each copy of $\oc A$ and the corresponding copy of $\wn \cneg A$;
			
			\item $\redCase{\krule}{\digrule}$:
			remove any $\uedge$ connecting $\oc A$ to any $\wn$ with image the $\wn$ of the cut-formula; 
			make a copy of the $\axlink$-class of $\oc$ for each $\wn$ in the pre-image of the $\wn$ of the cut-formula $\wn\cneg A$.
			Add the $\uedge$-edge and the $\dedge$-edges in such a way each new copy of $\wn$ and $\oc$ is a clone of the original one.
			Add a $\uedge$ between each new copy of $\oc$ and the corresponding $\wn$.
	\end{itemize}
	Note that proof equivalence and the permutations in \Cref{fig:commCut} are already captured by the syntax.

	Let $\Delta=C_1,\dots, C_n$.
	We define the function $\energy\Delta$ associating to an exponentially handsome proof nets $\cg\colon {\cgH}\to \relwebof{\Sigma}$ the string $\energy{\Delta}(\cg)=(\sizeof{\Sigma}_{C_1}, \dots, \sizeof{\Sigma}_{C_n})$ where $\sizeof{\Sigma}_{C}$ is the number of occurrences of the the formula $C$ in $\Sigma$.
	The reduction strategy consists in selecting and applying reduction to one \scon in $\cf$ at a time, selecting the \scons according with the following the order on $\Delta$: we do not select a \scon $C_{i+1}$ until we have not removed all the occurrences of $C_i$.
	At the end of each sub-routine removing an occurrence of a \scon $C_i$, we may have created new copies of the \scons, but only copies of the $C_j$ with $j>i$.
	The reduction order is the lexicographic order on the pair $(\sizeof{\vertices[\relwebof{C}]},\energy\Delta(\cg))$ where $C$ is the currently selected \scon in the elimination process.
	Since each rewriting rule in \Cref{fig:handsomeCutElim} either reduces $\sizeof{\relwebof{C}}$ or $\energy{\Delta}$, then the cut-elimination procedure ends.
\end{proof}
%%%%%%%%%%%%%%%%%%%%%%%%%%%%%%%%%%%%%%%%%
%%%%% fig EHPN normanlization
%%%%%%%%%%%%%%%%%%%%%%%%%%%%%%%%%%%%%%%%%
\begin{figure}[!t]
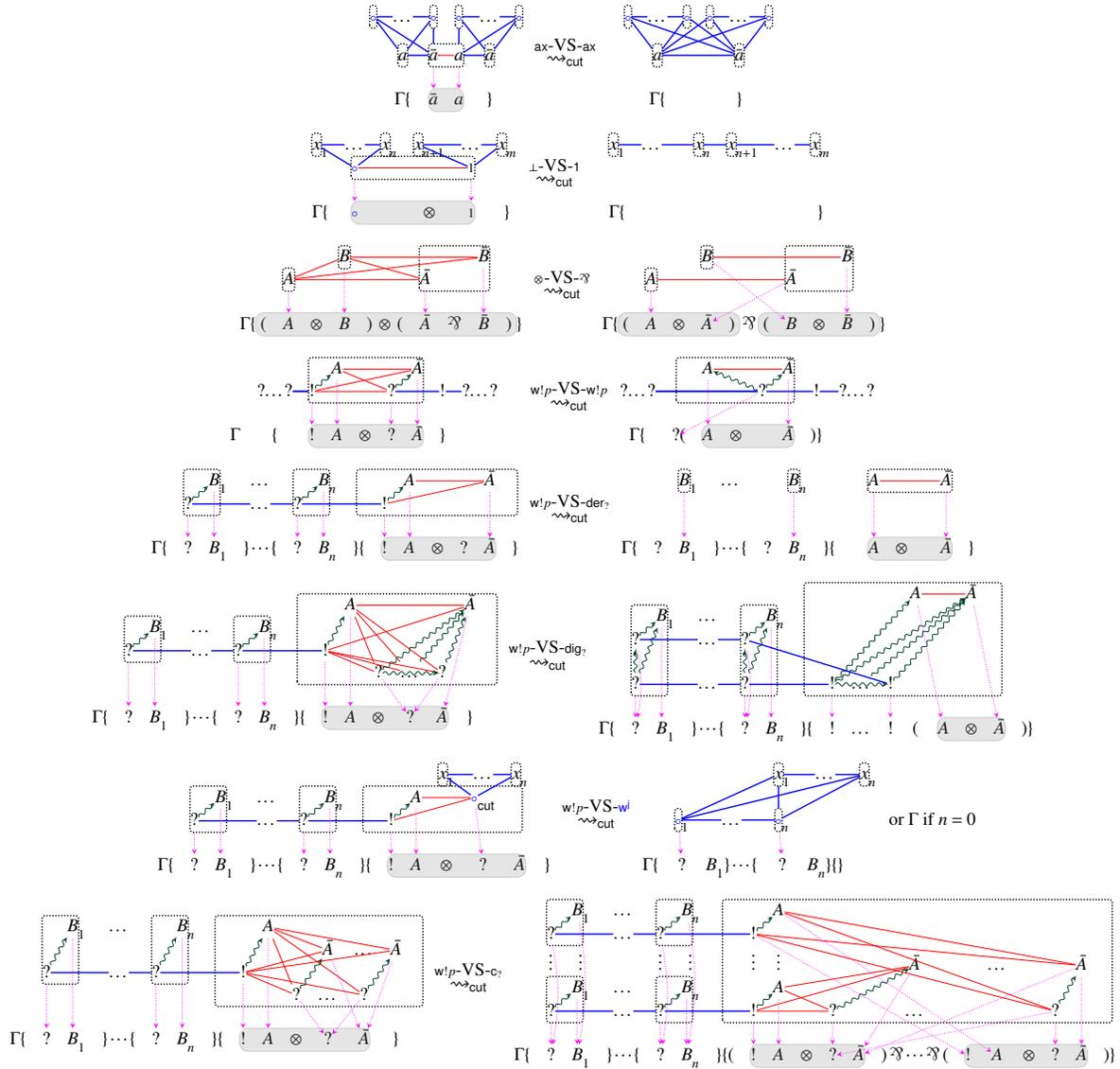

	\begin{adjustbox}{max width=\textwidth}
		$
		%		\begin{array}{c}
		\begin{array}{cc}
			\begin{array}{cccccccc}
				&\vuj 1 & \vdotsnode 1 &\vuj 2 & \vuj 3 & \vdotsnode 4 &\vuj 5 
				\\[2ex]
				&\vp1		&\va1			&\vna1 &\va2 &\vna2	&
				\\
				\\
				&&\Gamma \{& \vna3 &  \va4 &\}
			\end{array}
			\Bedges{uj1/dotsnode1,dotsnode1/uj2}
			\multiBedges{a1,na1}{na1,uj1,uj2}
			\Bedges{uj3/dotsnode4,dotsnode4/uj5}
			\multiBedges{na2,a2}{a2,uj3,uj5}
			\cutshade{na3}{a4}
			\Sedges{na1/na3,a2/a4}
			\contextnodes{a1,na2,uj1,uj2,uj3,uj5}
			\Redges{na1/a2}
			\redCase{\axrule}{\axrule}
			\begin{array}{cccccccc}
				&\vuj 1 & \vdotsnode 1 &\vuj 2 & \vuj 3 & \vdotsnode 4 &\vuj 5 
				\\[2ex]
				&\vp1		&\va1			& & &\vna2	&
				\\
				\\
				&&\Gamma \{&  &   &\}
			\end{array}
			\Bedges{uj1/dotsnode1,dotsnode1/uj2}
			\multiBedges{a1,na2}{uj1,uj2,uj3,uj5}
			\Bedges{uj3/dotsnode4,dotsnode4/uj5}
			\contextbox{na1}{a2}
			\Bedges{a1/na2}
			\contextnodes{a1,na2,uj1,uj2,uj3,uj5}
			\\
			\\
			\begin{array}{ccccccccc}
				&\vx1_1 & \vdotsnode 1&\vx2_n & \vx3_{n+1} & \vdotsnode 4 &\vx5_m 
				\\
				&\vp1		&	\vuj9		&&  &\vuo1	&
				\\
				\\
				&\Gamma \{& \vuj1 &&\ltens&  \vuo2 &\}
			\end{array}
			\Bedges{x1/dotsnode1,dotsnode1/x2}
			\multiBedges{uj9}{x1,x2}
			\Bedges{x3/dotsnode4,dotsnode4/x5}
			\multiBedges{uo1}{x3,x5}
			\contextbox{uj9}{uo1}
			\cutshade{uj1}{uo2}
			\Redges{uj9/uo1}
			\Sedges{uj9/uj1,uo1/uo2}
			\contextnodes{x1,x2,x3,x5}
			\redCase{\lbot}{\lone}
			\begin{array}{cccccccc}
				&\vx1_1 & \vdotsnode 1& &\vx2_n & \vx3_{n+1} & \vdotsnode 4 &\vx5_m 
				\\
				&		&			&& & &	&
				\\
				\\
				&\Gamma \{&  &&&&   &\}
			\end{array}
			\Bedges{x1/dotsnode1,dotsnode1/x2,x2/x3,x3/dotsnode4,dotsnode4/x5}
			\contextnodes{x1,x2,x3,x5}
			%			\end{array}
			\\
			\\
			\begin{array}{cccccccccccc}
				&&&\vB1			&&& &\vnB1
				\\
				&\vA1&&&& \vnA1
				\\
				\\
				\Gamma \{\vp1(&\vA2&\ltens& \vB2 &) \vp2\ltens\vp3( & \vnA2& \lpar& \vnB2&)\vp4\}
			\end{array}
			\contextbox{A1}{A1}
			\contextbox{B1}{B1}
			\contextbox{nA1}{nB1}
			\Sedges{A1/A2,nA1/nA2,B1/B2,nB1/nB2}
			\multiRedges{A1,B1}{A1,nA1,nB1}
			\cutshade{p1}{p4}
			\redCase{\ltens}{\lpar}
			\begin{array}{cccccccccccc}
				&&&\vB1			&&& &\vnB1
				\\
				&\vA1&&&& \vnA1
				\\
				\\
				\Gamma \{\vp1(&\vA2&\ltens& \vnA2 &)\vp2 \lpar \vp3( & \vB2& \ltens& \vnB2&)\vp4\}
			\end{array}
			\contextbox{A1}{A1}
			\contextbox{B1}{B1}
			\contextbox{nA1}{nB1}
			\Sedges{A1/A2,nA1/nA2,B1/B2,nB1/nB2}
			\Redges{A1/nA1,B1/nB1}
			\cutshade{p1}{p2}
			\cutshade{p3}{p4}
			\\
			\\
			\begin{array}{cccccccccccc}
				&&	& \vA2 &&&\vnA2&&&\vp1
				\\
				\vp0	&\vwn4 \vdotsnode1 \vwn3 &\voc2 & & &\vwn2 &&\voc3 & \vwn 4 \vdotsnode2 \vwn{5}
				\\
				\\
				\Gamma &\{ &\voc1 & \vA1 &\ltens &\vwn1 & \vnA1& \}
			\end{array}
			\cutshade{oc1}{nA1}
			\Sedges{A2/A1,nA2/nA1,oc2/oc1,wn2/wn1}
			\contextbox{oc2}{nA2}
			\Bedges{wn3/oc2}
			\Bedges{wn3/oc2,wn2/oc3,oc3/wn4}
			\modedges{oc2/A2,wn2/nA2}
			\multiRedges{A2,oc2}{nA2,wn2}
			\redCase{\krule}{\krule}
			\begin{array}{cccccccccccc}
				&	& \vA2 &&&\vnA2&&&\vp1
				\\
				\vwn4 \vdotsnode1 \vwn3 & \vp7& & &\vwn2 &&\voc3 & \vwn 4 \vdotsnode2 \vwn{5}
				\\
				\\
				\Gamma \{ &\vwn1 (& \vA1 &\ltens &\vp2 & \vnA1&) \}
			\end{array}
			\contextbox{p7}{nA2}
			\Bedges{wn3/wn2,wn2/oc3,oc3/wn4}
			\cutshade{A1}{nA1}
			\Sedges{A2/A1,nA2/nA1,wn2/wn1}
			\Bedges{wn3/wn2}
			\modedges{wn2/A2,wn2/nA2}
			\Redges{A2/nA2}
			\\
			\\
			\begin{array}{cccccccccccccc}
				&& \vB1_1 &	\dots 		&& \vB2_n & && \vA2 &&&\vnA2&\vp1
				\\
				\vp0	&\vwn3 &&\vdotsnode1 &\vwn2 & &\vp6&\voc 2 
				\\
				\\
				\Gamma \{ &\vwn6 &\vB3_1&\}\cdots\{&\vwn4 &\vB4_n&\} \{& \voc1& \vA1 &\ltens &\vwn9 & \vnA1& \}
			\end{array}
			\Bedges{wn3/dotsnode1,wn2/dotsnode1,oc2/wn2}
			\multiRedges{oc2,A2}{nA2}
			\Sedges{A2/A1,nA2/nA1,oc2/oc1}
			\Sedges{B1/B3,wn3/wn6,wn2/wn4,B2/B4}
			\cutshade{oc1}{nA1}
			\contextbox{wn3}{B1}
			\contextbox{wn2}{B2}
			\contextbox{p6}{p1}
			\modedges{wn3/B1,wn2/B2,oc2/A2}
			\redCase{\krule}{\derrule}
			\begin{array}{cccccccccccccc}
				&& \vB1_1 &	\dots 		&& \vB2_n & && \vA2 &&&\vnA2
				\\
				\vp0	%&\vwn3 &&\vdotsnode1 &\vwn2 & &&\voc 2 
				\\
				\\
				\Gamma \{ &\vwn6 &\vB3_1&\}\cdots\{&\vwn4 &\vB4_n&\} \{& & \vA1 &\ltens & & \vnA1& \}
			\end{array}
			%\Bedges{wn3/dotsnode1,wn2/dotsnode1,oc2/wn2}
			\multiRedges{A2}{nA2}
			\Sedges{A2/A1,nA2/nA1}
			\Sedges{B1/B3,B2/B4}
			\cutshade{A1}{nA1}
			\contextbox{A2}{nA2}
			\contextbox{B1}{B1}
			\contextbox{B2}{B2}
			\\
			\\
			\begin{array}{ccccccccccccccccc}
				&&&	&&& && \vA2 &&&&\vnA2&\vp1
				\\
				&& \vB1_1 &	\dots 		&& \vB2_n 
				\\
				&\vwn3 &&\vdotsnode1 &\vwn2 & &&\voc 2  
				\\
				& & && & &\vp6&  &&\vwn5 &\vdotsnode2 &\vwn7
				\\
				\\
				\Gamma \{ &\vwn6 &\vB3_1&\}\cdots\{&\vwn4 &\vB4_n&\} \{& \voc1& \vA1 &\ltens &\vwn9 & \vnA1& \}
			\end{array}
			\Bedges{wn3/dotsnode1,wn2/dotsnode1,oc2/wn2}
			\multiRedges{oc2,A2}{nA2,wn5,dotsnode2,wn7}
			\Sedges{A2/A1,nA2/nA1,oc2/oc1}
			\Sedges{B1/B3,wn3/wn6,wn2/wn4,B2/B4}
			\cutshade{oc1}{nA1}
			\contextbox{wn3}{B1}
			\contextbox{wn2}{B2}
			\contextbox{p6}{p1}
			\modedges{wn3/B1,wn2/B2,oc2/A2}
			\modedges{wn5/wn7,wn7/nA2,wn5/nA2,dotsnode2/nA2}
			\Sedges{wn5/wn9,wn7/wn9}
			\redCase{\krule}{\digrule}
			\begin{array}{ccccccccccccccccc}
				&&&	&&& && && \vA2&&\vnA2 &\vp1
				\\
				&& \vB1_1 &	\dots 		&& \vB2_n 
				\\
				&\vwn{31} &&\vdotsnode{11} &\vwn{21} & &&
				\\
				\\
				&\vwn3 &&\vdotsnode1 &\vwn2 & & \vp6 &\voc5 &\vdotsnode2 &\voc7
				\\
				\\
				\Gamma \{ &\vwn6 &\vB3_1&\}\cdots\{&\vwn4 &\vB4_n&\} \{& \voc9&\dots &\voc{10}& (&\vA1 &\ltens & \vnA1&) \}
			\end{array}
			\Bedges{wn3/dotsnode1,wn2/dotsnode1,oc5/wn2}
			\Sedges{A2/A1,nA2/nA1}
			\Sedges{B1/B3,wn3/wn6,wn2/wn4,B2/B4}
			\cutshade{A1}{nA1}
			\contextbox{wn3}{B1}
			\contextbox{wn2}{B2}
			\modedges{wn3/B1,wn2/B2,oc5/A2}
			\modedges{wn31/B1,wn21/B2,wn3/wn31,wn2/wn21}
			\modedges{oc5/oc7,oc7/nA2,oc5/nA2,dotsnode2/nA2}
			\Sedges{oc5/oc9,oc7/oc10}
			\multiRedges{A2}{nA2}
			\bentSedges{wn31/wn6/10,wn21/wn4/10}
			\Bedges{wn31/dotsnode11,dotsnode11/wn21,wn21/oc7}
			\contextbox{p6}{p1}
			\\
			\\			
			\begin{array}{cccccccccccccc}
				&&&&&&&&&\vx1_1 & \vdotsnode 6&\vx2_n
				\\
				&& \vB1_1 &	\dots 		&& \vB2_n & && \vA2 &&\vuj2_\cutr&\vp2
				\\
				\vp0	&\vwn3 &&\vdotsnode1 &\vwn2 & &\vp6&\voc 2 
				\\
				\\
				\Gamma \{ &\vwn6 &\vB3_1&\}\cdots\{&\vwn4 &\vB4_n&\} \{& \voc1& \vA1 &\ltens &\vwn9 & \vnA1& \}
			\end{array}
			\Bedges{wn3/dotsnode1,wn2/dotsnode1,oc2/wn2}
			\multiRedges{oc2,A2}{uj2}
			\Sedges{A2/A1,uj2/wn9,oc2/oc1}
			\Sedges{B1/B3,wn3/wn6,wn2/wn4,B2/B4}
			\cutshade{oc1}{nA1}
			\contextbox{wn3}{B1}
			\contextbox{wn2}{B2}
			\contextbox{p6}{p2}
			\modedges{wn3/B1,wn2/B2,oc2/A2}		
			\multiBedges{uj2}{x1,x2}\Bedges{x1/dotsnode6,dotsnode6/x2}
			\contextnodes{x1,x2}
			\redCase{\krule}{\jewrule}
			\begin{array}{cccccccccccccc}
				&&&\vx1_1 & \vdotsnode 6&\vx2_n
				\\
				\\
				&\vuj3_1 &	\vdotsnode1 	&\vuj2_n
				\\
				\\
				\Gamma \{ &\vwn6 &\vB3_1\}\cdots\{&\vwn4 &\vB4_n\} \{  \}
			\end{array}
			\Sedges{uj3/wn6,uj2/wn4}
			\contextnodes{uj2,uj3}
			\multiBedges{uj2,uj3}{x1,x2}\Bedges{x1/dotsnode6,dotsnode6/x2}
			\Bedges{uj3/dotsnode1,dotsnode1/uj2}
			\contextnodes{x1,x2}
			\mbox{ or }
			\Gamma
			\mbox{ if $n=0$}
			\\
			\\
			\begin{array}{ccccccccccccccc}
				&& \vB1_1 &	\dots 		&& \vB2_n & && \vA2 && &&&\vp7
				\\
				&&  &	 		&&  & && &&\vnA2 & \dots&\vnA3
				\\
				& \vwn3 & &\vdotsnode1&\vwn2  & &&\voc 2
				\\
				& & && & &\vp6& &&  \vwn8&\dots &\vwn7
				\\
				\\
				\Gamma \{ &\vwn6 &\vB3_1&\}\cdots\{&\vwn4 &\vB4_n&\} \{& \voc1& \vA1 &\ltens &\vwn9 & \vnA1& \}
			\end{array}
			\Bedges{wn3/dotsnode1,wn2/dotsnode1,oc2/wn2}
			\multiRedges{oc2,A2}{nA2,wn8,wn7,nA3}
			\Sedges{A2/A1,nA2/nA1,oc2/oc1,nA3/nA1,wn7/wn9,wn8/wn9}
			\Sedges{B1/B3,wn3/wn6,wn2/wn4,B2/B4}
			\cutshade{oc1}{nA1}
			\contextbox{wn3}{B1}
			\contextbox{wn2}{B2}
			\contextbox{p6}{p7}
			\modedges{wn3/B1,wn2/B2,oc2/A2}
			\modedges{wn8/nA2,wn7/nA3}
			\redCase{\krule}{\ecrule}
			\begin{array}{cccccccccccccccccccc}
				&& \vB{11}_1 &	\dots 		&& \vB{21}_n & && \vA{21} &&&&&&&&&&\vp7
				\\
				& \vwn{31} & &\vdotsnode{11}&\vwn{21}  & &&\voc {21} &
				\\
				& \vdots &\vdots &&\vdots  & \vdots&&\vdots &\vdots& &&&\vnA2&&\dots &&&\vnA3
				\\
				&& \vB1_1 &	\dots 		&& \vB2_n & && \vA2 && &&
				\\
				& \vwn3 & &\vdotsnode1&\vwn2  & &\vp6&\voc 2&&&  \vwn8&&\dots&&&&\vwn7
				\\
				\\
				\Gamma \{ &\vwn6 &\vB3_1&\}\cdots\{&\vwn4 &\vB4_n&\} \{(& \voc1& \vA1 &\ltens &\vwn9 & \vnA1&)\lpar \cdots\lpar(& \voc{11} & \vA{11} & \ltens & \vwn{17} & \vnA{13}&) \}
			\end{array}
			\Bedges{wn3/dotsnode1,wn2/dotsnode1,oc2/wn2}
			\multiRedges{oc2,A2}{nA2,wn8}
			\Sedges{A2/A1,nA2/nA1,oc2/oc1,nA3/nA1,wn7/wn9,wn8/wn9}
			\Sedges{B1/B3,wn3/wn6,wn2/wn4,B2/B4}
			\cutshade{oc1}{nA1}
			\contextbox{wn3}{B1}
			\contextbox{wn2}{B2}
			\contextbox{p6}{p7}
			\modedges{wn3/B1,wn2/B2,oc2/A2}
			\modedges{wn8/nA2,wn7/nA3}
			\bentSedges{wn31/wn6/10,B11/B3/10,B21/B4/10,wn21/wn4/10}
			\multiRedges{A21,oc21}{wn7,nA3}
			\Sedges{A21/A11,oc21/oc11,wn7/wn17,nA3/nA13}
			\modedges{wn31/B11,wn21/B21,oc21/A21}
			\Bedges{wn31/dotsnode11,wn21/dotsnode11,oc21/wn21}
			\cutshade{oc11}{nA13}
			\contextbox{wn31}{B11}
			\contextbox{wn21}{B21}
		\end{array}
		$
	\end{adjustbox}
	\captionsetup{singlelinecheck=off}
	\caption[]{Cut-elimination steps for exponentially handsome proof nets. % to prove \Cref{thm:EHPN:cutelim}. 
		Dotted lines delimit modules, that is, sets of vertices having the same $\uedge$- and $\dedge$-relation with any vertex outside.
		To simplify the reading we write with the same symbol a formula and its pre-image by $\cf$.
	}
	\label{fig:handsomeCutElim}
\end{figure}
%%%%%%%%%%%%%%%%%%%%%%%%%%%%%%%%%%%%%%%%%
%%%%% end fig EHPN normanlization
%%%%%%%%%%%%%%%%%%%%%%%%%%%%%%%%%%%%%%%%%

%%%%%%%%%%%%%%%%%%%%%%%%%%%%%%%%%%%%%%%%%
%%%%%%%%%%%%%%%%%%%%%%%%%%%%%%%%%%%%%%%%%
%%%%%%%%%%%%%%%%%%%%%%%%%%%%%%%%%%%%%%%%%
%%%%%%%%%%%%%%%%%%%%%%%%%%%%%%%%%%%%%%%%%
%%%%%%%%%%%%%%%%%%%%%%%%%%%%%%%%%%%%%%%%%
\section{Conclusions}

In this paper we extended Retor\'e's \RB-cograph syntax for multiplicative proof nets \cite{retore:03} in order to include units and exponentials, 
refining the combinatorial proofs for modal logic from \cite{acc:str:CPK}.

Aware of the limits in designing a syntax able to capture proof equivalence and with a polynomial correctness criterion \cite{heijltjes:houston:14}, we restrained the notion of proof equivalence 
by introducing the proof system $\MELLj$ for $\MELL$, enforcing a coarser proof equivalence.
We topologically characterized 
mixed graphs equipped with vertices partitions encoding linear proofs in $\MELLj$,
as well as the graph homomorphisms capturing the resource management part of proofs in $\MELLj$.
Using these results we introduced combinatorial proofs for $\MELL$ and then exponentially handsome proof nets for $\MELL$, defined as compositions of combinatorial proofs.
We also provided a normalization procedure by means of graph rewriting rules.

The \RGB-cographs can be interpreted as a 
coherent interaction graphs~\cite{thom:tit} for $\MLL\cup\set{\krule}$.
We foresee a further application of exponentially handsome proof nets to explore the geometry of interaction of $\MELL$~\cite{GoI} using the same approach from \cite{ehrh},
where handsome proof nets are employed to investigate coherent spaces of $\MLL$.

\paragraph{On Removing Jumps.}

Is it possible to modify exponentially handsome proof nets by removing jumps and thus recover a less coarse notion of proof equivalence including jump rewiring, at the price of loosing the polynomiality of the correctness criterion.

For this purpose, 
we should consider the following set of rules:
$$
\set{\axrule,\onerule,\ljump, \lpar, \ltens,\mixr, \krule,\derrule,\digrule,\jdigrule, \jbotrule, \jewrule,\ecrule}
\qquad\mbox{ with }\qquad
\vlsmash{\vliinf{}{\mixr}{\Gamma, \Delta}{\Gamma}{\Delta}}
\qquad\mbox{ and }\qquad
\vlsmash{\vlinf{}{\ljump}{\ljump}{}}
$$
A correctness criterion for \RGB-cographs encoding linear proofs of this proof system, 
that is, derivations containing only rules in $\set{\axrule,\onerule,\ljump, \lpar, \ltens, \mixr,\krule}$, 
is obtained by dropping the \ae-connectness from \ref{def:Xcorrect}.

Note that we cannot get rid of $\ljump$ in presence of the restricted weakening rule of linear logic: without placeholders we could not represent the proof below on the left, because the skew lifting condition (see \Cref{eq:skewlift}) would fail for any $\uedge$-edge connecting a vertex in $A$ with a vertex in $B$.
\vspace{2pt}
$$
\begin{array}{c@{\qquad\qquad}|@{\qquad\qquad}c}
\vlsmash{\vlderivation{
	\vliin{}{\ltens}{\Gamma, \Delta, A\ltens \wn B}{\vlpr{\pi}{}{\Gamma, A}}{\vlin{}{\ewrule}{\wn B,\Delta}{\vlpr{\pi'}{}{\Delta}}}
}}
&
\vlsmash{\vlderivation{
\vliin{}{\land}{\Gamma, \Delta, A\land B}{\vlpr{\pi}{}{\Gamma, A}}{\vlin{}{\Wrule}{\Delta, B}{\vlpr{\pi'}{}{\Delta}}}
}}
\qquad\rightsquigarrow\qquad
\vlderivation{
{\vliq{}{\Wrule}{\Gamma, A\land B, \Delta}{\vlpr{\pi'}{}{\Delta}}}
}
\end{array}
$$
\vspace{2pt}
The use of \emph{excising}, that is, the transformation above on the right cannot be used to overcome this problem as it is done in classical logic (see \cite{hughes:pws,acc:str:18}) since, in linear logic, this proof transformation cannot be performed.

\paragraph{On Generalized $\wn$-Nodes.}

The rule permutations allowing to push weakening and contraction below the promotion rule (see last line of \Cref{fig:proofEqNS}) are allowed in $\MELL$ proof nets, thanks to the so-called \emph{generalized $\wn$-nodes}~\cite{dan:reg:hilbert}.
The syntax with generalized $\wn$-nodes captures these equivalences, allowing a ``flexible'' representation of boxes, which are the syntactic elements representing promotion rules.
In fact, generalized $\wn$-nodes allow to  identify nets where boxes may have  different amounts of auxiliary ports.
This depends on the fact that boxes have a different status with respect to the interaction net syntax~\cite{lafont:95}.

In  exponentially handsome proof nets, each (weak) promotion is rigidly encoded: each $\oc$-vertex is the principal port of a box and each auxiliary port is encoded by a $\wn$-vertex in the same $\axlink$-class.
In particular, this allows us to  keep track of the depth of a propositional letter $a$,  a $\ljump$ or  a $\lone$ in terms of the number of incoming $\dedge$ in the \RGB-cograph. 
This clearly prevents any rule permutation which may change the number of auxiliary ports of a box.

\subsection*{Acknowledgements}
I wish to thank the anonymous reviewers, 
who gave very valuable feedback and pointed me to relevant literature.
I also would like to thank Christian Retoré, Lutz Stra\ss burger, Giti Omidvar for the useful feedbacks and technical discussions.

\bibliographystyle{eptcs}
\bibliography{main}

\end{document}